
\documentclass[11pt]{article}

\newlength{\actualtopmargin}
\newlength{\actualsidemargin}
\setlength{\actualtopmargin}{2.5cm}
\setlength{\actualsidemargin}{2.0cm}

\setlength{\topmargin}{-1.0in}
  \addtolength{\topmargin}{-\headsep}
  \addtolength{\topmargin}{-\headheight}
  \addtolength{\topmargin}{\actualtopmargin}
\addtolength{\oddsidemargin}{-\evensidemargin}
  \setlength{\oddsidemargin}{0.35\oddsidemargin}
  \addtolength{\oddsidemargin}{\actualsidemargin}
  \addtolength{\oddsidemargin}{-1.0in}
\setlength{\evensidemargin}{-\oddsidemargin}
  \addtolength{\evensidemargin}{2\actualsidemargin}
  \addtolength{\evensidemargin}{-2.0in}
\setlength{\textheight}{\paperheight}
  \addtolength{\textheight}{-2\actualtopmargin}
\setlength{\textwidth}{\paperwidth}
  \addtolength{\textwidth}{-2\actualsidemargin}

\usepackage[tbtags]{amsmath}
\usepackage{amsthm}
\usepackage{amssymb}
\usepackage{amsfonts}
\usepackage{hyperref}

\theoremstyle{plain}
  \newtheorem{theorem}{Theorem}
  \newtheorem{lemma}[theorem]{Lemma}
  \newtheorem{corollary}[theorem]{Corollary}
  \newtheorem{proposition}[theorem]{Proposition}
  
\theoremstyle{definition}
  \newtheorem{definition}[theorem]{Definition}

\theoremstyle{remark}
  \newtheorem*{remark}{Remark}
\theoremstyle{plain}
  \newtheorem*{theorem*}{Theorem}
  \newtheorem*{lemma*}{Lemma}
  \newtheorem*{corollary*}{Corollary}
  \newtheorem*{proposition*}{Proposition}
  \newtheorem*{claim*}{Claim}



\newenvironment{step}
  {
    \begin{enumerate}

  }
  {\end{enumerate}}

\newenvironment{algorithm*}[1]
  {
    \begin{center}
      \hrulefill\\
      \textbf{#1}\\
  }
  {
    \vspace{-\baselineskip}
    \hrulefill
    \end{center}
  }

\newenvironment{protocol*}[1]
  {
    \begin{center}
      \hrulefill\\
      \textbf{#1}\\
  }
  {
    \vspace{-\baselineskip}
    \hrulefill
    \end{center}
  }

\newlength{\itemwidth}
\newlength{\descriptionwidth}
\newenvironment{promiseproblem*}[4]
  {
    \begin{center}
      \hrulefill\\
      \textbf{\textsc{#1}}\\
      \settowidth{\itemwidth}{\textbf{Yes Instances:}}
      \setlength{\descriptionwidth}{\textwidth}
      \addtolength{\descriptionwidth}{-\itemwidth}
      \addtolength{\descriptionwidth}{-\labelsep}
      \begin{description}
        \item[\parbox{\itemwidth}{Input:}]
          \parbox[t]{\descriptionwidth}{#2}
        \item[\parbox{\itemwidth}{Yes Instances:}]
          \parbox[t]{\descriptionwidth}{#3}
        \item[\parbox{\itemwidth}{No Instances:}]
          \parbox[t]{\descriptionwidth}{#4\\}
      \end{description}
  }
  {
    \vspace{-1.1\baselineskip}
    \hrulefill
    \end{center}
  }


\newcommand{\bbC}{\mathbb{C}}

\newcommand{\bbN}{\mathbb{N}}

\newcommand{\bbZ}{\mathbb{Z}}

\newcommand{\bfU}{\mathbf{U}}

\newcommand{\classfont}{\mathrm}

\newcommand{\spacefont}{\mathcal}
\newcommand{\registerfont}{\mathsf}


\newcommand{\spaceB}{\spacefont{B}}

\newcommand{\spaceH}{\spacefont{H}}

\newcommand{\regB}{\registerfont{B}}
\newcommand{\regC}{\registerfont{C}}

\newcommand{\regM}{\registerfont{M}}

\newcommand{\regQ}{\registerfont{Q}}
\newcommand{\regR}{\registerfont{R}}

\newcommand{\regV}{\registerfont{V}}


\newcommand{\Complex}{\bbC}

\newcommand{\Natural}{\bbN}
\newcommand{\Integers}{\bbZ}
\newcommand{\Nonnegative}{{\Integers^+}}

\newcommand{\Binary}{{\{ 0, 1 \}}}

\newcommand{\Unitary}{\bfU}

\newcommand{\SU}{\mathrm{SU}}


\newcommand{\unitaryQMASPACE}{{\classfont{QMA}_\Unitary\classfont{SPACE}}}

\newcommand{\NP}{\classfont{NP}}

\newcommand{\BQP}{\classfont{BQP}}

\newcommand{\PSPACE}{\classfont{PSPACE}}

\newcommand{\PrQPSPACE}{\classfont{PrQPSPACE}}

\newcommand{\unitaryQMAPSPACE}{{\classfont{QMA}_\Unitary\classfont{PSPACE}}}
\newcommand{\unitaryQPSPACE}{{\classfont{Q}_\Unitary\classfont{PSPACE}}}
\newcommand{\RevPSPACE}{\classfont{RevPSPACE}}

\newcommand{\unitaryBQL}{{\classfont{BQ}_\Unitary\classfont{L}}}

\newcommand{\unitaryQMAL}{{\classfont{QMA}_\Unitary\classfont{L}}}

\newcommand{\unitaryQL}{{\classfont{Q}_\Unitary\classfont{L}}}
\newcommand{\EXP}{\classfont{EXP}}

\newcommand{\QMA}{\classfont{QMA}}
\newcommand{\MA}{\classfont{MA}}

\newcommand{\MG}{\classfont{MG}}


\newcommand{\op}[1]{\operatorname{\mathnormal{#1}}}

\newcommand{\tensor}{\otimes}

\newcommand{\ket}[1]{\lvert #1 \rangle}

\newcommand{\bigket}[1]{\bigl\lvert #1 \bigr\rangle}

\newcommand{\ketbra}[1]{\lvert #1 \rangle \langle #1 \rvert}

\newcommand{\conjugate}[1]{{#1^\dagger}}

\newcommand{\tr}{\operatorname{tr}}

\newcommand{\abs}[1]{\lvert #1 \rvert}

\newcommand{\ceil}[1]{\lceil #1 \rceil}

\newcommand{\bigceil}[1]{\bigl\lceil #1 \bigr\rceil}
\newcommand{\Bigceil}[1]{\Bigl\lceil #1 \Bigr\rceil}
\newcommand{\biggceil}[1]{\biggl\lceil #1 \biggr\rceil}

\newcommand{\function}[3]{{#1 \colon #2 \to #3}}


\newcommand{\NOT}{\mathrm{NOT}}

\newcommand{\INCR}[1]{{\op{U}_{+1}(\Integers_{#1})}}


\newcommand{\textlog}{\mathrm{log}}

\newcommand{\acc}{\mathrm{acc}}

\newcommand{\init}{\mathrm{init}}

\newcommand{\yes}{\mathrm{yes}}
\newcommand{\no}{\mathrm{no}}


\begin{document}

\sloppy


\title{
  \Large{
    \textbf{
      Space-Efficient Error Reduction for Unitary Quantum Computations
    }
  }
}

\author{
  Bill Fefferman\footnotemark[1]\\
  \and
  Hirotada Kobayashi\footnotemark[2]\\
  \and
  Cedric Yen-Yu Lin\footnotemark[1]\\
  \and
  Tomoyuki Morimae\footnotemark[3]\\
  \and
  Harumichi Nishimura\footnotemark[4]
}

\date{}

\maketitle
\thispagestyle{empty}
\pagestyle{plain}
\setcounter{page}{0}

\renewcommand{\thefootnote}{\fnsymbol{footnote}}

\vspace{-5mm}

\begin{center}
  \large{
    \footnotemark[1]%
    Joint Center for Quantum Information and Computer Science\\
    University of Maryland\\
    College Park, MD, USA\\
    [2.5mm]
    \footnotemark[2]%
    Principles of Informatics Research Division\\
    National Institute of Informatics\\
    Tokyo, Japan\\
    [2.5mm]
    \footnotemark[3]%
    Advanced Scientific Research Leaders Development Unit\\
    Gunma University\\
    Kiryu, Gunma, Japan\\
    [2.5mm]
    \footnotemark[4]%
    Department of Computer Science and Mathematical Informatics\\
    Graduate School of Information Science\\
    Nagoya University\\
    Nagoya, Aichi, Japan
  }\\
  [5mm]
  \large{27 April 2016}\\
  [8mm]
\end{center}

\renewcommand{\thefootnote}{\arabic{footnote}}


\begin{abstract}
  This paper develops general space-efficient methods for error reduction for unitary quantum computation.
  Consider a polynomial-time quantum computation with completeness~$c$ and soundness~$s$,
  either with or without a witness
  (corresponding to $\QMA$ and $\BQP$, respectively).
  To convert this computation
  into a new computation with error at most $2^{-p}$,
  the most space-efficient method known requires extra workspace of 
  ${\op{O} \bigl( p \log \frac{1}{c-s} \bigr)}$~qubits.
  This space requirement is too large for scenarios like logarithmic-space quantum computations.
  This paper presents error-reduction methods
  for unitary quantum computations
  (i.e., computations without intermediate measurements)
  that require extra workspace of just ${\op{O} \bigl( \log \frac{p}{c-s} \bigr)}$~qubits.
  This in particular gives the first methods of strong amplification
  for logarithmic-space unitary quantum computations with two-sided bounded error.
  This also leads to a number of consequences in complexity theory,
  such as
  the uselessness of quantum witnesses in bounded-error logarithmic-space unitary quantum computations,
  the $\PSPACE$ upper bound for QMA with exponentially-small completeness-soundness gap,
  and strong amplification for matchgate computations.
\end{abstract}

\clearpage


\section{Introduction}
\label{Section: introduction}


\subsection{Background}
\label{Subsection: background}

A very basic topic in various models of quantum computation is
whether computation error can be efficiently reduced within a given model.
For polynomial-time bounded error quantum computation,
the most standard model of quantum computation,
the computation error can be made exponentially small 
via a simple repetition followed by a threshold-value decision.
This justifies the choice of $2/3$~and~$1/3$ for the completeness and soundness parameters
in the definition of the corresponding complexity class~$\BQP$.
This is also the case for quantum Merlin-Arthur (QMA) proof systems,
another central model of quantum computation
that models a quantum analogue of $\NP$ (more precisely, $\MA$),
and the resulting class~$\QMA$ may again be defined
with completeness and soundness parameters~$2/3$~and~$1/3$.

An undesirable feature of the simple repetition-based error reduction above
is that the necessary workspace enlarges linearly with respect to the number of repetitions.
More explicitly, for a given $p$,
the number of repetitions necessary to achieve an error of $2^{-p}$
is ${\op{O} \bigl( \frac{p}{(c-s)^2} \bigr)}$,
and thus both the workspace size and the witness size become
${\op{O} \bigl( \frac{p}{(c-s)^2} \bigr)}$~times larger.
This implies that
the simple repetition-based method is no longer useful
when either the workspace size or the witness size is required to be logarithmically bounded.

Marriott~and~Watrous~\cite{MarWat05CC} developed
a more sophisticated method of error reduction for QMA proof systems
that does not increase the witness size at all.
For a given $p$,
their method still requires ${\op{O} \bigl( \frac{p}{(c-s)^2} \bigr)}$~calls
of the original computation and its inverse
to achieve the computation error~$2^{-p}$,
but the method reuses both the workspace and the witness
every time it calls the original computation and its inverse.
Hence, the witness size never increases in their method.
This is a strong property that allows them to show
the uselessness of logarithmic-size quantum witnesses in QMA proof systems
(i.e., ${\QMA_\textlog = \BQP}$,
where $\QMA_\textlog$ is the class of problems having QMA proof systems
with logarithmic-size quantum witnesses).
Their method is also more efficient in workspace size than the simple repetition-based method,
but still requires extra workspace of size~${\op{O} \bigl( \frac{p}{(c-s)^2} \bigr)}$,
as it must record outcomes of all the calls of the original computation and its inverse.

Nagaj,~Wocjan,~and~Zhang~\cite{NagWocZha09QIC}
succeeded in reducing to ${\op{O} \bigl( \frac{p}{c-s} \bigr)}$
the number of calls of the original computation and its inverse
necessary to achieve the computation error~$2^{-p}$ for a given $p$,
while keeping the witness size unchanged.
Their method makes use of the phase-estimation algorithm,
an essential component of many quantum algorithms
including the celebrated factoring algorithm.
To achieve error~$2^{-p}$ for a given $p$,
their method must repeat ${\op{O}(p)}$~times the phase-estimation algorithm
with precision of at least ${\op{O} \bigl( \log \frac{1}{c-s} \bigr)}$~bits
and record all these estimated phases.
Hence, this phase-estimation-based method uses extra workspace of size~${\op{O} \bigl( p \log \frac{1}{c-s} \bigr)}$.

As can be seen from above,
both of the Marriott-Watrous method and the phase-estimation-based method
are still insufficient for the case
where the workspace size must be logarithmically bounded.
No efficient error-reduction method is known
that keeps the size of additionally necessary workspace logarithmically bounded.
This is not limited to the case of QMA proof systems,
and in fact almost no efficient error-reduction method is known
even in the case of logarithmic-space quantum computations,
and in the case of space-bounded quantum computations in general.
The study of general space-bounded quantum computations
was initiated by Watrous~\cite{Wat99JCSS} based on quantum Turing machines.
Several models of space-bounded quantum computations have been proposed and investigated since then
in the literature~\cite{Wat01JCSS, Wat03CC, Wat09ECSS, JozKraMiyWat10RSPA, MelWat12ToC, TaS13STOC},
some considering only logarithmic-space quantum computations
and others treating general cases.
It is not known whether any of these models are computationally equivalent.
It is also not known whether error reduction is possible for logarithmic-space quantum computation
defined according to any of these models,
except the only known affirmative answer shown by Watrous~\cite{Wat01JCSS}
on computation of one-sided bounded error performed by logarithmic-space quantum Turing machines.
As negative evidence in the case where computational resources are too limited,
computation error cannot be reduced below a certain constant
for one-way quantum finite state automata~\cite{AmbFre98FOCS}.


\subsection{Main Result and Its Consequences}
\label{Subsection: main result and its consequences}

This paper presents a general method of strong and space-efficient error reduction
for \emph{unitary} quantum computations.
In particular, the method is applicable to logarithmic-space unitary quantum computations
and logarithmic-space unitary QMA proof systems.
All the results in this paper are model-independent
and hold with any model of space-bounded quantum computations
as long as it performs \emph{unitary} quantum computations.
The unitary model is not the most general in that it does not allow any intermediate measurements
(notice that the standard technique of simulating intermediate measurements by unitary gates
requires unallowably many ancilla qubits in the case of space-bounded computations),
but is arguably one of the most reasonable models of space-bounded quantum computation.

Let $\Natural$ and $\Nonnegative$ denote
the sets of positive and nonnegative integers, respectively.
Let ${\unitaryQMASPACE[l_\regV, l_\regM](c, s)}$ denote the class of problems
having QMA proof systems with completeness~$c$ and soundness~$s$,
where the verifier performs a \emph{unitary} quantum computation
that has no time bound but is restricted to use ${\op{l_\regV}(n)}$~private qubits
and to receive a quantum witness of ${\op{l_\regM}(n)}$~qubits
on every input of length~$n$.
The main result of this paper is
the following strong and space-efficient error-reduction for such QMA-type computations.

\begin{theorem}
  For any functions~$\function{p, l_\regV, l_\regM}{\Nonnegative}{\Natural}$
  and for any functions~$\function{c,s}{\Nonnegative}{[0,1]}$
  satisfying ${c > s}$,
  there exists a function~$\function{\delta}{\Nonnegative}{\Natural}$
  that is logarithmic with respect to ${\frac{p}{c - s}}$
  such that
  \[
    \unitaryQMASPACE[l_\regV, l_\regM](c, s)
    \subseteq
    \unitaryQMASPACE[l_\regV + \delta, l_\regM](1 - 2^{-p}, 2^{-p}).
  \]
  \label{Theorem: main theorem}
\end{theorem}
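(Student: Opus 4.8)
The plan is to reduce the verifier's action to a two-reflection (Jordan) picture and then amplify the resulting angle, spending only a phase-estimation register and a counter. First I would fix, for a given input, the verifier's unitary $V$ acting on the witness register $\regM$ together with a private register $\regV$ initialized to $\ket{0}$, and set $\Pi_{\init} = I_\regM \tensor \ketbra{0}_\regV$ and $\Pi_{\acc} = \conjugate{V} \Pi_{\mathrm{out}} V$, where $\Pi_{\mathrm{out}}$ projects the designated output qubit onto $\ket{1}$. By Jordan's lemma the space splits into one- and two-dimensional subspaces jointly invariant under $\Pi_{\init}$ and $\Pi_{\acc}$; in each two-dimensional block these are rank-one projections onto unit vectors making an angle $\theta_j$, the acceptance probability of the block's initial vector is $\cos^2\theta_j$, and the product of reflections $W = (2\Pi_{\acc} - I)(2\Pi_{\init} - I)$ rotates the block by $2\theta_j$, so that $W$ has eigenphases $\pm 2\theta_j$. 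Crucially, $W$, $\Pi_{\init}$, and $\Pi_{\acc}$ are all implementable with one call each to $V$ and $\conjugate{V}$ while reusing the witness, so the construction will preserve $\regM$ exactly, as required.

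Next I would build the amplifier around a \emph{single, reused} phase-estimation register $\regP$. The promise gives $\cos^2\theta_j \ge c$ in the yes case and $\cos^2\theta_j \le s$ in the no case, and since $\lvert \frac{d}{d\theta}\cos^2\theta\rvert \le 1$ this separates the eigenphase $2\theta_j$ from a fixed threshold by $\Omega(c-s)$. Phase estimation of $W$ to $a = \op{O}\bigl(\log\tfrac{1}{c-s}\bigr)$ bits therefore places, with constant probability, the estimate on the correct side of the threshold; comparing the estimate to the threshold yields a single indicator bit. The idea is to repeat this $N = \op{O}(p)$ times, each round (i) running phase estimation into $\regP$, (ii) incrementing an $\op{O}(\log p)$-qubit counter $\regK$ conditioned on the indicator, and (iii) uncomputing the indicator and $\regP$ so the latter is free for the next round; the amplified verifier accepts iff $\regK \ge N/2$. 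Counting qubits, this uses $a + \op{O}(\log N) + \op{O}(1) = \op{O}\bigl(\log\tfrac{p}{c-s}\bigr)$ private ancillas beyond $\regV$, giving the claimed $\delta$, with no demand on running time (only on the number of calls to $V$, which is unbounded here).

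The hard part will be step (iii): the unitary model forbids intermediate measurement, so the indicator cannot be read off, and because phase estimation is never exact the register does not return cleanly to $\ket{0}$. Conditioned on an eigenvector of $W$, phase estimation produces a \emph{spread} $\sum_k \gamma_k \ket{k}$ about the true eigenphase; after the conditional increment $\regP$ is entangled with $\regK$, so running phase estimation backward no longer resets it, and the per-round residual has norm comparable to the (only constant-small) tail of the estimate. A naive round-by-round bound would let this grow linearly in $N$ and destroy the argument. The resolution I would pursue is to carry out the analysis entirely inside a fixed two-dimensional Jordan block, where the state lives in a constant-dimensional space and $W$ acts as an exact rotation, and to track the joint state on block $\tensor\; \regP \tensor \regK$ exactly, showing that $\regK$ performs a biased random walk whose drift is governed by $\cos^2\theta_j$ while the uncomputation garbage stays decoupled from the walk's progress up to an error I can charge against the completeness--soundness budget. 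Equivalently, one may replace the hard extract-and-reset step by a \emph{coherent} thresholding of $W$ (a bounded product of reflections realizing a sharp polynomial in the eigenphase), which extracts no classical bit and so sidesteps the uncomputation obstruction at the cost of the same logarithmic workspace.

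Finally, granting that the per-round error is controlled, completeness and soundness follow block-by-block: in a yes block the indicator is correct with constant probability exceeding $1/2$, so a Chernoff bound forces $\regK > N/2$ except with probability $2^{-\op{O}(N)} \le 2^{-p}$, and symmetrically in a no block. Since an arbitrary witness is a mixture over blocks and the optimal yes-witness concentrates on blocks with $\cos^2\theta_j \ge c$, the amplified completeness and soundness become $1 - 2^{-p}$ and $2^{-p}$. As the construction never touches $\regM$, this yields the stated inclusion with $\delta = \op{O}\bigl(\log\tfrac{p}{c-s}\bigr)$.
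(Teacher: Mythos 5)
Your reduction to the Jordan-block picture and the use of phase estimation on the Grover-type operator $W$ with $\op{O}\bigl(\log\frac{1}{c-s}\bigr)$ bits of precision are exactly the ingredients the paper starts from, but the amplification step is where your proof breaks. Your plan is constant-error phase estimation repeated $N = \op{O}(p)$ times with a \emph{single reused} register $\regP$ and a majority vote on a counter $\regK$; the Chernoff bound in your last paragraph needs the $N$ indicator bits to behave like (nearly) independent coins, each correct with probability bounded away from $\frac{1}{2}$ by a constant. That independence is destroyed by precisely the obstruction you flag: once the indicator is copied into $\regK$, the inverse phase estimation no longer returns $\regP$ to $\ket{0}$, and the residual left in $\regP$ --- now entangled with $\regK$ --- has \emph{constant} norm, because your per-round failure probability is only constant. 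From round $2$ onward, phase estimation is run on this garbage rather than on $\ket{0}$, so the concentration guarantee (and hence the claimed drift of the ``biased random walk'') no longer holds; the corruption is constant per round and cannot be charged to a $2^{-p}$ budget. Restricting attention to a two-dimensional Jordan block does not help, since the block structure only controls the $\regQ$ part of the state while the failure lives entirely in the joint state of $\regP$ and $\regK$. Your fallback --- a ``coherent thresholding of $W$ by a bounded product of reflections realizing a sharp polynomial in the eigenphase'' --- is a one-sentence gesture at a different and nontrivial machinery, not an argument. Tellingly, the paper's own \textsc{Marriott-Watrous Amplification Procedure}, which does perform majority-style counting, pays $2N+1$ qubits for it: linear, not logarithmic.

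The paper circumvents this in a structurally different way: phase estimation is performed \emph{once}, with failure probability polynomially small in $p$ (costing only the $\op{O}(\log p)$ extra qubits inside the phase-estimation register), and all subsequent amplification uses the \textsc{AND-Type} and \textsc{OR-Type Repetition Procedures} rather than majority. Their accept conditions (``counter still $0$,'' respectively ``counter nonzero'') admit an \emph{exact} algebraic analysis on eigenstates of ${M = \Delta \conjugate{U} \Pi U \Delta}$: because ${(\ketbra{0})^{\tensor l} \bigl( \INCR{2^l} \bigr)^k (\ketbra{0})^{\tensor l} = 0}$ for ${1 \leq k \leq 2N}$, the composed Hermitian operator is exactly ${M^{2N} \tensor (\ketbra{0})^{\tensor l}}$, respectively ${\bigl[ \Delta - (\Delta - M)^{2N} \bigr] \tensor (\ketbra{0})^{\tensor l}}$, so an eigenvalue~$\lambda$ becomes exactly $\lambda^{2N}$, respectively ${1 - (1-\lambda)^{2N}}$, with no error accumulation and nothing to uncompute. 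The price is that AND and OR, unlike majority, degrade completeness (respectively soundness), which is why the paper needs mild polynomial amplification up front and a three-stage composition (mild phase estimation, then AND, then OR). If you want to rescue your proposal, the shortest route is to adopt that structure; as written, the majority-vote step is a genuine gap.
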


This paper presents three different proofs of this main theorem,
all of which are based on reductions
that are in space logarithmic and also in time polynomial with respect to $\frac{p}{c-s}$.
As will be found in Section~\ref{Section: space-efficient amplification methods},
the theorem can be proved by remarkably simple arguments.
Nevertheless, the theorem is very powerful in that it fruitfully leads to many consequences
that substantially deepen the understanding
on the power of QMA proof systems and quantum computations in general,
both in the space-bounded scenario and in the usual polynomial-time scenario.
In what follows,
a function~$\function{f}{\Nonnegative}{\Natural}$ is \emph{polynomially bounded}
if $f$ is polynomial-time computable and ${\op{f}(n)}$ is in ${\op{O}(n^d)}$ for some constant~${d > 0}$,
and is \emph{logarithmically bounded}
if $f$ is logarithmic-space computable and ${\op{f}(n)}$ is in ${\op{O}(\log n)}$.


\paragraph{Strong amplification for unitary BQL}

The first consequence of Theorem~\ref{Theorem: main theorem}
is a remarkably strong error-reducibility
in logarithmic-space unitary quantum computations.
Let ${\unitaryQL(c,s)}$ denote the class of problems
solvable by logarithmic-space unitary quantum computations with completeness~$c$ and soundness~$s$.
The following amplifiability is immediate from Theorem~\ref{Theorem: main theorem}
by taking a function~$p$ to be logarithmic-space computable and polynomially bounded,
functions~$c$~and~$s$ to be logarithmic-space computable and to satisfy ${c - s \geq 1/q}$
for some polynomially bounded function~$\function{q}{\Nonnegative}{\Natural}$,
a function~$l_\regV$ to be logarithmically bounded,
and a function~${l_\regM = 0}$.

\begin{corollary}
  For any polynomially bounded function~$\function{p}{\Nonnegative}{\Natural}$
  that is logarithmic-space computable
  and for any logarithmic-space computable functions~$\function{c,s}{\Nonnegative}{[0,1]}$
  satisfying ${c - s \geq 1/q}$
  for some polynomially bounded function~$\function{q}{\Nonnegative}{\Natural}$,
  \[
    \unitaryQL(c, s) \subseteq \unitaryQL(1 - 2^{-p}, 2^{-p}).
  \]
  \label{Corollary: amplification for unitary BQL}
\end{corollary}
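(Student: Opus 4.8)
The plan is to obtain the statement as a direct specialization of Theorem~\ref{Theorem: main theorem}, so that the only real content is the parameter bookkeeping confirming that every quantity stays logarithmically bounded. First I would observe that any problem in~${\unitaryQL(c,s)}$ is decided by a logarithmic-space unitary verifier that simply receives no witness; that is, it belongs to~${\unitaryQMASPACE[l_\regV, 0](c,s)}$ for some logarithmically bounded space function~$l_\regV$ together with the constant witness-length function~${l_\regM = 0}$. With this identification, Theorem~\ref{Theorem: main theorem} applied to the given $p$, $c$, $s$, this $l_\regV$, and~${l_\regM = 0}$ supplies a function~$\delta$, logarithmic with respect to~$\frac{p}{c-s}$, placing the problem in~${\unitaryQMASPACE[l_\regV + \delta, 0](1 - 2^{-p}, 2^{-p})}$.

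The remaining step is to check that this right-hand class is again of the form~${\unitaryQL(1 - 2^{-p}, 2^{-p})}$, i.e., that its verifier-space function~${l_\regV + \delta}$ is logarithmically bounded. Since $p$ is polynomially bounded and the gap satisfies ${c - s \geq 1/q}$ for a polynomially bounded~$q$, the product ${\frac{p}{c-s} \leq p \cdot q}$ is itself polynomially bounded, so that ${\op{\delta}(n)}$, being ${\op{O} \bigl( \log \frac{p}{c-s} \bigr)}$, lies in ${\op{O}(\log n)}$. Adding the logarithmically bounded~$l_\regV$ keeps ${l_\regV + \delta}$ in ${\op{O}(\log n)}$, as required.

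Finally I would confirm uniformity: the amplified verifier must not merely run in logarithmic space but be describable within the $\unitaryQL$ model, so~${l_\regV + \delta}$ should be logarithmic-space computable. This holds because $p$, $c$, and $s$ are assumed logarithmic-space computable and the reduction underlying Theorem~\ref{Theorem: main theorem} is, as stated, logarithmic-space (and polynomial-time) in~$\frac{p}{c-s}$; composing these log-space computable ingredients yields a log-space computable~$\delta$, and hence a log-space computable~${l_\regV + \delta}$. The main — and essentially only — obstacle is thus not any new argument but this parameter accounting: in particular, verifying that the theorem's ``logarithmic in~$\frac{p}{c-s}$'' genuinely collapses to~${\op{O}(\log n)}$ under the polynomial bounds on~$p$ and~$\frac{1}{c-s}$, and that the uniformity hypotheses on $p$, $c$, and $s$ are exactly what is needed to keep the amplified verifier inside the logarithmic-space model.
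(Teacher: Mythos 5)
Your proposal is correct and follows exactly the paper's own route: the paper derives this corollary as an immediate specialization of Theorem~\ref{Theorem: main theorem} with ${l_\regM = 0}$, $l_\regV$ logarithmically bounded, and $p$, $c$, $s$ logarithmic-space computable, relying on the same bookkeeping you spell out — namely that ${\frac{p}{c-s} \leq p \cdot q}$ is polynomially bounded, so $\delta$ is ${\op{O}(\log n)}$ and log-space computable, keeping the amplified verifier inside $\unitaryQL$. Your write-up merely makes explicit the parameter accounting the paper leaves implicit.
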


This in particular justifies
defining the bounded-error class~$\unitaryBQL$ of logarithmic-space unitary quantum computations
by ${\unitaryBQL = \unitaryQL(2/3, 1/3)}$,
employing $2/3$ and $1/3$ for completeness and soundness parameters.
Before this work, Watrous~\cite{Wat01JCSS} showed a similar strong error-reducibility
in the case of one-sided bounded error,
and Corollary~\ref{Corollary: amplification for unitary BQL} extends this
to the two-sided bounded error case.


\paragraph{Uselessness of quantum witnesses in logarithmic-space unitary QMA}

Let ${\unitaryQMAL(c, s)}$ denote the class of problems
having logarithmic-space unitary QMA proof systems
(i.e., such systems in which a verifier performs a logarithmic-space unitary computation
upon receiving a logarithmic-size quantum witness)
with completeness~$c$ and soundness~$s$.
Similarly to Corollary~\ref{Corollary: amplification for unitary BQL},
the following amplifiability is immediate from Theorem~\ref{Theorem: main theorem}
by taking a function~$p$ to be logarithmic-space computable and polynomially bounded,
functions~$c$~and~$s$ to be logarithmic-space computable and to satisfy ${c - s \geq 1/q}$
for some polynomially bounded function~$\function{q}{\Nonnegative}{\Natural}$,
and functions~$l_\regV$~and~$l_\regM$ to be logarithmically bounded.

\begin{corollary}
  For any polynomially bounded function~$\function{p}{\Nonnegative}{\Natural}$
  that is logarithmic-space computable
  and for any logarithmic-space computable functions~$\function{c,s}{\Nonnegative}{[0,1]}$
  satisfying ${c - s \geq 1/q}$
  for some polynomially bounded function~$\function{q}{\Nonnegative}{\Natural}$,
  \[
    \unitaryQMAL(c, s) \subseteq \unitaryQMAL(1 - 2^{-p}, 2^{-p}).
  \]
  \label{Corollary: amplification for unitary QMAL}
\end{corollary}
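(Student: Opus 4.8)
The plan is to obtain this corollary as a direct specialization of Theorem~\ref{Theorem: main theorem}, exactly as the preceding paragraph suggests and in close parallel with Corollary~\ref{Corollary: amplification for unitary BQL}. Starting from a problem in $\unitaryQMAL(c,s)$, I would fix a logarithmic-space unitary QMA proof system deciding it, whose verifier uses ${\op{l_\regV}(n)}$ private qubits and reads an ${\op{l_\regM}(n)}$-qubit witness, with both $l_\regV$ and $l_\regM$ logarithmically bounded. Applying Theorem~\ref{Theorem: main theorem} to these same $l_\regV, l_\regM$ and to the given $p, c, s$ --- legitimate since $c - s \geq 1/q > 0$ forces $c > s$ --- produces a function $\delta$ that is logarithmic with respect to $\frac{p}{c-s}$ together with the inclusion into ${\unitaryQMASPACE[l_\regV + \delta, l_\regM](1 - 2^{-p}, 2^{-p})}$.

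The key step is then to verify that the enlarged verifier space $l_\regV + \delta$ stays logarithmically bounded. For this I would observe that $c - s \geq 1/q$ gives $\frac{1}{c-s} \leq q$, so $\frac{p}{c-s} \leq p \cdot q$; as a product of two polynomially bounded functions this is itself polynomially bounded. Hence any function logarithmic with respect to $\frac{p}{c-s}$ is ${\op{O}(\log \poly(n))} = {\op{O}(\log n)}$, so $\delta$ is logarithmically bounded, and therefore so is $l_\regV + \delta$. Since the witness length $l_\regM$ is left untouched by the theorem and is already logarithmically bounded, the amplified system satisfies both the space and the witness constraints defining $\unitaryQMAL$, yielding membership in $\unitaryQMAL(1 - 2^{-p}, 2^{-p})$.

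I expect the only real care to lie not in the qubit counting above but in the \emph{uniformity} bookkeeping: confirming that the amplified verifier is a bona fide logarithmic-space unitary computation rather than merely one using logarithmically many qubits. Here I would lean on the stated property that the reductions underlying Theorem~\ref{Theorem: main theorem} run in space logarithmic and time polynomial with respect to $\frac{p}{c-s}$; since $\frac{p}{c-s}$ is polynomially bounded in $n$, these collapse to genuine logarithmic-space, polynomial-time reductions in $n$, and the overhead $\delta$ is describable within the same budget. The conceptual content is thus entirely inherited from the main theorem; the essential new observation is simply that the gap bound $c - s \geq 1/q$ together with polynomial boundedness of $p$ shrinks the theorem's ${\op{O}\bigl(\log \frac{p}{c-s}\bigr)}$ space overhead down to ${\op{O}(\log n)}$.
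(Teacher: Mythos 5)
Your proposal is correct and follows essentially the same route as the paper, which obtains this corollary as an immediate specialization of Theorem~\ref{Theorem: main theorem} with $l_\regV$, $l_\regM$ logarithmically bounded and the observation that ${c - s \geq 1/q}$ with $p$, $q$ polynomially bounded makes $\frac{p}{c-s}$ polynomially bounded, so the overhead $\delta$ is ${\op{O}(\log n)}$. Your additional remarks on uniformity match the paper's stated property that the underlying reductions run in space logarithmic (and time polynomial) with respect to $\frac{p}{c-s}$.
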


Again this justifies
defining the bounded-error class~$\unitaryQMAL$ of logarithmic-space unitary QMA proof systems
by ${\unitaryQMAL = \unitaryQMAL(2/3, 1/3)}$.
By a standard technique of replacing a quantum witness
by a totally mixed state as a self-prepared witness
(to do this in a unitary computation, one can simply prepare sufficiently many
EPR pairs and then take a qubit from each pair),
Corollary~\ref{Corollary: amplification for unitary QMAL}
together with Corollary~\ref{Corollary: amplification for unitary BQL}
further implies the equivalence of $\unitaryQMAL$ and $\unitaryBQL$.

\begin{corollary}
  ${\unitaryQMAL = \unitaryBQL}$.
  \label{Corollary: unitary QMAL = unitary BQL}
\end{corollary}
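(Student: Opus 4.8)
The plan is to establish the two inclusions ${\unitaryBQL \subseteq \unitaryQMAL}$ and ${\unitaryQMAL \subseteq \unitaryBQL}$ separately. The forward inclusion is immediate: a logarithmic-space unitary computation witnessing membership in ${\unitaryBQL = \unitaryQL(2/3, 1/3)}$ can be read as a logarithmic-space unitary QMA proof system that simply ignores its witness register, so its acceptance probability is at least $2/3$ on yes-instances and at most $1/3$ on no-instances regardless of the witness supplied. All of the content lies in the reverse inclusion, which I would prove by the \emph{self-prepared witness} strategy indicated in the text.

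First I would take an arbitrary ${A \in \unitaryQMAL = \unitaryQMAL(2/3, 1/3)}$, decided by a verifier that uses ${l_\regV = \op{O}(\log n)}$ private qubits and receives an ${l_\regM = \op{O}(\log n)}$-qubit witness. Applying Corollary~\ref{Corollary: amplification for unitary QMAL} with a logarithmic-space-computable, polynomially bounded $p$ (it will suffice to take $p$ a little larger than $l_\regM$) yields a logarithmic-space unitary QMA proof system for $A$ with completeness ${1 - 2^{-p}}$ and soundness ${2^{-p}}$, still reading an ${l_\regM}$-qubit witness.

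Next I would discard the witness and instead feed the verifier the maximally mixed state on its ${l_\regM}$-qubit witness register, prepared unitarily by creating ${l_\regM}$ EPR pairs in fresh workspace and passing one half of each pair to the verifier. This costs only ${l_\regM = \op{O}(\log n)}$ extra workspace qubits, so the combined computation stays logarithmic-space and unitary. Letting $Q$ denote the positive semidefinite operator of norm at most one on the witness space for which ${\bra{\psi} Q \ket{\psi}}$ is the acceptance probability on witness ${\ket{\psi}}$, the acceptance probability on the maximally mixed witness equals ${\tr(Q) / 2^{l_\regM}}$, the average eigenvalue of $Q$. On a no-instance every eigenvalue is at most ${2^{-p}}$, so this average is at most ${2^{-p}}$; on a yes-instance the top eigenvalue is at least ${1 - 2^{-p}}$, so the average is at least ${(1 - 2^{-p}) / 2^{l_\regM}}$. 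Choosing ${p = l_\regM + 2}$ makes the former bound ${2^{-p}}$ fall below the latter ${(1 - 2^{-p}) / 2^{l_\regM}}$ by ${\Omega(2^{-l_\regM}) = \Omega(1 / \poly(n))}$. Hence this witness-free computation decides $A$ in ${\unitaryQL(c', s')}$ for logarithmic-space-computable ${c' = (1 - 2^{-p}) / 2^{l_\regM}}$ and ${s' = 2^{-p}}$ with an inverse-polynomial gap, and a final call to Corollary~\ref{Corollary: amplification for unitary BQL} amplifies this to ${\unitaryQL(3/4, 1/4) \subseteq \unitaryQL(2/3, 1/3) = \unitaryBQL}$.

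The hard part will be the parameter balancing in the penultimate step rather than any deep argument: because the maximally mixed witness dilutes the completeness by the factor ${2^{-l_\regM}}$, the QMA amplification must first drive the soundness below roughly ${2^{-l_\regM}}$ so that an inverse-polynomial completeness-soundness gap survives, and I would need to verify that the resulting ${c'}$ and ${s'}$ are genuinely logarithmic-space computable so that Corollary~\ref{Corollary: amplification for unitary BQL} applies. The EPR-pair preparation and the accompanying space bookkeeping are then routine.
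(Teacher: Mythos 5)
Your proposal is correct and takes essentially the same route as the paper: the paper obtains this corollary precisely by first amplifying via Corollary~\ref{Corollary: amplification for unitary QMAL}, then replacing the quantum witness by a totally mixed state self-prepared from EPR pairs (taking one qubit of each pair), and finally re-amplifying the resulting witness-free computation via Corollary~\ref{Corollary: amplification for unitary BQL}. Your explicit parameter choice ${p = l_\regM + 2}$ and the eigenvalue-averaging bound on the acceptance probability of the maximally mixed witness simply fill in details the paper leaves implicit.
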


As mentioned before,
Marriott~and~Watrous~\cite{MarWat05CC} showed the equivalence~${\QMA_\textlog = \BQP}$,
the uselessness of quantum witnesses of logarithmic size
in the standard QMA proof systems with a polynomial-time verifier.
In this respect,
Corollary~\ref{Corollary: unitary QMAL = unitary BQL}
states that quantum witnesses of logarithmic size
do not increase the power of logarithmic-space unitary quantum computations at all,
and indeed extends the result of Marriott and Watrous to logarithmic-space case.


\paragraph{Space-efficient amplification for QMA}

Let ${\QMA[l_\regV, l_\regM](c, s)}$ be the time-efficient version of
${\unitaryQMASPACE[l_\regV, l_\regM](c, s)}$,
i.e.,
the class of problems
having standard polynomial-time QMA proof systems with completeness~$c$ and soundness~$s$
in which a polynomial-time unitary quantum verifier
receives a quantum witness of ${\op{l_\regM}(n)}$~qubits
and uses workspace of ${\op{l_\regV}(n)}$~qubits
on every input of length~$n$.
As the reduction is in time polynomial with respect to $\frac{p}{c-s}$
in the proof of Theorem~\ref{Theorem: main theorem},
the following amplifiability is immediate from Theorem~\ref{Theorem: main theorem}
by taking functions~$p$,~$l_\regV$,~and~$l_\regM$ to be polynomially bounded,
and functions~$c$~and~$s$ to be polynomial-time computable and to satisfy ${c - s \geq 1/q}$
for some polynomially bounded function~$\function{q}{\Nonnegative}{\Natural}$.

\begin{corollary}
  For any polynomially bounded functions~$\function{p, l_\regV, l_\regM}{\Nonnegative}{\Natural}$
  and for any polynomial-time computable functions~$\function{c,s}{\Nonnegative}{[0,1]}$
  satisfying ${c - s \geq 1/q}$
  for some polynomially bounded function~$\function{q}{\Nonnegative}{\Natural}$,
  there exists a function~$\function{\delta}{\Nonnegative}{\Natural}$
  that is logarithmic with respect to ${\frac{p}{c - s}}$
  such that
  \[
    \QMA[l_\regV, l_\regM](c, s)
    \subseteq
    \QMA[l_\regV + \delta, l_\regM](1 - 2^{-p}, 2^{-p}).
  \]
  \label{Corollary: space-efficient amplification for QMA}
\end{corollary}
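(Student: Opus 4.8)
The plan is to apply, to the given polynomial-time verifier, the very reduction that proves Theorem~\ref{Theorem: main theorem}, and then to check that polynomial running time is preserved. Concretely, let $L \in \QMA[l_\regV, l_\regM](c, s)$ be witnessed by a polynomial-time uniform family of unitary verifier circuits using $l_\regV(n)$ workspace qubits and an $l_\regM(n)$-qubit witness. Viewed merely as a space-bounded (time-unbounded) unitary verifier, this family places $L$ in $\unitaryQMASPACE[l_\regV, l_\regM](c, s)$, so Theorem~\ref{Theorem: main theorem} supplies a function $\delta$, logarithmic with respect to $\frac{p}{c - s}$, together with an amplified unitary verifier that uses $l_\regV + \delta$ workspace qubits, leaves the witness size $l_\regM$ unchanged, and achieves completeness $1 - 2^{-p}$ and soundness $2^{-p}$. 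The space bound, witness size, and error parameters demanded by the corollary are therefore already in hand; the sole remaining task is to certify that this amplified verifier runs in polynomial time, so that the containment lands in the time-efficient class $\QMA$ rather than only in $\unitaryQMASPACE$.

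For that certification I would appeal to the property, recorded immediately after Theorem~\ref{Theorem: main theorem}, that each of its reductions runs in time polynomial with respect to $\frac{p}{c - s}$: the amplified verifier is obtained by composing a number, polynomial in $\frac{p}{c - s}$, of invocations of the original verifier and its inverse with polynomial additional overhead. Under the present hypotheses, $p$ is polynomially bounded and $c - s \geq 1/q$ for a polynomially bounded $q$, so $\frac{p}{c - s} \leq p \cdot q$ is polynomially bounded in $n$; hence the number of invocations is polynomial in $n$. As each invocation of the original polynomial-time verifier (or its inverse) costs polynomial time, the amplified verifier runs in polynomial time overall.

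It remains to confirm uniformity and the residual bounds. Since $p$, $c$, and $s$ are polynomial-time computable and $l_\regV$, $l_\regM$ are polynomially bounded, the amplified circuits are generated by a polynomial-time uniform procedure; moreover, because $\delta$ is logarithmic in the polynomially bounded quantity $\frac{p}{c - s}$, we have $\delta(n) = \op{O}(\log n)$, so $l_\regV + \delta$ stays polynomially bounded while $l_\regM$ is untouched. Collecting these facts gives
\[
  \QMA[l_\regV, l_\regM](c, s)
  \subseteq
  \QMA[l_\regV + \delta, l_\regM](1 - 2^{-p}, 2^{-p}),
\]
as required. I expect the time-accounting of the second paragraph to be the only step carrying real content: the statement of Theorem~\ref{Theorem: main theorem} constrains workspace alone, so one must read the polynomial invocation-count out of its proof in order to re-enter the time-bounded setting. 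Everything else is routine bookkeeping on the composition of polynomial bounds.
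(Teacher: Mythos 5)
Your proposal is correct and follows exactly the paper's own route: the paper likewise derives this corollary by applying Theorem~\ref{Theorem: main theorem} and invoking the fact, noted right after the theorem's statement, that its reductions run in time polynomial in $\frac{p}{c-s}$, which is polynomially bounded here since ${c - s \geq 1/q}$ with $p$, $q$ polynomially bounded. Your time-accounting and uniformity checks simply spell out what the paper compresses into ``immediate from Theorem~\ref{Theorem: main theorem}.''
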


Recall that
the Marriott-Watrous amplification~\cite{MarWat05CC}
requires $\delta$ to be in ${\op{O} \bigl( \frac{p}{(c-s)^2} \bigr)}$
and the phase-estimation-based method by Nagaj,~Wocjan,~and~Zhang~\cite{NagWocZha09QIC}
requires $\delta$ to be in ${\op{O} \bigl( p \log \frac{1}{c-s} \bigr)}$,
instead of $\delta$ in ${\op{O} \bigl( \log \frac{p}{c-s} \bigr)}$
of Corollary~\ref{Corollary: space-efficient amplification for QMA}.
Hence, the methods in this paper are most space-efficient
among known error-reduction methods for standard QMA proof systems,
and also among those for $\BQP$.


\paragraph{Strong amplification for unitary QMAPSPACE}

Let ${\unitaryQPSPACE(c, s)}$ denote the class of problems
solvable by polynomial-space unitary quantum computations with completeness~$c$ and soundness~$s$,
and let ${\unitaryQMAPSPACE(c, s)}$ denote the class of problems
having polynomial-space unitary QMA proof systems
(i.e., such systems in which a verifier performs a polynomial-space unitary computation
upon receiving a polynomial-size quantum witness)
with completeness~$c$ and soundness~$s$.
The following corollary states the scaled-up versions of
Corollaries~\ref{Corollary: amplification for unitary BQL}~and~\ref{Corollary: amplification for unitary QMAL},
and again is immediate from Theorem~\ref{Theorem: main theorem}
by taking a function~$p$ to be polynomial-space computable and exponentially bounded,
functions~$c$~and~$s$ to be polynomial-space computable and to satisfy ${c - s \geq 2^{-q}}$
for some polynomially bounded function~$\function{q}{\Nonnegative}{\Natural}$,
and functions~$l_\regV$~and~$l_\regM$ to be polynomially bounded
(or a function~${l_\regM = 0}$ in the case of ${\unitaryQPSPACE(c, s)}$).

\begin{corollary}
  For any polynomially bounded function~$\function{p}{\Nonnegative}{\Natural}$
  and for any polynomial-space computable functions~$\function{c,s}{\Nonnegative}{[0,1]}$
  satisfying ${c - s \geq 2^{-q}}$
  for some polynomially bounded function~$\function{q}{\Nonnegative}{\Natural}$,
  the following two properties hold:
  \begin{itemize}
  \item[(i)]
    ${\unitaryQPSPACE(c, s) \subseteq \unitaryQPSPACE \bigl( 1 - 2^{- 2^p}, 2^{- 2^p} \bigr)}$.
  \item[(ii)]
    ${\unitaryQMAPSPACE(c, s) \subseteq \unitaryQMAPSPACE \bigl( 1 - 2^{- 2^p}, 2^{- 2^p} \bigr)}$.
  \end{itemize}
  \label{Corollary: amplification for unitary QPSPACE and QMAPSPACE}
\end{corollary}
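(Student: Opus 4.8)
The plan is to derive both containments directly from Theorem~\ref{Theorem: main theorem} by an appropriate substitution of parameters, exactly as the preceding corollaries were obtained; the only genuinely new feature here is that the target error is doubly-exponentially small rather than merely exponentially small. Concretely, I would invoke the main theorem with its precision function $p$ replaced by $2^{p}$ (so that the guaranteed error $2^{-2^{p}}$ matches the statement), with the given completeness and soundness functions $c$ and $s$, and with workspace and witness functions $l_\regV$ and $l_\regM$ taken to be polynomially bounded---setting $l_\regM = 0$ for part~(i) and leaving $l_\regM$ an arbitrary polynomially bounded function for part~(ii). The theorem then yields a function $\delta$, logarithmic with respect to $\frac{2^{p}}{c - s}$, with
\[
  \unitaryQMASPACE[l_\regV, l_\regM](c, s)
  \subseteq
  \unitaryQMASPACE[l_\regV + \delta, l_\regM]\bigl(1 - 2^{-2^{p}}, 2^{-2^{p}}\bigr),
\]
and it remains only to check that every quantity involved stays within polynomial space.

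The central estimate is that the logarithm tames the doubly-exponential precision. Using the hypothesis $c - s \geq 2^{-q}$ I would bound
\[
  \frac{2^{\op{p}(n)}}{\op{c}(n) - \op{s}(n)}
  \leq 2^{\op{p}(n)} \cdot 2^{\op{q}(n)}
  = 2^{\op{p}(n) + \op{q}(n)},
\]
so that $\op{\delta}(n) = \op{O}\bigl(\op{p}(n) + \op{q}(n)\bigr)$. Since both $p$ and $q$ are polynomially bounded, $\delta$ is polynomially bounded, and hence $l_\regV + \delta$ is polynomially bounded as well. Thus the amplified verifier remains a polynomial-space unitary verifier, and the right-hand side above is precisely $\unitaryQPSPACE\bigl(1 - 2^{-2^{p}}, 2^{-2^{p}}\bigr)$ when $l_\regM = 0$, and $\unitaryQMAPSPACE\bigl(1 - 2^{-2^{p}}, 2^{-2^{p}}\bigr)$ otherwise, giving (i) and (ii) respectively.

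Next I would confirm that the reduction underlying Theorem~\ref{Theorem: main theorem} can actually be carried out within polynomial space on these parameters. The theorem guarantees a reduction that is logarithmic in space with respect to $\frac{2^{p}}{c-s}$; by the same bound $\frac{2^{p}}{c-s} \leq 2^{p+q}$ this is $\op{O}(p + q) = \poly(n)$ space, i.e.\ polynomial space in the input length. The accompanying running time, polynomial in $\frac{2^{p}}{c-s}$ and hence $2^{\poly(n)}$, is harmless since polynomial-space computations may take exponential time. I would also note that the substituted precision $2^{p}$ is polynomial-space computable (its binary expansion has $\op{p}(n) = \poly(n)$ bits and $p$ is polynomial-time computable) and that $c$ and $s$ are polynomial-space computable by assumption, so all parameters needed to drive the reduction are available within the budget.

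The only place demanding care---and the one I would single out as the main obstacle---is precisely this interplay between the exponential target precision and the space bound: a target error of $2^{-2^{p}}$ would be fatal to a method whose overhead scaled polynomially (let alone worse) in the precision, and it is essential that the overhead $\delta$ of Theorem~\ref{Theorem: main theorem} depends only logarithmically on $\frac{p}{c-s}$. Once that logarithmic dependence is invoked, the doubly-exponential $2^{p}$ collapses to the polynomial quantity $p + q$ inside $\delta$, and both containments follow with no further work.
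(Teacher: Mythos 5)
Your proposal is correct and follows exactly the paper's own route: the corollary is obtained by instantiating Theorem~\ref{Theorem: main theorem} with an exponentially bounded precision function (i.e., $2^p$ for polynomially bounded $p$), polynomially bounded $l_\regV$ and $l_\regM$ (with $l_\regM = 0$ for part~(i)), and observing that $\delta$, being logarithmic in $2^p/(c-s) \leq 2^{p+q}$, is polynomially bounded, so the amplified verifier remains polynomial-space. Your additional checks on polynomial-space computability of the parameters and on the reduction's overhead match the paper's (terser) justification.
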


Again by a standard technique of replacing a quantum witness
by a totally mixed state as a self-prepared witness,
the following corollary follows from
Corollary~\ref{Corollary: amplification for unitary QPSPACE and QMAPSPACE}
together with the fact that ${\RevPSPACE = \PrQPSPACE = \PSPACE}$~\cite{Ben89SIComp, Wat99JCSS},
where $\RevPSPACE$ and $\PrQPSPACE$ are the complexity classes corresponding to
deterministic polynomial-space reversible computations
and unbounded-error polynomial-space quantum computations,
respectively.

\begin{corollary}
  For any polynomial-space computable functions~$\function{c,s}{\Nonnegative}{[0,1]}$
  satisfying ${c - s \geq 2^{-q}}$
  for some polynomially bounded function~$\function{q}{\Nonnegative}{\Natural}$,
  \[
    \unitaryQMAPSPACE(c, s) = \PSPACE.
  \]
  \label{Theorem: unitary QMAPSPACE with exp-small gap is in PSPACE}
\end{corollary}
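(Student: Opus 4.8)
The plan is to establish the two inclusions separately; the inclusion $\PSPACE \subseteq \unitaryQMAPSPACE(c,s)$ is easy, and all the work is in the reverse direction $\unitaryQMAPSPACE(c,s) \subseteq \PSPACE$. For the easy inclusion I would use $\RevPSPACE = \PSPACE$: every $\PSPACE$ problem is solved by a deterministic polynomial-space reversible computation, which is a permutation and hence unitary. Running it and ignoring the witness register gives a polynomial-space unitary QMA proof system with completeness exactly $1$ and soundness exactly $0$; since $c \leq 1$ and $s \geq 0$, this already certifies membership in $\unitaryQMAPSPACE(c,s)$.

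For the hard inclusion, the first step is to amplify. Applying part~(ii) of Corollary~\ref{Corollary: amplification for unitary QPSPACE and QMAPSPACE} with a polynomially bounded $p$ chosen so that $2^p$ exceeds the witness length $m = l_\regM$, I would place the given system in $\unitaryQMAPSPACE(1 - \epsilon, \epsilon)$ with $\epsilon = 2^{-2^p}$ doubly-exponentially small. The second step removes the witness via the totally-mixed-state substitution. I would have the verifier prepare $m$ EPR pairs unitarily (Hadamard followed by $\CNOT$ on fresh $\ket{0}$ qubits) and hand one qubit of each pair to the witness register, so that the register carries the totally mixed state $I/2^m$. Since the acceptance probability is linear in the witness and $I/2^m = 2^{-m}\ketbra{\psi} + (1 - 2^{-m})\sigma$ for every pure $\ket{\psi}$, on a yes instance the optimal witness contributes with weight $2^{-m}$ and the acceptance probability is at least $2^{-m}(1 - \epsilon)$, while on a no instance every pure witness, and hence the whole mixture, is accepted with probability at most $\epsilon$. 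This yields a witness-free polynomial-space unitary computation --- an element of $\unitaryQPSPACE$ --- with completeness at least $2^{-m}(1 - \epsilon)$ and soundness at most $\epsilon$.

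Because $2^p > m$ forces $\epsilon < 2^{-m-1} < 2^{-m}(1 - \epsilon)$, the two acceptance probabilities lie on opposite sides of the fixed threshold $2^{-m-1}$, so this is a genuine unbounded-error polynomial-space unitary computation and therefore lies in $\PrQPSPACE$; the routine rescaling that moves the threshold from $2^{-m-1}$ to the standard value $1/2$ only adds polynomially many ancilla qubits. Invoking $\PrQPSPACE = \PSPACE$ now gives $\unitaryQMAPSPACE(c,s) \subseteq \PSPACE$, and together with the easy inclusion this proves the equality.

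The step I expect to be most delicate is the parameter bookkeeping in the witness-elimination: replacing the prover's witness by $I/2^m$ shrinks the completeness by the exponential factor $2^{-m}$, so to keep a positive gap the soundness error must first be pushed well below $2^{-m}$. This is exactly why the doubly-exponential amplification of Corollary~\ref{Corollary: amplification for unitary QPSPACE and QMAPSPACE} is invoked rather than an ordinary single-exponential one, and why $p$ must be chosen large relative to the witness length $l_\regM$; verifying that the resulting threshold is polynomial-space computable (so that the $\PrQPSPACE$ machinery applies) is the other detail to check, though it is standard.
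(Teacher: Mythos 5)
Your proposal is correct and takes essentially the same approach as the paper: the paper derives this corollary by combining the doubly-exponential amplification of Corollary~\ref{Corollary: amplification for unitary QPSPACE and QMAPSPACE} with the standard totally-mixed-state (EPR-pair) witness replacement and the known equalities ${\RevPSPACE = \PrQPSPACE = \PSPACE}$, exactly as you do. Your parameter bookkeeping---choosing $p$ so that $2^p$ exceeds the witness length, so the doubly-exponentially small error survives the $2^{-l_\regM}$ completeness loss---is precisely the point that makes the witness elimination go through.
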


Now the $\PSPACE$ upper bound immediately follows
for the class of problems having standard polynomial-time QMA proof systems
with exponentially small completeness-soundness gap.
More precisely,
for the class~${\QMA(c,s)}$ of problems having standard polynomial-time QMA proof systems
with completeness~$c$ and soundness~$s$,
the following corollary holds.

\begin{corollary}
  For any polynomially bounded function~$\function{p}{\Nonnegative}{\Natural}$
  and for any polynomial-time computable functions~$\function{c,s}{\Nonnegative}{[0,1]}$
  satisfying ${c - s \geq 2^{-q}}$
  for some polynomially bounded function~$\function{q}{\Nonnegative}{\Natural}$,
  \[
    \QMA(c, s) \subseteq \PSPACE.
  \]
  \label{Corollary: QMA with exp-small gap is in PSPACE}
\end{corollary}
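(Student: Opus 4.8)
The plan is to derive this corollary directly from Corollary~\ref{Theorem: unitary QMAPSPACE with exp-small gap is in PSPACE}, which already establishes $\unitaryQMAPSPACE(c, s) = \PSPACE$ under exactly the same hypotheses on $c$ and $s$. The only real work is to show that a standard polynomial-time QMA proof system can be regarded as a polynomial-space \emph{unitary} QMA proof system with the same completeness and soundness, i.e.\ that $\QMA(c, s) \subseteq \unitaryQMAPSPACE(c, s)$; the containment in $\PSPACE$ then follows by composition.

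First I would recall that a verifier witnessing membership in $\QMA(c, s)$ is a polynomial-time quantum algorithm that may use intermediate measurements, receives a witness of polynomially many qubits, and uses polynomially many workspace qubits. Applying the principle of deferred measurement, I would replace each intermediate measurement by a $\CNOT$ into a fresh ancilla qubit and postpone all readouts to the end of the computation. Since the original verifier runs in polynomial time it performs at most polynomially many such measurements, so this conversion introduces only polynomially many additional ancilla qubits; the resulting circuit is unitary up to a single final measurement, still runs in polynomial time, and hence uses only polynomial space. This exhibits the same language in the unitary polynomial-space model without altering $c$ or $s$, giving $\QMA(c, s) \subseteq \unitaryQMAPSPACE(c, s)$.

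Next I would verify that the hypotheses carry over so that Corollary~\ref{Theorem: unitary QMAPSPACE with exp-small gap is in PSPACE} applies: polynomial-time computable functions $\function{c, s}{\Nonnegative}{[0,1]}$ are in particular polynomial-space computable, and the gap assumption ${c - s \geq 2^{-q}}$ for a polynomially bounded $\function{q}{\Nonnegative}{\Natural}$ is precisely what that corollary requires. Hence $\unitaryQMAPSPACE(c, s) = \PSPACE$, and combining this equality with the inclusion of the previous paragraph yields $\QMA(c, s) \subseteq \PSPACE$, as claimed.

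The only point demanding care --- and the reason the argument is this clean only in the polynomial-space regime --- is the deferred-measurement conversion. Making a polynomial-time verifier unitary costs a polynomial number of ancilla qubits, which is affordable here but would be fatal in a logarithmically space-bounded setting, as noted in Section~\ref{Subsection: background}. So the ``obstacle,'' such as it is, reduces to confirming that this standard simulation respects the polynomial space bound, which it does precisely because polynomial running time already implies polynomial space usage.
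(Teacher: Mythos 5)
Your proposal is correct and takes essentially the same route as the paper: Corollary~\ref{Corollary: QMA with exp-small gap is in PSPACE} is deduced from Corollary~\ref{Theorem: unitary QMAPSPACE with exp-small gap is in PSPACE} via the containment ${\QMA(c,s) \subseteq \unitaryQMAPSPACE(c,s)}$, together with the observation that polynomial-time computable functions are polynomial-space computable. The only (harmless) difference is your deferred-measurement step: under the paper's definition the polynomial-time QMA verifier is already unitary, so the containment is immediate and no ancilla-counting argument is needed.
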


For QMA proof systems
with exponentially small completeness-soundness gap,
the $\PSPACE$ upper bound was known previously only for the one-sided-error case
(following from the result in Ref.~\cite{ItoKobWat12ITCS}),
and only the $\EXP$ upper bound was known for the two-sided-error case
(following from the result in Ref.~\cite{KitWat00STOC}).
Natarajan~and~Wu~\cite{NatWu16Private}
independently proved
a statement equivalent to Corollary~\ref{Corollary: QMA with exp-small gap is in PSPACE}.
In fact, statements equivalent to Corollary~\ref{Corollary: QMA with exp-small gap is in PSPACE}
were also proved with different proofs
independently by the first and third authors of the present paper in Ref.~\cite{FefLin16arXiv-1}
(see Ref.~\cite{FefLin16arXiv-2} also)
and by the complement subset of the present authors.
The first and third authors of the present paper
further proved in Refs.~\cite{FefLin16arXiv-1, FefLin16arXiv-2}
that the converse of Corollary~\ref{Corollary: QMA with exp-small gap is in PSPACE} also holds,
i.e.,
$\PSPACE$ is characterized by
QMA proof systems with exponentially small completeness-soundness gap.


\paragraph{Strong amplification for matchgate computations}

A matchgate is defined to be a two-qubit gate of the form~${G(A,B)}$
corresponding to the four-by-four unitary matrix
in which the four corner elements form $A$
and the four inner-square elements form $B$
for matrices~$A$~and~$B$ in ${\SU(2)}$,
and all the other elements are $0$.
A matchgate circuit is a quantum circuit such that:
(i) the input state is a computational basis state,
(ii) all the gates of the circuit are matchgates which are applied to two neighbor qubits,
and (iii) the output is a final measurement in the computational basis on any single qubit.
Matchgate computations were introduced and proved classically simulable by Valiant~\cite{Val02SIComp}.
Terhal~and~DiVincenzo~\cite{TerDiV02PRA} related them to noninteracting-fermion quantum circuits.
Let ${\MG(c, s)}$ denote the class of problems solvable by polynomial-time matchgate computations
with completeness~$c$ and soundness~$s$.
Using the equivalence of polynomial-time matchgate computations and logarithmic-space unitary computations
shown by Jozsa,~Kraus,~Miyake,~and~Watrous~\cite[Corollary 3.3]{JozKraMiyWat10RSPA},
the following is immediate from Corollary~\ref{Corollary: amplification for unitary BQL}.

\begin{corollary}
  For any polynomially bounded function~$\function{p}{\Nonnegative}{\Natural}$
  that is logarithmic-space computable
  and for any logarithmic-space computable functions~$\function{c,s}{\Nonnegative}{[0,1]}$
  satisfying ${c - s \geq 1/q}$
  for some polynomially bounded function~$\function{q}{\Nonnegative}{\Natural}$,
  \[
    \MG(c, s) \subseteq \MG(1 - 2^{-p}, 2^{-p}).
  \]
  \label{Corollary: amplification for matchgate computations}
\end{corollary}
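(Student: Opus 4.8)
The plan is to derive the statement by sandwiching Corollary~\ref{Corollary: amplification for unitary BQL} between the two directions of the equivalence between polynomial-time matchgate computations and logarithmic-space unitary quantum computations established in \cite[Corollary 3.3]{JozKraMiyWat10RSPA}. The crucial feature of that equivalence I would rely on is that the simulations in both directions are \emph{acceptance-probability preserving}: a matchgate computation and the logarithmic-space unitary computation simulating it accept every input with exactly the same probability, the final single-qubit computational-basis measurement of the matchgate model translating directly into the accept/reject measurement of the unitary model. Consequently the equivalence holds at the level of individual completeness and soundness parameters, not merely for bounded error, giving ${\MG(c, s) = \unitaryQL(c, s)}$ for every pair~$(c, s)$.

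With this in hand the argument is a three-step chain. First I would take a problem in ${\MG(c, s)}$ and apply the forward direction of \cite[Corollary 3.3]{JozKraMiyWat10RSPA} to place it in ${\unitaryQL(c, s)}$. Second, since $p$ is polynomially bounded and logarithmic-space computable and $c,s$ are logarithmic-space computable with ${c - s \geq 1/q}$ for a polynomially bounded~$q$, the hypotheses of Corollary~\ref{Corollary: amplification for unitary BQL} are met verbatim, so that ${\unitaryQL(c, s) \subseteq \unitaryQL(1 - 2^{-p}, 2^{-p})}$. Third, I would apply the reverse direction of the equivalence to the amplified system, landing back in ${\MG(1 - 2^{-p}, 2^{-p})}$ and completing the chain
\[
  \MG(c, s)
  = \unitaryQL(c, s)
  \subseteq \unitaryQL(1 - 2^{-p}, 2^{-p})
  = \MG(1 - 2^{-p}, 2^{-p}).
\]

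The step I expect to require the most care is the first: confirming that the equivalence of \cite{JozKraMiyWat10RSPA} really is parameter-faithful rather than merely an equivalence of bounded-error classes. I would need to check that the translation from a matchgate circuit to a logarithmic-space unitary computation keeps the computation genuinely unitary---i.e.\ introduces no intermediate measurements---which holds because a matchgate circuit measures only its single output qubit at the very end, and that it does not inflate the private workspace beyond logarithmic size, so that the resulting object is a legitimate ${\unitaryQL}$ computation with the same acceptance probability. The reverse translation must likewise preserve the acceptance probability and map a logarithmic-space unitary computation into a genuine polynomial-time matchgate circuit. Once parameter faithfulness is verified in both directions, every hypothesis needed to invoke Corollary~\ref{Corollary: amplification for unitary BQL} transfers unchanged, and the inclusion follows immediately.
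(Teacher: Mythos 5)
Your proposal is correct and is essentially the paper's own argument: the paper likewise derives this corollary by invoking the acceptance-probability-preserving equivalence of polynomial-time matchgate computations and logarithmic-space unitary computations from \cite[Corollary 3.3]{JozKraMiyWat10RSPA} and sandwiching Corollary~\ref{Corollary: amplification for unitary BQL} between its two directions. Your attention to parameter faithfulness (rather than mere bounded-error equivalence) is exactly the point that makes the transfer legitimate for arbitrary~$(c,s)$.
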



\subsection{Roadmap}
\label{Subsection: organization}

We assume familiarity with basic quantum formalism
(see Refs.~\cite{NieChu00Book, KitSheVya02Book, Wil13Book}, for instance).

Section~\ref{Section: overview}
provides outlines of three different proofs of the main theorem.
Subsection~\ref{Subsection: overview of simple construction based on phase estimation}
overviews the simplest construction among the three,
which is based on phase estimation.
Subsection~\ref{Subsection: overview of hybrid construction of phase estimation and Marriott-Watrous}
then briefly explains a hybrid construction based on both phase estimation and the Marriott-Watrous amplification,
which is most efficient among the three
in terms of the number of calls of the original unitary transformation of the verifier.
Subsection~\ref{Subsection: overview of exactly implementable construction based on a random guess}
sketches an alternative construction based on random guess,
which is exactly implementable
when the Hadamard and any classical reversible transformations are exactly implementable.
Section~\ref{Section: space-bounded unitary QMA}
presents precise definitions of the model of space-bounded unitary quantum Merlin-Arthur proof systems
and associated complexity classes.
Section~\ref{Section: basic procedures} describes several procedures
that are used in the main error-reduction procedures of this paper.
Finally, Section~\ref{Section: space-efficient amplification methods} 
provides the three proofs of the main theorem rigorously.


\section{Overview of Proofs}
\label{Section: overview}

This section provides outlines of the three different proofs of the main theorem.
Consider any unitary transformation~$V_x$ of the verifier on input $x$,
and let $p_\acc$ be the maximum acceptance probability of it
(and thus, ${p_\acc \geq \op{c}(\abs{x})}$ for yes instances,
and ${p_\acc \leq \op{s}(\abs{x})}$ for no instances).


\subsection{Simple Construction Based on Phase Estimation}
\label{Subsection: overview of simple construction based on phase estimation}

The first construction of space-efficient amplification
is very simple and mainly based on phase estimation.
The key idea is to first use phase estimation
so that it just reduces computation error \emph{mildly} to be polynomially small
rather than directly to be exponentially small.
The point is that the phase estimation is performed only once rather than multiple times.
By essentially taking the AND of the polynomially many attempts of this mildly amplified procedure,
one then achieves exponentially small soundness
with keeping sufficiently large completeness (say, ${1/2}$).
Finally, one makes completeness exponentially close to one
while keeping exponentially small soundness,
which is done by essentially taking the OR of the polynomially many attempts of the procedure constructed so far.

More precisely,
let $\spaceH$ be the Hilbert space over which $V_x$ acts,
and let $I_\spaceH$ be the identity operator over $\spaceH$.
Further let $\Pi_\init$ be the projection onto the subspace spanned by
the legal initial states of the QMA-type computation induced by $V_x$,
and let $\Pi_\acc$ be the projection onto the subspace spanned by
the accepting states of the QMA-type computation associated with $V_x$.
Consider the unitary operator~%
${Q_x = \bigl( 2 \conjugate{V_x} \Pi_\acc V_x - I_\spaceH \bigr) \bigl( 2 \Pi_\init - I_\spaceH \bigr)}$
corresponding to one iteration of the Grover-type algorithm induced by $V_x$.
First, one performs one-shot phase estimation associated with $Q_x$
with ${\op{l}(\abs{x})}$-bit precision
for a function~$\function{l}{\Nonnegative}{\Natural}$ defined by
${l = \bigceil{\log \frac{2 \pi}{\arccos \sqrt{s} - \arccos \sqrt{c}}}}$
and with \emph{mild} failure probability~$\frac{1}{\op{q}_1(\abs{x})}$,
where $q_1$ is a function in ${\op{O}(p)}$
(precisely speaking, ${q_1 = 2 (p + \ceil{\log (p+2)}) + 4}$).
From the property of the standard phase-estimation algorithm,
the number of additional qubits used by the resulting procedure
is determined by the function~${l + \bigceil{\log (\frac{q_1}{2} + 2)}}$,
which is at most linear in ${\log \frac{p}{c - s}}$
(in fact, at most ${\log \frac{p}{c - s}}$ plus a constant).
The acceptance probability is \emph{mildly} amplified to at least~${1 - \frac{1}{\op{q}_1(\abs{x})}}$
in the yes-instance case,
while it is \emph{mildly} reduced to at most~$\frac{1}{\op{q}_1(\abs{x})}$
in the no-instance case.

Let $V_x^{(1)}$ be the unitary operator corresponding to the procedure constructed so far.
Now repeat the following procedure ${\op{N}_1(\abs{x})}$~times
for ${N_1 = \bigceil{\frac{q_2}{2 \log q_1}}}$,
where $q_2$ is also a function in ${\op{O}(p)}$
(precisely speaking, ${q_2 = p + \ceil{\log (p+2)}}$ so that ${q_1 = 2q_2 + 4}$):
One applies $V_x^{(1)}$, and then increments a counter by $1$
if the state corresponds to a rejection state of it.
One further
applies $\conjugate{\bigl( V_x^{(1)} \bigr)}$,
the inverse of $V_x^{(1)}$,
and then increments a counter by one
if any of the work qubits of $V_x^{(1)}$ is in state~$\ket{1}$.
After the repetition, one accepts if and only if the counter value remains zero.
Intuitively, these repetitions try to take
the AND of the ${\op{N}_1(\abs{x})}$~attempts of $V_x^{(1)}$
(with some suitable initialization try by $\conjugate{\bigl( V_x^{(1)} \bigr)}$).
The rigorous analysis shows that
the initialization steps also contribute to taking AND,
so that this process is exactly equivalent to
taking the AND of ${2 \op{N}_1(\abs{x})}$~attempts of $V_x^{(1)}$.
The number of additional qubits used by the resulting procedure
is ${\op{O}(\log N_1)}$,
which is clearly at most linear in ${\log \frac{p}{c - s}}$.
The acceptance probability is thus reduced to at most~%
${\bigl( \frac{1}{\op{q}_1(\abs{x})} \bigr)^{2 \op{N}_1(\abs{x})} \leq 2^{- \op{q}_2(\abs{x})}}$
in the no-instance case,
while it is still at least~%
${1 - \frac{2 \op{N}_1(\abs{x})}{\op{q}_1(\abs{x})} > \frac{1}{2}}$
in the yes-instance case.

Let $V_x^{(2)}$ be the unitary operator corresponding to the procedure constructed so far.
Finally, one tries to take the OR of ${2 \op{N}_2(\abs{x})}$~attempts of $V_x^{(2)}$
for a function~$\function{N_2}{\Nonnegative}{\Natural}$ defined by
${N_2 = \ceil{\frac{p}{2}}}$,
which is done by performing a repetition similar to above.
The number of additional qubits used by the resulting procedure
is ${\op{O}(\log N_2)}$,
which is clearly at most linear in ${\log \frac{p}{c - s}}$.
The acceptance probability is amplified to at least~${1 - 2^{- \op{p}(\abs{x})}}$
in the yes-instance case,
while it is still at most~${2 \op{N}_2(\abs{x}) \cdot 2^{- \op{q}_2(\abs{x})} < 2^{- \op{p}(\abs{x})}}$
in the no-instance case,
as desired.


\subsection{Hybrid Construction of Phase Estimation and Marriott-Watrous}
\label{Subsection: overview of hybrid construction of phase estimation and Marriott-Watrous}

Recall that
the necessary number of calls of the (controlled) unitary transformation~$U$ is
${2^l \cdot \bigceil{\frac{1}{2 \varepsilon} + 2} - 1}$
for a phase estimation associated with $U$
precise to $l$~bits with failure probability~$\varepsilon$~\cite{NieChu00Book}.
Hence, a straightforward calculation shows that
the simple construction in the last subsection requires
${\op{O} \bigl( \frac{1}{c - s} \cdot \frac{p^3}{\log p} \bigr)}$~calls
of $V_x$ and its inverse.
This subsection presents an idea to construct a more efficient method
that uses ${\op{O} \bigl( \frac{1}{c - s} \cdot \frac{p^2}{\log p} \bigr)}$~calls
of $V_x$ and its inverse.
The idea here is to use phase estimation
so that it just achieves a \emph{very mild} computation error of some constant,
rather than polynomially small.
One then achieves polynomially small error by the Marriott-Watrous amplification.
The rest of the construction is essentially the same as in the simple construction in the last subsection.

More precisely, the construction first performs one-shot phase estimation
with ${\op{l}(\abs{x})}$-bit precision
for a function~$\function{l}{\Nonnegative}{\Natural}$ defined by
${l = \bigceil{\log \frac{2 \pi}{\arccos \sqrt{s} - \arccos \sqrt{c}}}}$
and with \emph{very mild} failure probability~$\frac{1}{4}$.
From the property of the standard phase-estimation algorithm,
the number of additional qubits used by the resulting procedure
is determined by the function~${l + 2}$,
which is at most ${\log \frac{1}{c - s}}$ plus a constant,
and thus, clearly at most linear in ${\log \frac{p}{c - s}}$
when the final targeted computation error is at most $2^{-p}$
for a function~$\function{p}{\Nonnegative}{\Natural}$.
The acceptance probability is \emph{very mildly} amplified to at least~$\frac{3}{4}$
in the yes-instance case,
while it is \emph{very mildly} reduced to at most~$\frac{1}{4}$
in the no-instance case.

Let $V_x^{(1)}$ be the unitary operator corresponding to the procedure constructed so far.
Next, one further reduces computation error still mildly to be polynomially small
by performing the Marriott-Watrous amplification.
By using ${\op{N}_1(\abs{x})}$~calls of $V_x^{(1)}$ and its inverse
for a function~$\function{N_1}{\Nonnegative}{\Natural}$ defined by
${N_1 = \bigceil{\frac{8 \log (2p)}{\log e}}}$,
the acceptance probability is \emph{mildly} amplified to at least~%
${1 - \frac{1}{4 \left( \op{p}(\abs{x}) \right)^2}}$
in the yes-instance case,
while it is \emph{mildly} reduced to at most~%
$\frac{1}{4 \left( \op{p}(\abs{x}) \right)^2}$
in the no-instance case.
The number of additional qubits used by the resulting procedure
is determined by the function~%
${2N_1 + \ceil{\log (2N_1 + 1)} + 1}$,
which is clearly at most linear in ${\log p}$
(and thus, at most linear in ${\log \frac{p}{c - s}}$ also).

Let $V_x^{(2)}$ be the unitary operator corresponding to the procedure constructed so far.
The rest of the construction is essentially the same as in the last subsection.
One can essentially take the AND of ${2 \op{N}_2(\abs{x})}$~attempts of $V_x^{(2)}$
for a function~$\function{N_2}{\Nonnegative}{\Natural}$ defined by
${N_2 = \bigceil{\frac{p}{2 \log (2p)}}}$
to achieve acceptance probability at least~${1 - \frac{1}{\op{p}(\abs{x})}}$ for yes instances
and at most~$2^{- 2 \op{p}(\abs{x})}$ for no instances.
Let $V_x^{(3)}$ be the resulting unitary operator.
One then essentially takes the OR of ${2 \op{N}_3(\abs{x})}$~attempts of $V_x^{(3)}$
for a function~$\function{N_3}{\Nonnegative}{\Natural}$ defined by
${N_3 = \ceil{\frac{p}{2 \log p}}}$
to achieve acceptance probability at least~${1 - 2^{- \op{p}(\abs{x})}}$ for yes instances
and at most~$2^{- \op{p}(\abs{x})}$ for no instances.

The total number of additional qubits required is clearly determined
by a function at most linear in ${\log \frac{p}{c - s}}$.
A straightforward calculation shows that
this construction uses
${\op{O} \bigl( \frac{1}{c - s} \cdot \frac{p^2}{\log p} \bigr)}$~calls
of $V_x$ and its inverse,
as claimed.


\subsection{Exactly Implementable Construction Based on a Random Guess}
\label{Subsection: overview of exactly implementable construction based on a random guess}

One small drawback of the previous two constructions is
that they are not exactly implementable
when implemented by quantum circuits with any gate set of finite size,
due to the use of the phase-estimation algorithm.
This subsection outlines an alternative construction
that is exactly implementable
when the Hadamard and any classical reversible transformations are exactly implementable.
The construction uses
${
  \op{O} \Bigl(
           \frac{1}{(c - s)^3} \cdot p^{\frac{5}{2}}
           +
           \frac{1}{(c - s)^3} \bigl( \log \frac{1}{c - s} \bigr)^{\frac{3}{2}} \cdot p
         \Bigr)
}$~calls
of $V_x$ and its inverse,
which is not so good as the second construction in Subsection~\ref{Subsection: overview of hybrid construction of phase estimation and Marriott-Watrous},
but is at least incomparable with the simple construction in Subsection~\ref{Subsection: overview of simple construction based on phase estimation}.

The idea is to guess $p_\acc$ with \emph{mild} precision of ${\op{l}(\abs{x})}$~bits,
where $\function{l}{\Nonnegative}{\Natural}$ is the function defined by
${l = \bigceil{\frac{1}{2} \log \frac{6q}{(c-s)^2}}}$
for a function~$\function{q}{\Nonnegative}{\Natural}$
defined by ${q = \bigceil{2 \bigl( p + \log \frac{6p}{c - s} + 1 \bigr)}}$
when the final targeted computation error is at most~$2^{-p}$
for a function~$\function{p}{\Nonnegative}{\Natural}$.
%
For each $j$ in ${\{1 ,\dotsc, 2^{\op{l}(\abs{x})}\}}$,
let ${r_j = j \cdot 2^{- \op{l}(\abs{x})}}$ be a possible guess of $p_\acc$.
Pick an integer~$k$ from ${\{1 ,\dotsc, 2^{\op{l}(\abs{x})}\}}$ uniformly at random,
and reject immediately if ${r_k = k \cdot 2^{- \op{l}(\abs{x})} < \op{c}(\abs{x})}$
(so that no $k$ can result in a good guess at $p_\acc$ for no instances).
Otherwise $r_k$ is used as a guess at $p_\acc$.
The point is that, for yes instances, there exists a choice of $k$ such that
${
  \abs{r_k - p_\acc}
  <
  2^{- \op{l}(\abs{x})}
  \leq
  \sqrt{\frac{(\op{c}(\abs{x}) - \op{s}(\abs{x}))^2}{6 \op{p}(\abs{x})}}
}$,
while for no instances,
it holds that
${\abs{r_k - p_\acc} > \op{c}(\abs{x}) - \op{s}(\abs{x})}$
for any choice of $k$.
Hence, by first applying the additive adjustment of acceptance probability~\cite{JorKobNagNis12QIC}
to obtain the unitary transformation~$V_{x,k}^{(1)}$ from $V_x$,
and then performing \textsc{Reflection Procedure}~\cite{KobLeGNis15SIComp} using $V_{x,k}^{(1)}$,
the acceptance probability can be \emph{mildly} amplified to at least
${1 - \frac{\left( \op{c}(\abs{x}) - \op{s}(\abs{x}) \right)^2}{6 \op{q}(\abs{x})}}$
in the yes-instance case,
if the chosen~$k$ corresponds to the appropriate guess~$r_k$,
while the acceptance probability is at most
${1 - \bigl( \op{c}(\abs{x}) - \op{s}(\abs{x}) \bigr)^2}$
for any guess~$r_k$.

Fix an index~$k$ of the guess~$r_k$
and let $V_{x,k}^{(2)}$ be the unitary operator corresponding to the procedure constructed so far.
As in the previous subsections,
one tries to essentially take the AND of ${2 \op{N}_2(\abs{x})}$~attempts of $V_{x,k}^{(2)}$
for a function~$\function{N_2}{\Nonnegative}{\Natural}$ defined by
${N_2 = \bigceil{\frac{q}{2 (c-s)^2}}}$.
The acceptance probability is
still at least~$\frac{1}{2}$ in the yes-instance case
when the appropriate guess~$r_k$ at $p_\acc$ is made,
while it is at most ${e^{- \op{q}(\abs{x})} < 2^{- \op{q}(\abs{x})}}$
for any guess~$r_k$ in the no-instance case.

Let $V_{x,k}^{(3)}$ be the unitary operator corresponding to the procedure constructed so far,
when the index~$k$ of $r_k$ is chosen.
Taking into account that $k$ is chosen uniformly at random,
the above argument results in a unitary transformation~$V_x^{(4)}$
that has acceptance probability at least
${
  2^{- \op{l}(\abs{x})} \cdot \frac{1}{2}
  >
  \frac{1}{4} \sqrt{\frac{(\op{c}(\abs{x}) - \op{s}(\abs{x}))^2}{6 \op{q}(\abs{x})}}
}$
in the yes-instance case
and at most~%
${
  2^{- \op{q}(\abs{x})}
  \leq
  2^{- \frac{\op{q}(\abs{x})}{2}}
  \cdot
  \bigl( \frac{\op{c}(\abs{x}) - \op{s}(\abs{x})}{12 \op{p}(\abs{x})} \bigr)
  \cdot
  2^{- \op{p}(\abs{x})}
}$
in the no-instance case.

Finally, as in the previous subsections,
one tries to essentially take the OR of ${2 \op{N}_4(\abs{x})}$~attempts of $V_x^{(4)}$
for a function~$\function{N_4}{\Nonnegative}{\Natural}$ defined by
${N_4 = \Bigceil{2 \sqrt{\frac{6q}{(c-s)^2}} \cdot p}}$.
The acceptance probability is amplified to
at least~${1 - 2^{- \op{p}(\abs{x})}}$ in the yes-instance case,
and is at most ${2^{- \op{p}(\abs{x})}}$
for any guess~$r_k$ in the no-instance case.


\section{Space-Bounded Unitary Quantum Merlin-Arthur Proof Systems}
\label{Section: space-bounded unitary QMA}

First we summarize some notations that are used in this paper.
Let ${\Sigma = \Binary}$ denote the binary alphabet set.
In this paper, all Hilbert spaces are complex and of dimension a power of two.
For a Hilbert space~$\spaceH$,
let $I_\spaceH$ denote the identity operator over $\spaceH$.
A quantum register is a set of single or multiple qubits.
For a quantum register~$\regR$,
let $I_\regR$ denote the identity operator over the Hilbert space associated with $\regR$.

A \emph{space-bounded unitary quantum Merlin-Arthur (QMA) proof system},
or simply called a \emph{QMA-type computation} throughout this paper,
is a space-bounded unitary quantum computation performed by a \emph{quantum verifier}~$V$.
As in the standard QMA proof system,
$V$ prepares a quantum register~$\regV$ corresponding to his/her private space,
all the qubits of which are initially in state~$\ket{0}$,
and receives a quantum register~$\regM$ storing an arbitrarily prepared quantum witness.
One of the qubit in $\regV$ is designated as the output qubit of $V$,
which without loss of generality is assumed to be the first qubit of $\regV$.
$V$ performs a unitary quantum computation over ${(\regV, \regM)}$
and then measures the output qubit in the computational basis,
where the measurement outcome~$1$ corresponds to acceptance.
On an input~$x$ in $\Sigma^\ast$,
the number of private qubits in $\regV$
and the length of a quantum witness in $\regM$
are restricted to~${l_\regV(\abs{x})}$ and ${l_\regM(\abs{x})}$
according to some predetermined functions~$l_\regV$~and~$l_\regM$
that depend only on the input length~$\abs{x}$.
Unless explicitly mentioned,
no restriction is put on the time complexity of the unitary quantum computation of $V$.

Formally,
for functions~$\function{l_\regV, l_\regM}{\Nonnegative}{\Natural}$,
an
\emph{%
  ${(l_\regV, l_\regM)}$-space-bounded quantum verifier~$V$
  for a space-bounded unitary quantum Merlin-Arthur proof system%
}
is a machine that on an input~$x$ in $\Sigma^\ast$ performs a unitary transformation~$V_x$,
where each $V_x$ acts over~${\op{l_\regV}(\abs{x}) + \op{l_\regM}(\abs{x})}$~qubits,
the first ${\op{l_\regV}(\abs{x})}$~qubits of which correspond to the register~$\regV$
and the rest ${\op{l_\regM}(\abs{x})}$~qubits of which correspond to the register~$\regM$.
It is assumed that such a machine~$V$ corresponds to
a certain reasonable $l$-space-bounded unitary quantum computation model
for some function~$\function{l}{\Nonnegative}{\Natural}$
such that ${\op{l}(n)}$ is in ${\op{O}(\op{l_\regV}(n) + \op{l_\regM}(n))}$.
For instance, $V$ may be an $l$-space classical-quantum hybrid Turing machine~\cite{Wat03CC, Wat09ECSS}
for unitary quantum computations,
or may be a machine
that first runs a classical Turing machine of
an $l$-space uniformly generated family of unitary quantum circuits
and then performs the generated circuit.
It is stressed that all the results in this paper hold
regardless of the models of space-bounded quantum computations
as long as the computations performed are unitary.

Fix an input~$x$ in $\Sigma^\ast$,
and suppose that $V$ receives a quantum witness~$\rho$ of ${\op{l_\regM}(\abs{x})}$~qubits in $\regM$.
The probability~${\op{p_\acc}(V_x, \rho)}$ that $V$ accepts $x$ with a quantum witness~$\rho$
is given by
\[
  \op{p_\acc}(V_x, \rho)
  =
  \tr \Pi_\acc
      \conjugate{V_x} \bigl[ (\ketbra{0})^{\tensor \op{l_\regV}(\abs{x})} \tensor \rho \bigr] V_x,
\]
where ${\Pi_\acc = \ketbra{1} \tensor I^{\tensor (\op{l_\regV}(\abs{x}) + \op{l_\regM}(\abs{x}) - 1)}}$
is the projection onto the subspace spanned by the states
in which the designated output qubit is in state~$\ket{1}$.

The class~${\unitaryQMASPACE[l_\regV, l_\regM](c, s)}$ of problems
having ${(l_\regV, l_\regM)}$-space-bounded unitary QMA systems
is defined as follows.

\begin{definition}
  Given functions~$\function{l_\regV, l_\regM}{\Nonnegative}{\Natural}$
  and $\function{c,s}{\Nonnegative}{[0,1]}$ satisfying ${c > s}$,
  a promise problem~${A = (A_\yes, A_\no)}$ is in ${\unitaryQMASPACE[l_\regV, l_\regM](c, s)}$
  if there exists an ${(l_\regV, l_\regM)}$-space-bounded quantum verifier~$V$
  for a space-bounded unitary quantum Merlin-Arthur proof system
  such that, for every $x$ in $\Sigma^\ast$,
  \begin{description}
  \item[\textnormal{(Completeness)}]
    if $x$ is in $A_\yes$,
    there exists a quantum witness~$\rho$ of ${\op{l_\regM}(\abs{x})}$~qubits
    that makes $V$ accept $x$ with probability at least ${\op{c}(\abs{x})}$,
    and
  \item[\textnormal{(Soundness)}]
    if $x$ is in $A_\no$,
    for any quantum witness~$\rho$ of ${\op{l_\regM}(\abs{x})}$~qubits,
    $V$ accepts $x$ with probability at most ${\op{s}(\abs{x})}$.
  \end{description}
  \label{Definition: QMASPACE(l_V, l_M, c, s)}
\end{definition}

Note that quantum witnesses may be restricted to pure states,
as allowing quantum witnesses of mixed states
does not increase the maximal accepting probability of proof systems.

The classes~${\unitaryQMAL(c, s)}$~and~${\unitaryQMAPSPACE(c, s)}$
corresponding to the logarithmic-space and polynomial-space QMA-type computations, respectively,
with completeness~$c$ and soundness~$s$
are then obtained by restricting both of the functions~$l_\regV$~and~$l_\regM$
in Definition~\ref{Definition: QMASPACE(l_V, l_M, c, s)}
to be logarithmically bounded and polynomially bounded.
\begin{sloppy}
\begin{definition}
  Given functions~$\function{c,s}{\Nonnegative}{[0,1]}$ satisfying ${c > s}$,
  a promise problem~${A = (A_\yes, A_\no)}$ is in ${\unitaryQMAL(c, s)}$
  iff $A$ is in ${\unitaryQMASPACE[l_\regV, l_\regM](c, s)}$
  for some logarithmically bounded functions~$\function{l_\regV, l_\regM}{\Nonnegative}{\Natural}$.
  \label{Definition: QMAL(c, s)}
\end{definition}

\begin{definition}
  Given functions~$\function{c,s}{\Nonnegative}{[0,1]}$ satisfying ${c > s}$,
  a promise problem~${A = (A_\yes, A_\no)}$ is in ${\unitaryQMAPSPACE(c, s)}$
  iff $A$ is in ${\unitaryQMASPACE[l_\regV, l_\regM](c, s)}$
  for some polynomially bounded functions~$\function{l_\regV, l_\regM}{\Nonnegative}{\Natural}$.
  \label{Definition: QMAPSPACE(c, s)}
\end{definition}

When ${l_\regM = 0}$ in Definitions~\ref{Definition: QMAL(c, s)}~and~\ref{Definition: QMAPSPACE(c, s)},
respectively,
the resulting classes~${\unitaryQL(c, s)}$~and~${\unitaryQPSPACE(c, s)}$
correspond to the standard logarithmic-space and polynomial-space unitary quantum computations
with completeness~$c$ and soundness~$s$.

\begin{definition}
  Given functions~$\function{c,s}{\Nonnegative}{[0,1]}$ satisfying ${c > s}$,
  a promise problem~${A = (A_\yes, A_\no)}$ is in ${\unitaryQL(c, s)}$
  iff $A$ is in ${\unitaryQMASPACE[l_\regV, 0](c, s)}$
  for some logarithmically bounded function~$\function{l_\regV}{\Nonnegative}{\Natural}$.
  \label{Definition: QL(c, s)}
\end{definition}

\begin{definition}
  Given functions~$\function{c,s}{\Nonnegative}{[0,1]}$ satisfying ${c > s}$,
  a promise problem~${A = (A_\yes, A_\no)}$ is in ${\unitaryQPSPACE(c, s)}$
  iff $A$ is in ${\unitaryQMASPACE[l_\regV, 0](c, s)}$
  for some polynomially bounded function~$\function{l_\regV}{\Nonnegative}{\Natural}$.
  \label{Definition: QPSPACE(c, s)}
\end{definition}

Finally, the bounded-error classes~$\unitaryQMAL$~and~$\unitaryBQL$
may be defined as follows.

\begin{definition}
  A promise problem~${A = (A_\yes, A_\no)}$ is in $\unitaryQMAL$
  iff $A$ is in ${\unitaryQMAL(2/3, 1/3)}$.
  \label{Definition: QMAL}
\end{definition}

\begin{definition}
  A promise problem~${A = (A_\yes, A_\no)}$ is in $\unitaryBQL$
  iff $A$ is in ${\unitaryQL(2/3, 1/3)}$.
  \label{Definition: BQL}
\end{definition}
\end{sloppy}


\section{Basic Procedures}
\label{Section: basic procedures}

Let $\spaceH$ be any Hilbert space of dimension a power of two.
Given a unitary transformation~$U$ and two projections~$\Delta$~and~$\Pi$,
all acting over $\spaceH$,
define the Hermitian operator~$M$ over $\spaceH$ by
\[
  M = \Delta \conjugate{U} \Pi U \Delta,
\]
which plays crucial roles
in many well-known amplification methods in quantum computation,
including
the Grover search~\cite{Gro96STOC},
the Marriott-Watrous amplification for $\QMA$~\cite{MarWat05CC},
the Nagaj-Wocjan-Zhang amplification for $\QMA$ based on phase estimation~\cite{NagWocZha09QIC},
and quantum rewinding for zero-knowledge proofs against quantum attacks~\cite{Wat09SIComp}.


\paragraph{\textsc{One-Shot Phase-Estimation Procedure}}

Consider the procedure described in Figure~\ref{Figure: One-Shot Phase-Estimation Procedure},
which is at the core of the amplification method
based on phase estimation proposed by Nagaj,~Wocjan,~and~Zhang~\cite{NagWocZha09QIC}.
\begin{figure}[t!]
  \begin{algorithm*}
        {
          \textsc{One-Shot Phase-Estimation Procedure}
          associated with $\boldsymbol{(U, \Delta, \Pi, t, l, \varepsilon)}$
        }
    \begin{step}
    \item
      Receive a quantum register~$\regQ$
      that contains a state in the subspace corresponding to the projection~$\Delta$.
    \item
      Let $Q$ be the unitary transformation defined by
      ${Q = (2 \conjugate{U} \Pi U - I_\regQ)(2 \Delta - I_\regQ)}$.
      Perform the phase estimation associated with $Q$ acting over the state in $\regQ$
      with precision of $l$~bits and failure probability~$\varepsilon$,
      using ${l + \bigceil{\log \bigl( 2 + \frac{1}{2 \varepsilon} \bigr)}}$~ancilla qubits.
      Accept if the estimated phase is in the interval~${(-t, t)}$
      and reject otherwise.
    \end{step}
  \end{algorithm*}
  \caption{
    The \textsc{One-Shot Phase-Estimation Procedure}.
  }
  \label{Figure: One-Shot Phase-Estimation Procedure}
\end{figure}
The following proposition holds with the \textsc{One-Shot Phase-Estimation Procedure}.

\begin{proposition}[\cite{NagWocZha09QIC}]
  Let $U$ be a unitary transformation and $\Delta$~and~$\Pi$ be projections,
  all acting over the same Hilbert space.
  Let $\varepsilon$ be a real number in ${(0,1)}$,
  let $l$ be a positive integer,
  and let $t$ be a real number in ${\bigl[ 0, \frac{1}{2} \bigr]}$ represented by $l$~bits.
  Consider the Hermitian operator~${M = \Delta \conjugate{U} \Pi U \Delta}$.
  The following two properties hold:
  \begin{description}
  \item[\textnormal{(Completeness)}]
    Suppose that $M$ has an eigenstate~$\ket{\phi_\lambda}$
    with its associated eigenvalue~$\lambda$ satisfying that
    ${\frac{1}{\pi} \arccos \sqrt{\lambda} \leq t-2^{-l}}$.
    Then,
    the \textsc{One-Shot Phase-Estimation Procedure}
    associated with ${(U, \Delta, \Pi, t, l, \varepsilon)}$
    results in acceptance with probability~${1 - \varepsilon}$
    when the state~$\ket{\phi_\lambda}$ is received in register~$\regQ$ in Step~1.
  \item[\textnormal{(Soundness)}]
    Suppose that all the eigenvalues~$\lambda$ of $M$ are such that
    ${\frac{1}{\pi} \arccos \sqrt{\lambda} \geq t + 2^{-l}}$.
    Then,
    the \textsc{One-Shot Phase-Estimation Procedure}
    associated with ${(U, \Delta, \Pi, t, l, \varepsilon)}$
    results in acceptance with probability at most $\varepsilon$
    regardless of the quantum state received in register~$\regQ$ in Step~1.
  \end{description}
  \label{Proposition: properties of One-Shot Phase-Estimation Procedure}
\end{proposition}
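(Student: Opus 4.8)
The plan is to analyze the spectrum of the Grover-type operator $Q = (2\conjugate{U}\Pi U - I)(2\Delta - I)$ through its description as a product of two reflections and then invoke the standard accuracy guarantee of phase estimation. Writing $P = \conjugate{U}\Pi U$, the operator $Q = (2P - I)(2\Delta - I)$ is the reflection about $\mathrm{range}(P)$ followed by the reflection about $\mathrm{range}(\Delta)$. By Jordan's lemma, the Hilbert space decomposes into subspaces of dimension at most two that are simultaneously invariant under both $P$ and $\Delta$; on each two-dimensional block the two ranges meet at a principal angle $\theta \in (0, \pi/2)$, the operator $Q$ acts there as a rotation by $\pm 2\theta$ with eigenvalues $e^{\pm 2i\theta}$, and the one-dimensional blocks carry eigenvalue $\pm 1$ (i.e.\ $\theta \in \{0, \pi/2\}$).

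First I would pin down the correspondence between these angles and the spectrum of $M = \Delta P \Delta$. Restricted to $\mathrm{range}(\Delta)$, the eigenvalues of $M$ are exactly the squared cosines $\cos^2\theta$ of the principal angles, so an eigenvalue $\lambda$ corresponds to $\theta = \arccos\sqrt{\lambda}$ and hence to a pair of eigenphases $\pm 2\theta$ of $Q$; in the normalized convention where phase estimation reports $\phi \in [0,1)$ with $Q$-eigenvalue $e^{2\pi i \phi}$, these two phases are $\pm\frac{1}{\pi}\arccos\sqrt{\lambda}$, both at distance $\frac{1}{\pi}\arccos\sqrt{\lambda}$ from $0$ modulo $1$. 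The crucial structural observation is that any unit vector lying in $\mathrm{range}(\Delta)$ is, within each two-dimensional block, the unique $\Delta$-direction, which is an \emph{equal-magnitude} superposition of the two $Q$-eigenvectors with phases $\pm 2\theta$; consequently both branches to which phase estimation can collapse carry the \emph{same} phase magnitude $\frac{1}{\pi}\arccos\sqrt{\lambda}$, and the accept/reject test, which depends only on whether $\abs{\phi} < t$, treats them identically.

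With this in hand, the two properties follow by applying, branch by branch, the standard guarantee that phase estimation to $l$ bits using $l + \bigceil{\log(2 + \frac{1}{2\varepsilon})}$ ancilla qubits outputs an estimate within $2^{-l}$ of the true phase with probability at least $1 - \varepsilon$~\cite{NieChu00Book}. For completeness, the received eigenstate $\ket{\phi_\lambda}$ decomposes into $Q$-eigencomponents all of phase magnitude $\frac{1}{\pi}\arccos\sqrt{\lambda} \leq t - 2^{-l}$; the estimate then lands within $2^{-l}$, hence inside $(-t, t)$, with probability at least $1 - \varepsilon$, so the procedure accepts with at least that probability. For soundness, the hypothesis forces every principal angle arising in $\mathrm{range}(\Delta)$ to satisfy $\frac{1}{\pi}\arccos\sqrt{\lambda} \geq t + 2^{-l}$ (this rules out the angle-$0$ blocks of $M$-eigenvalue $1$ and, where present, accommodates the angle-$\frac{\pi}{2}$ blocks of $M$-eigenvalue $0$ by using $t \leq \frac{1}{2}$); every eigencomponent of an arbitrary $\mathrm{range}(\Delta)$ input therefore has true phase at distance at least $t + 2^{-l}$ from $0$, the estimate stays at distance at least $t$ from $0$ with probability at least $1 - \varepsilon$, and so acceptance occurs with probability at most $\varepsilon$. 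Since the $Q$-eigencomponents are orthogonal, the overall acceptance probability is the amplitude-weighted sum of the per-branch probabilities with no cross terms, which yields the claimed bounds for arbitrary superpositions.

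I expect the main obstacle to be the bookkeeping in the soundness direction: one must verify that the Jordan decomposition of any state in $\mathrm{range}(\Delta)$ never produces a $Q$-eigenphase of magnitude below $t + 2^{-l}$ — in particular handling the degenerate one-dimensional blocks and the boundary behavior near $\phi = \frac{1}{2}$ using $t \in [0, \frac{1}{2}]$ — and that the open endpoints of $(-t, t)$, together with the fact that $t$ and the estimates are integer multiples of $2^{-l}$, line up so that the $2^{-l}$ slack built into the two hypotheses is exactly what the phase-estimation accuracy consumes.
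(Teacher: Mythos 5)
Your attempt cannot be compared line-by-line with a proof in the paper, because the paper gives none: this proposition is imported directly from Nagaj, Wocjan, and Zhang~\cite{NagWocZha09QIC} (hence the bracketed citation in its statement), and the paper only supplements it with a remark on implementing the phase estimation with a finite gate set. What you have written is, in essence, the standard argument underlying that cited result, and it is correct in its essentials: the Jordan (two-reflection) decomposition of ${Q = (2 \conjugate{U} \Pi U - I)(2 \Delta - I)}$ into invariant blocks of dimension at most two; the identification of the eigenvalues~$\lambda$ of $M$ on the range of $\Delta$ with principal angles ${\theta = \arccos \sqrt{\lambda}}$; the observation that within each two-dimensional block the $\Delta$-direction is an equal-magnitude superposition of the two $Q$-eigenvectors with phases~${\pm \frac{\theta}{\pi}}$, so that the symmetric acceptance test on ${(-t, t)}$ treats both branches identically; the absence of cross terms across orthogonal eigencomponents; and the Nielsen--Chuang ancilla count~\cite{NieChu00Book}.

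Two bookkeeping points, both resolvable, where your write-up is imprecise. First, the boundary issue you flag at the end is settled not by grid alignment: the estimates are integer multiples of $2^{-n}$ with ${n = l + \bigceil{\log \bigl( 2 + \frac{1}{2\varepsilon} \bigr)}}$, not of $2^{-l}$. What saves the open interval is strictness in the accuracy guarantee: on the good event the outcome~$m$ satisfies ${\abs{m 2^{-n} - \phi} < 2^{-l}}$ \emph{strictly} (the $n$-bit truncation of the true phase~$\phi$ is within strictly less than $2^{-n}$ of $\phi$, and $m$ is within ${2^{n-l} - 1}$ grid units of that truncation), so in completeness the estimate has magnitude strictly below~$t$ and in soundness strictly above~$t$, matching the open/closed structure of the test exactly. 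Second, in soundness the degenerate blocks with ${\lambda = 0}$ (phase~${\pm \frac{1}{2}}$) are not handled by ${t \leq \frac{1}{2}}$ alone; rather, the hypothesis applied to the eigenvalue~$0$ itself forces ${\frac{1}{2} \geq t + 2^{-l}}$, which is precisely the $2^{-l}$ margin those blocks need. Finally, note that the completeness clause of the proposition should be read as acceptance with probability \emph{at least}~${1 - \varepsilon}$, which is what your argument in fact delivers.
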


\begin{remark}
  One thing to be mentioned is
  that the standard phase-estimation algorithm involves inverting quantum Fourier transformation,
  which cannot be implemented exactly
  when implemented by quantum circuits with a gate set of finite size.
  Thus, one needs to approximately implement some phase-rotation gates.
  The number of phase-rotation gates necessary to approximate is proportional to $l^2$
  to achieve precision of $l$~bits
  in the standard implementation of a phase-estimation algorithm.
  This means that each phase-rotation gate must be approximated
  within ${\op{O} \bigl( \frac{\varepsilon}{l^2} \bigr)}$
  so that approximate implementation does not significantly affect
  the failure probability~$\varepsilon$ of the phase-estimation algorithm.
  To prove Theorem~\ref{Theorem: main theorem} via the simple construction based on phase estimation,
  one needs to perform a phase-estimation algorithm
  with precision~$l$ at least logarithmic with respect to $\frac{p}{c - s}$
  and with failure probability~$\varepsilon$ at most polynomially small with respect to $p$.
  The standard (constructive) proofs of the Solovay-Kitaev theorem~\cite{Kit97RMS}
  (such as those found in Refs.~\cite{NieChu00Book, KitSheVya02Book, DawNie06QIC})
  require space polylogarithmic with respect to $\frac{1}{\delta}$
  when approximating within $\delta$,
  which is insufficient for the purpose of proving Theorem~\ref{Theorem: main theorem}
  via the simple construction based on phase estimation.
  Fortunately, van~Melkebeek~and~Watson~\cite{MelWat12ToC}
  showed a more space-efficient construction of the Solovay-Kitaev approximation,
  which uses space only logarithmic with respect to $\frac{1}{\delta}$
  and can be used for the simple construction based on phase estimation
  to prove Theorem~\ref{Theorem: main theorem}.
\end{remark}


\paragraph{\textsc{AND-Type Repetition Procedure}}

Given a unitary transformation~$U$ and two projections~$\Delta$~and~$\Pi$
all acting over a Hilbert space,
consider the process of applying $U$
to a fixed initial state~$\ket{\phi}$ in a quantum register~$\regQ$
that is in the subspace corresponding to $\Delta$
and then accepting if and only if
the resulting state is projected onto the subspace corresponding to $\Pi$
by the projective measurement~${\{\Pi, I_\regQ - \Pi\}}$.
Let $p$ denote the accepting probability of this process.
By running $N$~independent attempts of such a process,
the probability clearly becomes $p^N$
for the event that all the attempts result in acceptance,
but which requires $N$~copies of the initial state~$\ket{\phi}$.
When $\ket{\phi}$ is an eigenstate of the Hermitian operator~${M = \Delta \conjugate{U} \Pi U \Delta}$,
the following \textsc{AND-Type Repetition Procedure}
essentially simulates such independent attempts with just one copy of $\ket{\phi}$.

Prepare an $l$-qubit register~$\regC$ that serves as a counter modulo $2^l$,
where ${l = \ceil{\log (2N + 1)}}$.
All the qubits in $\regC$ are initialized to state~$\ket{0}$.
The procedure receives a quantum register~$\regQ$
that contains a state in the subspace corresponding to $\Delta$,
and then repeats $N$~times a pair of
a simulation attempt by $U$ and an initialization attempt by $\conjugate{U}$.
After each attempt of applying $U$ to $\regQ$,
the procedure checks
if the state in $\regQ$ belongs to the subspace corresponding to $\Pi$,
and increments the counter in $\regC$ if this check fails.
Similarly, after each attempt of applying $\conjugate{U}$ to $\regQ$,
it checks if the state in $\regQ$ is back to a legal initial state
belonging to the subspace corresponding to $\Delta$,
and increments the counter in $\regC$ if this check fails.
After the repetition, the procedure accepts if and only if the counter in $\regC$ is still $0$.
Figure~\ref{Figure: AND-Type Repetition Procedure}
presents the precise description of the \textsc{AND-Type Repetition Procedure}.
\begin{figure}[t!]
  \begin{algorithm*}
        {\textsc{AND-Type Repetition Procedure} associated with $\boldsymbol{(U, \Delta, \Pi, N)}$}
    \begin{step}
    \item
      Let ${l = \ceil{\log (2N + 1)}}$, and prepare an $l$-qubit register~$\regC$,
      where all the qubits in $\regC$ are initialized to state~$\ket{0}$.
      Receive a quantum register~$\regQ$
      that contains a state in the subspace corresponding to the projection~$\Delta$.
    \item
      For ${j = 1}$ to $N$, perform the following:
      \begin{step}
      \item
        Apply $U$ to $\regQ$.
      \item
        If the state in $\regQ$ belongs to the subspace
        corresponding to the projection~${I_\regQ - \Pi}$,
        apply $\INCR{2^l}$ to $\regC$,
        where $\INCR{2^l}$ is the unitary transformation defined by
        \[
          \INCR{2^l} \colon \ket{j} \mapsto \bigket{(j+1) \bmod 2^l},
          \quad
          \forall j \in \Integers_{2^l}.
        \]
      \item
        Apply $\conjugate{U}$ to $\regQ$.
      \item
        If the state in $\regQ$ belongs to the subspace
        corresponding to the projection~${I_\regQ - \Delta}$,
        apply $\INCR{2^l}$ to $\regC$.
      \end{step}
    \item
      Accept if the content of $\regC$ is $0$
      (i.e., all the qubits in $\regC$ are in state~$\ket{0}$),
      and reject otherwise.
    \end{step}
  \end{algorithm*}
  \caption{
    The \textsc{AND-Type Repetition Procedure}.
  }
  \label{Figure: AND-Type Repetition Procedure}
\end{figure}

The following proposition holds with the \textsc{AND-Type Repetition Procedure}.

\begin{proposition}
  Let $U$ be a unitary transformation and $\Delta$~and~$\Pi$ be projections,
  all acting over the same Hilbert space,
  and let $N$ be a positive integer.
  For the \textsc{AND-Type Repetition Procedure} associated with ${(U, \Delta, \Pi, N)}$,
  let $U'$ be the unitary transformation induced by it,
  let $\Delta'$ be the projection onto the subspace spanned by the legal initial states of it,
  and let $\Pi'$ be the projection onto the subspace spanned by the accepting states of it.
  Suppose that
  the Hermitian operator~${M = \Delta \conjugate{U} \Pi U \Delta}$
  has an eigenstate~$\ket{\phi_\lambda}$
  with its associated eigenvalue~$\lambda$.
  Then the state~${\ket{\phi_\lambda} \tensor \ket{0}^{\tensor l}}$
  is an eigenstate of
  the Hermitian operator~${M' = \Delta' \conjugate{(U')} \Pi' U' \Delta'}$
  with eigenvalue~$\lambda^{2N}$.
  \label{Proposition: eigenstate of AND-Type Repetition Procedure}
\end{proposition}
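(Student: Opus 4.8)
The plan is to write the procedure as an honest unitary on $\regQ\tensor\regC$ and reduce the claim to two short operator identities, one for the forward pass $U'$ and one for its inverse. Writing $S=\INCR{2^l}$ for the counter increment, each conditional increment in Figure~\ref{Figure: AND-Type Repetition Procedure} is the unitary $\Pi\tensor I_\regC+(I_\regQ-\Pi)\tensor S$ (and likewise with $\Delta$ in place of $\Pi$), so a single pass of the loop is
\[
  R=\bigl(\Delta\tensor I_\regC+(I_\regQ-\Delta)\tensor S\bigr)(\conjugate U\tensor I_\regC)\bigl(\Pi\tensor I_\regC+(I_\regQ-\Pi)\tensor S\bigr)(U\tensor I_\regC),
\]
with $U'=R^N$, $\Pi'=I_\regQ\tensor\ketbra{0}_\regC$, and $\Delta'=\Delta\tensor\ketbra{0}_\regC$. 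Since $M=\Delta\conjugate U\Pi U\Delta$ vanishes off the range of $\Delta$, I may assume $\Delta\ket{\phi_\lambda}=\ket{\phi_\lambda}$ (eigenstates with $\lambda\neq0$ lie in the range of $\Delta$, and if $\Delta\ket{\phi_\lambda}=0$ then $\Delta'(\ket{\phi_\lambda}\tensor\ket{0}^{\tensor l})=0$ and the eigenvalue is $0=\lambda^{2N}$). Setting $\ket{\Psi}=\ket{\phi_\lambda}\tensor\ket{0}^{\tensor l}$, I have $\Delta'\ket{\Psi}=\Pi'\ket{\Psi}=\ket{\Psi}$, so from $M'=\Delta'\conjugate{(U')}\Pi'U'\Delta'$ it suffices to establish
\[
  \Pi'U'\ket{\Psi}=\lambda^N\ket{\Psi}
  \qquad\text{and}\qquad
  \Delta'\conjugate{(U')}\ket{\Psi}=\lambda^N\ket{\Psi};
\]
combining these gives $M'\ket{\Psi}=\lambda^N\,\Delta'\conjugate{(U')}\ket{\Psi}=\lambda^{2N}\ket{\Psi}$.

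The step I expect to be the crux is the bookkeeping that lets the counter-zero sector behave multiplicatively, namely
\[
  \Pi'R^N\Pi'=(\Pi'R\Pi')^N
  \qquad\text{and}\qquad
  \Pi'\conjugate{(R^N)}\Pi'=\bigl(\Pi'\conjugate R\Pi'\bigr)^N .
\]
To prove the first, I would expand $R^N$ in the computational basis of $\regC$: along any basis path the counter value equals the cumulative number of increments performed so far, which is at most $2N$. The choice $l=\ceil{\log(2N+1)}$ gives $2^l\geq 2N+1$, so this count never reaches $2^l$ and no wraparound occurs; hence the counter value is a nondecreasing integer in $\{0,\dots,2N\}$ along the path. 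A path beginning and ending at counter value $0$ must therefore be at $0$ at every loop boundary, which is exactly the claimed factorization. The inverse pass is identical with decrements replacing increments: starting from $0$ the reachable nonzero values lie in $[2^l-2N,2^l-1]$, which again excludes $0$, so only the all-zero path survives. This is the only place where the exact number of counter qubits is used.

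Finally I would compute the restricted one-pass operators. Keeping the counter at $0$ means taking the ``no increment'' ($\Pi$- and $\Delta$-controlled) branches, which on the counter-zero sector (identified with $\regQ$) gives $\Pi'R\Pi'=\Delta\conjugate U\Pi U$ and $\Pi'\conjugate R\Pi'=\conjugate U\Pi U\Delta$. For the forward identity, $C:=\Delta\conjugate U\Pi U$ satisfies $C\ket{\phi_\lambda}=\Delta\conjugate U\Pi U\Delta\ket{\phi_\lambda}=M\ket{\phi_\lambda}=\lambda\ket{\phi_\lambda}$, so $C^N\ket{\phi_\lambda}=\lambda^N\ket{\phi_\lambda}$ and thus $\Pi'U'\ket{\Psi}=\lambda^N\ket{\Psi}$. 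For the inverse identity, $B:=\conjugate U\Pi U\Delta$ satisfies $\Delta B\ket{\phi_\lambda}=M\ket{\phi_\lambda}=\lambda\ket{\phi_\lambda}$ and hence $B^2\ket{\phi_\lambda}=\conjugate U\Pi U\bigl(\Delta B\ket{\phi_\lambda}\bigr)=\lambda B\ket{\phi_\lambda}$; a one-line induction then gives $B^N\ket{\phi_\lambda}=\lambda^{N-1}B\ket{\phi_\lambda}$, so $\Delta B^N\ket{\phi_\lambda}=\lambda^{N-1}\Delta B\ket{\phi_\lambda}=\lambda^N\ket{\phi_\lambda}$ and $\Delta'\conjugate{(U')}\ket{\Psi}=\lambda^N\ket{\Psi}$. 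The two identities together yield $M'\ket{\Psi}=\lambda^{2N}\ket{\Psi}$, completing the proof.
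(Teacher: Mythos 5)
Your construction is, at its core, the same as the paper's: you write ${U' = R^N}$ with the same one-pass operator, identify ${\Delta' = \Delta \tensor (\ketbra{0})^{\tensor l}}$ and ${\Pi' = I_\regQ \tensor (\ketbra{0})^{\tensor l}}$, and rest everything on the same key fact, namely that the counter is incremented at most ${2N < 2^l}$ times and hence cannot wrap around, so that projecting onto the counter-zero sector annihilates every branch containing at least one increment. The only packaging difference is that the paper turns this observation into the single operator identity ${M' = M^{2N} \tensor (\ketbra{0})^{\tensor l}}$ (via ${(\ketbra{0})^{\tensor l} \bigl( \INCR{2^l} \bigr)^k (\ketbra{0})^{\tensor l} = 0}$ for ${1 \leq k \leq 2N}$ and idempotence of $\Delta$), whereas you prove two vector identities, one for the forward pass and one for the inverse pass; your computations for those two identities are correct.

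The one genuine (though easily repaired) gap is your reduction to ${\Delta \ket{\phi_\lambda} = \ket{\phi_\lambda}}$. You cover the cases ${\lambda \neq 0}$ (which forces $\ket{\phi_\lambda}$ into the range of $\Delta$) and ${\Delta \ket{\phi_\lambda} = 0}$, but these are not exhaustive: the kernel of the Hermitian operator~$M$ contains, and is in general strictly larger than, the direct sum of $\ker \Delta$ and the kernel of $M$ inside the range of $\Delta$, so a $0$-eigenvector of $M$ may have nonzero components both inside and outside the range of $\Delta$. For such a vector ${\Delta' \ket{\Psi} \neq \ket{\Psi}}$ and ${\Delta' \ket{\Psi} \neq 0}$, so neither of your two cases applies, while the proposition asserts the conclusion for \emph{every} eigenstate. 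The patch is one line: write ${\ket{\phi_0} = \Delta \ket{\phi_0} + (I - \Delta) \ket{\phi_0}}$; since ${M \Delta = M}$, both components lie in $\ker M$, the first inside the range of $\Delta$ and the second in $\ker \Delta$, so your two handled cases together with linearity of $M'$ give ${M' \bigl( \ket{\phi_0} \tensor \ket{0}^{\tensor l} \bigr) = 0}$, as required. Alternatively, proving the operator identity ${M' = M^{2N} \tensor (\ketbra{0})^{\tensor l}}$ as the paper does sidesteps the case analysis entirely, since it pins down the action of $M'$ on the whole space at once. Note that the paper's proof of the OR-type analogue, Proposition~\ref{Proposition: eigenstate of OR-Type Repetition Procedure}, confronts exactly this ${\lambda = 0}$ subtlety explicitly ("at least one of ${\Delta \ket{\phi_\lambda} = \ket{\phi_\lambda}}$ or ${\lambda = 0}$ holds"), which is a good sign that the distinction is not pedantic.
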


\begin{proof}
  The unitary transformation~$U'$ can be written as
  \[
    U'
    =
    \bigl\{
      \bigl[ \Delta \tensor I_\regC + (I_\regQ - \Delta) \tensor \INCR{2^l} \bigr]
      (\conjugate{U} \otimes I_\regC)
      \bigl[ \Pi \tensor I_\regC + (I_\regQ - \Pi) \tensor \INCR{2^l} \bigr]
      (U \otimes I_\regC)
    \bigr\}^N,
  \]
  whereas
  the projections~$\Delta'$~and~$\Pi'$ can be written as
  \[
    \Delta' = \Delta \tensor (\ketbra{0})^{\tensor l},
    \quad
    \Pi' = I_\regQ \tensor (\ketbra{0})^{\tensor l}.
  \]
  Notice that, for any $k$ in ${\{1, \dotsc, 2N\}}$, it holds that
  \[
    (\ketbra{0})^{\tensor l} \bigl( \INCR{2^l} \bigr)^k (\ketbra{0})^{\tensor l} = 0,
  \]
  since the content of $\regC$, which starts at $0$,
  cannot return to $0$ for $k$~applications of the increment transformation~$\INCR{2^l}$,
  for ${k \leq 2N < 2^l}$.
  This implies that $M'$ can be simply written as
  \[
    M'
    =
    \Delta' \conjugate{(U')} \Pi' U' \Delta'
    =
    \bigl[
      \Delta
      \bigl[ \conjugate{(\Delta \conjugate{U} \Pi U)} \bigr]^N
      (\Delta \conjugate{U} \Pi U)^N
      \Delta
    \bigr]
    \tensor
    (\ketbra{0})^{\tensor l}
    =
    M^{2N} \tensor (\ketbra{0})^{\tensor l}.
  \]
  Hence,
  if $\ket{\phi_\lambda}$ is an eigenstate of $M$ with eigenvalue~$\lambda$,
  then ${\ket{\phi_\lambda} \tensor \ket{0}^{\tensor l}}$ is an eigenstate of $M'$ with eigenvalue~$\lambda^{2N}$.
\end{proof}

Now the following property of the \textsc{AND-Type Repetition Procedure}
is immediate from Proposition~\ref{Proposition: eigenstate of AND-Type Repetition Procedure}.

\begin{proposition}
  Let $U$ be a unitary transformation and $\Delta$~and~$\Pi$ be projections,
  all acting over the same Hilbert space,
  and let $N$ be a positive integer.
  Consider the Hermitian operator~${M = \Delta \conjugate{U} \Pi U \Delta}$.
  The following two properties hold:
  \begin{description}
  \item[\textnormal{(Completeness)}]
    Suppose that $M$ has an eigenstate~$\ket{\phi_\lambda}$
    with its associated eigenvalue~$\lambda$.
    Then,
    the \textsc{AND-Type Repetition Procedure} associated with ${(U, \Delta, \Pi, N)}$
    results in acceptance with probability~$\lambda^{2N}$
    when the state~$\ket{\phi_\lambda}$ is received in register~$\regQ$ in Step~1.
  \item[\textnormal{(Soundness)}]
    Suppose that all the eigenvalues of $M$ are at most $\varepsilon$
    for some $\varepsilon$ in ${[0, 1)}$.
    Then,
    the \textsc{AND-Type Repetition Procedure} associated with ${(U, \Delta, \Pi, N)}$
    results in acceptance with probability at most $\varepsilon^{2N}$
    regardless of the quantum state received in register~$\regQ$ in Step~1.
  \end{description}
  \label{Proposition: properties of AND-Type Repetition Procedure}
\end{proposition}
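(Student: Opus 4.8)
The plan is to read essentially everything off Proposition~\ref{Proposition: eigenstate of AND-Type Repetition Procedure}, which has already done the computational work by establishing the factorization ${M' = M^{2N} \tensor (\ketbra{0})^{\tensor l}}$ together with the explicit forms ${\Delta' = \Delta \tensor (\ketbra{0})^{\tensor l}}$ and ${\Pi' = I_\regQ \tensor (\ketbra{0})^{\tensor l}}$. The only thing left is to turn statements about $M'$ into statements about the acceptance probability of the procedure.

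First I would reduce the acceptance probability to an expectation value of $M'$. On a received $\regQ$-state $\ket{\psi}$, the procedure adjoins the counter register $\regC$ initialized to $\ket{0}^{\tensor l}$, applies $U'$, and accepts exactly when the measurement~${\{\Pi', I_\regQ \tensor I_\regC - \Pi'\}}$ returns $\Pi'$; hence the acceptance probability is ${\bignorm{\Pi' U' (\ket{\psi} \tensor \ket{0}^{\tensor l})}^2}$. The observation that makes this clean is that Step~1 guarantees $\ket{\psi}$ lies in the image of $\Delta$, so the joint initial state ${\ket{\Psi} = \ket{\psi} \tensor \ket{0}^{\tensor l}}$ is fixed by $\Delta'$. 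Since $\Pi'$ is a projection, this lets me insert $\Delta'$ freely and write
\[
  \bignorm{\Pi' U' \ket{\Psi}}^2
  = \bignorm{\Pi' U' \Delta' \ket{\Psi}}^2
  = \bra{\Psi} \Delta' \conjugate{(U')} \Pi' U' \Delta' \ket{\Psi}
  = \bra{\Psi} M' \ket{\Psi}.
\]
Substituting ${M' = M^{2N} \tensor (\ketbra{0})^{\tensor l}}$ and using ${\ket{\Psi} = \ket{\psi} \tensor \ket{0}^{\tensor l}}$ reduces this to ${\bra{\psi} M^{2N} \ket{\psi}}$.

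With the acceptance probability identified as ${\bra{\psi} M^{2N} \ket{\psi}}$, both claims follow from the spectrum of $M$. For completeness, taking ${\ket{\psi} = \ket{\phi_\lambda}}$ (which lies in the image of $\Delta$, consistently with Step~1) gives ${M^{2N} \ket{\phi_\lambda} = \lambda^{2N} \ket{\phi_\lambda}}$, so the acceptance probability is exactly $\lambda^{2N}$. For soundness, the hypothesis that every eigenvalue of the positive semidefinite operator $M$ is at most $\varepsilon$ yields ${M^{2N} \preceq \varepsilon^{2N} I_\regQ}$ on the relevant subspace, whence ${\bra{\psi} M^{2N} \ket{\psi} \leq \varepsilon^{2N}}$ for every unit vector $\ket{\psi}$; the bound extends to an arbitrary (possibly mixed) received state $\rho$ supported on the image of $\Delta$ by linearity, since the acceptance probability is then ${\tr (M^{2N} \rho) \leq \varepsilon^{2N}}$.

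The step that deserves care — and the only genuine obstacle — is the innocuous-looking insertion of $\Delta'$ that converts ${\bignorm{\Pi' U' \ket{\Psi}}^2}$ into ${\bra{\Psi} M' \ket{\Psi}}$. This silently uses the precondition that the received state sits in the image of $\Delta$. Dropping it would instead express the acceptance probability through the compression ${\bigl( \conjugate{U} \Pi U \, \Delta \, \conjugate{U} \Pi U \bigr)^{2N-1}}$, which shares the nonzero spectrum of $M$ but appears to an odd power, degrading the soundness bound from $\varepsilon^{2N}$ to only $\varepsilon^{2N-1}$ (and this loss is real: for ${N = 1}$ a state in the image of the projection $\conjugate{U}\Pi U$ can be accepted with probability $\varepsilon$). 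Everything else is bookkeeping already carried out in Proposition~\ref{Proposition: eigenstate of AND-Type Repetition Procedure}.
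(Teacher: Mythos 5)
Your proof is correct and takes essentially the same route as the paper: the paper declares this proposition ``immediate'' from Proposition~\ref{Proposition: eigenstate of AND-Type Repetition Procedure}, and your argument simply makes explicit the step left implicit there, namely that Step~1's precondition (the received state lies in the image of $\Delta$) lets one insert $\Delta'$ and identify the acceptance probability with ${\bra{\Psi} M' \ket{\Psi} = \bra{\psi} M^{2N} \ket{\psi}}$, after which both claims follow from the spectrum of $M$. Your closing observation that dropping the $\Delta$-precondition degrades the soundness bound to $\varepsilon^{2N-1}$ is also correct, though it goes beyond anything the paper needs.
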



\paragraph{\textsc{OR-Type Repetition Procedure}}

One can also construct a procedure
that essentially simulates the process of taking OR of the $N$~independent attempts mentioned before
with just one copy of $\ket{\phi}$.
One now applies $\INCR{2^l}$ to $\regC$
when the state in $\regQ$ belongs to the subspace
corresponding to the projection~$\Pi$
at Step~2.2 of the \textsc{AND-Type Repetition Procedure},
and \emph{rejects}
if and only if the content of $\regC$ is $0$
at Step~3 of the \textsc{AND-Type Repetition Procedure}.
The resulting procedure is called the \textsc{OR-Type Repetition Procedure},
whose precise description is presented in Figure~\ref{Figure: OR-Type Repetition Procedure}.
\begin{figure}[t!]
  \begin{algorithm*}
        {\textsc{OR-Type Repetition Procedure} associated with $\boldsymbol{(U, \Delta, \Pi, N)}$}
    \begin{step}
    \item
      Let ${l = \ceil{\log (2N + 1)}}$, and prepare an $l$-qubit register~$\regC$,
      where all the qubits in $\regC$ are initialized to state~$\ket{0}$.
      Receive a quantum register~$\regQ$
      that contains a state in the subspace corresponding to the projection~$\Delta$.
    \item
      For ${j = 1}$ to $N$, perform the following:
      \begin{step}
      \item
        Apply $U$ to $\regQ$.
      \item
        If the state in $\regQ$ belongs to the subspace
        corresponding to the projection~$\Pi$,
        apply $\INCR{2^l}$ to $\regC$,
        where $\INCR{2^l}$ is the unitary transformation defined by
        \[
          \INCR{2^l} \colon \ket{j} \mapsto \bigket{(j+1) \bmod 2^l},
          \quad
          \forall j \in \Integers_{2^l}.
        \]
      \item
        Apply $\conjugate{U}$ to $\regQ$.
      \item
        If the state in $\regQ$ belongs to the subspace
        corresponding to the projection~${I_\regQ - \Delta}$,
        apply $\INCR{2^l}$ to $\regC$.
      \end{step}
    \item
      Reject if the content of $\regC$ is $0$
      (i.e., all the qubits in $\regC$ are in state~$\ket{0}$),
      and accept otherwise.
    \end{step}
  \end{algorithm*}
  \caption{
    The \textsc{OR-Type Repetition Procedure}.
  }
  \label{Figure: OR-Type Repetition Procedure}
\end{figure}

Similarly to the \textsc{AND-Type Repetition Procedure},
the following proposition holds with the \textsc{OR-Type Repetition Procedure}.

\begin{proposition}
  Let $U$ be a unitary transformation and $\Delta$~and~$\Pi$ be projections,
  all acting over the same Hilbert space,
  and let $N$ be a positive integer.
  For the \textsc{OR-Type Repetition Procedure} associated with ${(U, \Delta, \Pi, N)}$,
  let $U'$ be the unitary transformation induced by it,
  let $\Delta'$ be the projection onto the subspace spanned by the legal initial states of it,
  and let $\Pi'$ be the projection onto the subspace spanned by the accepting states of it.
  Suppose that
  the Hermitian operator~${M = \Delta \conjugate{U} \Pi U \Delta}$
  has an eigenstate~$\ket{\phi_\lambda}$
  with its associated eigenvalue~$\lambda$.
  Then the state~${\ket{\phi_\lambda} \tensor \ket{0}^{\tensor l}}$
  is an eigenstate of
  the Hermitian operator~${M' = \Delta' \conjugate{(U')} \Pi' U' \Delta'}$
  with eigenvalue~${1 - (1 - \lambda)^{2N}}$.
  \label{Proposition: eigenstate of OR-Type Repetition Procedure}
\end{proposition}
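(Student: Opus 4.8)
The plan is to mirror the proof of Proposition~\ref{Proposition: eigenstate of AND-Type Repetition Procedure}, exploiting that the \textsc{OR-Type Repetition Procedure} differs from the \textsc{AND-Type} one only in that (a)~Step~2.2 increments $\regC$ when the state lands in $\Pi$ rather than in $I_\regQ - \Pi$, and (b)~Step~3 flips the accept/reject convention. First I would record the induced data:
\[
  U'
  =
  \bigl\{
    \bigl[ \Delta \tensor I_\regC + (I_\regQ - \Delta) \tensor \INCR{2^l} \bigr]
    (\conjugate{U} \tensor I_\regC)
    \bigl[ \Pi \tensor \INCR{2^l} + (I_\regQ - \Pi) \tensor I_\regC \bigr]
    (U \tensor I_\regC)
  \bigr\}^N,
\]
together with ${\Delta' = \Delta \tensor (\ketbra{0})^{\tensor l}}$ and, since acceptance now corresponds to a \emph{nonzero} counter, ${\Pi' = I_\regQ \tensor \bigl( I_\regC - (\ketbra{0})^{\tensor l} \bigr)}$.

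The key move is a complementarity trick. Writing ${\Pi' = I_\regQ \tensor I_\regC - I_\regQ \tensor (\ketbra{0})^{\tensor l}}$ and using unitarity of $U'$ to collapse ${\Delta' \conjugate{(U')} U' \Delta' = \Delta'}$, I obtain
\[
  M'
  =
  \Delta'
  -
  \Delta' \conjugate{(U')} \bigl[ I_\regQ \tensor (\ketbra{0})^{\tensor l} \bigr] U' \Delta'.
\]
The subtracted term is precisely the quantity that the proof of Proposition~\ref{Proposition: eigenstate of AND-Type Repetition Procedure} evaluates, the only change being that the inner projection accompanying each copy of $U$ is $I_\regQ - \Pi$ instead of $\Pi$, because in the OR-Type procedure the counter is \emph{preserved} at Step~2.2 exactly when the state lies in $I_\regQ - \Pi$. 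Re-running that argument verbatim---the same identity ${(\ketbra{0})^{\tensor l} (\INCR{2^l})^k (\ketbra{0})^{\tensor l} = 0}$ for ${1 \leq k \leq 2N < 2^l}$ annihilates every path with a nonzero net increment, forcing the no-increment path---yields
\[
  \Delta' \conjugate{(U')} \bigl[ I_\regQ \tensor (\ketbra{0})^{\tensor l} \bigr] U' \Delta'
  =
  \bar{M}^{2N} \tensor (\ketbra{0})^{\tensor l},
  \qquad
  \bar{M} = \Delta \conjugate{U} (I_\regQ - \Pi) U \Delta.
\]

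To finish I would compute the eigenvalue of $\bar{M}$. By unitarity ${\bar{M} = \Delta \conjugate{U} U \Delta - \Delta \conjugate{U} \Pi U \Delta = \Delta - M}$ as an operator identity, and an eigenstate $\ket{\phi_\lambda}$ of $M$ with nonzero eigenvalue necessarily satisfies ${\Delta \ket{\phi_\lambda} = \ket{\phi_\lambda}}$ (since $M = \Delta(\cdots)\Delta$; the $\lambda=0$ case is the only one needing $\ket{\phi_\lambda}$ to lie in the range of $\Delta$, which is exactly the subspace Step~1 feeds in). Hence ${\bar{M} \ket{\phi_\lambda} = (1 - \lambda) \ket{\phi_\lambda}}$, so ${\bar{M}^{2N} \tensor (\ketbra{0})^{\tensor l}}$ acts as ${(1 - \lambda)^{2N}}$ on ${\ket{\phi_\lambda} \tensor \ket{0}^{\tensor l}}$; combining with ${\Delta'(\ket{\phi_\lambda} \tensor \ket{0}^{\tensor l}) = \ket{\phi_\lambda} \tensor \ket{0}^{\tensor l}}$ gives the claimed eigenvalue ${1 - (1 - \lambda)^{2N}}$. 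The one point needing genuine care is the reduction to the AND-Type computation: I must verify that the accept/reject swap amounts \emph{solely} to replacing $\Pi$ by $I_\regQ - \Pi$ inside the surviving operator while leaving the $\Delta$-check at Step~2.4 (and hence the counter-return obstruction) completely untouched, so that the algebra transfers unchanged.
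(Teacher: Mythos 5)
Your construction is the paper's own proof in all essentials: the same expressions for $U'$, $\Delta'$, and $\Pi'$, the same complementation ${M' = \Delta' - \Delta' \conjugate{(U')} \bigl[ I_\regQ \tensor (\ketbra{0})^{\tensor l} \bigr] U' \Delta'}$, the same annihilation identity ${(\ketbra{0})^{\tensor l} \bigl( \INCR{2^l} \bigr)^k (\ketbra{0})^{\tensor l} = 0}$ forcing the no-increment path, and the same resulting formula ${M' = \bigl[ \Delta - (\Delta - M)^{2N} \bigr] \tensor (\ketbra{0})^{\tensor l}}$ with ${\bar{M} = \Delta - M}$.

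The one genuine defect is your treatment of ${\lambda = 0}$. The proposition quantifies over \emph{all} eigenstates of $M$, and the kernel of $M$ generically contains vectors outside the range of $\Delta$ (every vector orthogonal to that range is annihilated by ${M = \Delta \conjugate{U} \Pi U \Delta}$). Your parenthetical dismissal---that such a $\ket{\phi_0}$ may be assumed to lie in the range of $\Delta$ because that is ``the subspace Step~1 feeds in''---is an appeal to the procedure's input convention, not a derivation from the hypotheses, and for such a vector the identity you rely on, ${\bar{M} \ket{\phi_0} = (1 - \lambda) \ket{\phi_0}}$, is simply false: one only has ${\bar{M} \ket{\phi_0} = \Delta \ket{\phi_0} \neq \ket{\phi_0}}$. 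The conclusion nevertheless survives, and your own identities close the case in two lines: since ${M \Delta = \Delta M = M}$ and ${M \ket{\phi_0} = 0}$, one gets ${\bar{M}^k \ket{\phi_0} = \Delta \ket{\phi_0}}$ for every ${k \geq 1}$, hence
\[
  M' \bigl( \ket{\phi_0} \tensor \ket{0}^{\tensor l} \bigr)
  =
  \bigl( \Delta - \bar{M}^{2N} \bigr) \ket{\phi_0} \tensor \ket{0}^{\tensor l}
  =
  0,
\]
which matches the claimed eigenvalue ${1 - (1 - 0)^{2N} = 0}$. This is exactly the case distinction the paper makes (``at least one of ${\Delta \ket{\phi_\lambda} = \ket{\phi_\lambda}}$ or ${\lambda = 0}$ holds''); you should carry it out rather than argue it away, since it costs nothing given the operator identities you have already established.
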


\begin{proof}
  The proof is very similar to the proof of Proposition~\ref{Proposition: eigenstate of AND-Type Repetition Procedure}.
  This time, the unitary transformation~$U'$ can be written as
  \[
    U'
    =
    \bigl\{
      \bigl[ \Delta \tensor I_\regC + (I_\regQ - \Delta) \tensor \INCR{2^l} \bigr]
      (\conjugate{U} \otimes I_\regC)
      \bigl[ \Pi \tensor \INCR{2^l} + (I_\regQ - \Pi) \tensor I_\regC \bigr]
      (U \otimes I_\regC)
    \bigr\}^N,
  \]
  whereas
  the projections~$\Delta'$~and~$\Pi'$ can be written as
  \[
    \Delta' = \Delta \tensor (\ketbra{0})^{\tensor l},
    \quad
    \Pi' = I_\regQ \tensor \bigl[ I_\regC - (\ketbra{0})^{\tensor l} \bigr].
  \]
  Again notice that, for any $k$ in ${\{1, \dotsc, 2N\}}$, it holds that
  \[
    (\ketbra{0})^{\tensor l} \bigl( \INCR{2^l} \bigr)^k (\ketbra{0})^{\tensor l} = 0,
  \]
  and thus, $M'$ can be simply written as
  \[
    \begin{split}
      \hspace{5mm}
      &
      \hspace{-5mm}
      M'
      =
      \Delta'
      -
      \Delta' \conjugate{(U')} (I_{(\regQ, \regC)} - \Pi') U' \Delta'
      \\
      &
      =
      \bigl\{
        \Delta
        -
        \Delta
        \bigl[ \conjugate{(\Delta \conjugate{U} (I_\regQ - \Pi) U)} \bigr]^N
        [\Delta \conjugate{U} (I_\regQ - \Pi) U]^N
        \Delta
      \bigr\}
      \tensor
      (\ketbra{0})^{\tensor l}
      \\
      &
      =
      \bigl[ \Delta - (\Delta - M)^{2N} \bigr] \tensor (\ketbra{0})^{\tensor l}.
    \end{split}
  \]
  Now notice that
  ${
    \lambda \ket{\phi_\lambda}
    =
    M \ket{\phi_\lambda}
    =
    \Delta M \ket{\phi_\lambda}
    =
    \lambda \Delta \ket{\phi_\lambda}
  }$,
  and therefore
  at least one of ${\Delta \ket{\phi_\lambda} = \ket{\phi_\lambda}}$ or ${\lambda = 0}$ holds.
  If ${\Delta \ket{\phi_\lambda} = \ket{\phi_\lambda}}$, it obviously holds that 
  \[
    M' \bigl( \ket{\phi_\lambda} \tensor \ket{0}^{\tensor l} \bigr)
    =
    \bigl[ 1 - (1 - \lambda)^{2N} \bigr] \bigl( \ket{\phi_\lambda} \tensor \ket{0}^{\tensor l} \bigr).
  \]
  On the other hand, when ${\lambda = 0}$,
  by using that ${M \Delta = \Delta M = M}$ and ${M \ket{\phi_\lambda} = 0}$,
  it follows that
  \[
    M' \bigl( \ket{\phi_\lambda} \tensor \ket{0}^{\tensor l} \bigr)
    =
    \bigl( \Delta - \Delta^{2N} \bigr) \bigl( \ket{\phi_\lambda} \tensor \ket{0}^{\tensor l} \bigr)
    =
    0,
  \]
  which is sufficient for the claim, because ${1 - (1 - \lambda)^{2N} = 0}$ in this case.
\end{proof}

Now the following property of the \textsc{OR-Type Repetition Procedure}
is immediate from Proposition~\ref{Proposition: eigenstate of OR-Type Repetition Procedure}.

\begin{proposition}
  Let $U$ be a unitary transformation and $\Delta$~and~$\Pi$ be projections,
  all acting over the same Hilbert space,
  and let $N$ be a positive integer.
  Consider the Hermitian operator~${M = \Delta \conjugate{U} \Pi U \Delta}$.
  The following two properties hold:
  \begin{description}
  \item[\textnormal{(Completeness)}]
    Suppose that $M$ has an eigenstate~$\ket{\phi_\lambda}$
    with its associated eigenvalue~$\lambda$.
    Then,
    the \textsc{OR-Type Repetition Procedure} associated with ${(U, \Delta, \Pi, N)}$
    results in acceptance with probability~${1 - (1 - \lambda)^{2N}}$
    when the state~$\ket{\phi_\lambda}$ is received in register~$\regQ$ in Step~1.
  \item[\textnormal{(Soundness)}]
    Suppose that all the eigenvalues of $M$ are at most $\varepsilon$
    for some $\varepsilon$ in ${[0, 1)}$.
    Then,
    the \textsc{OR-Type Repetition Procedure} associated with ${(U, \Delta, \Pi, N)}$
    results in acceptance with probability at most ${1 - (1 - \varepsilon)^{2N}}$
    regardless of the quantum state received in register~$\regQ$ in Step~1.
  \end{description}
  \label{Proposition: properties of OR-Type Repetition Procedure}
\end{proposition}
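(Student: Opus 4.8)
The plan is to derive both properties directly from Proposition~\ref{Proposition: eigenstate of OR-Type Repetition Procedure}, exactly as the text does for the \textsc{AND-Type Repetition Procedure}. The first step is to express the acceptance probability of the procedure in operator terms. When a state $\ket{\psi}$ lying in the subspace corresponding to $\Delta$ is received in $\regQ$ (as guaranteed by Step~1), the combined initial state is $\ket{\psi} \tensor \ket{0}^{\tensor l}$, which lies in the range of $\Delta' = \Delta \tensor (\ketbra{0})^{\tensor l}$. Hence the probability that the induced unitary~$U'$ sends this state into the accepting subspace of $\Pi'$ is $\bra{\psi} \bra{0}^{\tensor l} \conjugate{(U')} \Pi' U' \ket{\psi} \ket{0}^{\tensor l}$, and since $\Delta'$ fixes $\ket{\psi} \tensor \ket{0}^{\tensor l}$, inserting $\Delta'$ on both sides shows this equals $\bra{\psi} \bra{0}^{\tensor l} M' \ket{\psi} \ket{0}^{\tensor l}$ with ${M' = \Delta' \conjugate{(U')} \Pi' U' \Delta'}$.

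For completeness, I would simply take $\ket{\psi} = \ket{\phi_\lambda}$ and invoke Proposition~\ref{Proposition: eigenstate of OR-Type Repetition Procedure}, which asserts that $\ket{\phi_\lambda} \tensor \ket{0}^{\tensor l}$ is an eigenstate of $M'$ with eigenvalue $1 - (1 - \lambda)^{2N}$. The acceptance probability is then the corresponding diagonal entry of $M'$, namely exactly $1 - (1 - \lambda)^{2N}$.

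For soundness, the idea is to expand an arbitrary received state in the eigenbasis of $M$. Because $M = \Delta M \Delta$ is Hermitian and annihilates the orthogonal complement of the range of $\Delta$, its eigenvectors furnish an orthonormal eigenbasis $\{\ket{\phi_{\lambda_i}}\}$ of the range of $\Delta$. Writing the received state as $\ket{\psi} = \sum_i c_i \ket{\phi_{\lambda_i}}$ with $\sum_i \abs{c_i}^2 = 1$, the vectors $\ket{\phi_{\lambda_i}} \tensor \ket{0}^{\tensor l}$ are orthonormal eigenstates of $M'$ by Proposition~\ref{Proposition: eigenstate of OR-Type Repetition Procedure}, so the acceptance probability equals $\sum_i \abs{c_i}^2 \bigl( 1 - (1 - \lambda_i)^{2N} \bigr)$. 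Since every $\lambda_i$ lies in $[0, \varepsilon]$ and the map $\lambda \mapsto 1 - (1 - \lambda)^{2N}$ is monotonically nondecreasing on $[0,1]$, each summand is at most $1 - (1 - \varepsilon)^{2N}$, whence the convex combination is at most $1 - (1 - \varepsilon)^{2N}$ as well. Passing to mixed received states changes nothing, as the acceptance probability is linear in the input density operator and is therefore maximized on pure states.

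The step I expect to require the most care is the reduction of the acceptance probability to the Rayleigh quotient $\bra{\psi} \bra{0}^{\tensor l} M' \ket{\psi} \ket{0}^{\tensor l}$: one must verify that the received state genuinely lies in the range of $\Delta$, so that inserting $\Delta'$ is legitimate, and that the eigenvectors of $M$ span this range, which is what licenses the spectral decomposition driving the soundness bound. Everything else is a direct application of the already-established eigenvalue identity together with the monotonicity of $\lambda \mapsto 1 - (1 - \lambda)^{2N}$.
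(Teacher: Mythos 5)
Your proposal is correct and follows essentially the same route as the paper: the paper declares Proposition~\ref{Proposition: properties of OR-Type Repetition Procedure} ``immediate from'' Proposition~\ref{Proposition: eigenstate of OR-Type Repetition Procedure}, and your argument simply fills in the routine details of that implication --- identifying the acceptance probability with the Rayleigh quotient of $M'$, invoking the eigenvalue identity for completeness, and using the spectral decomposition of $M$ on the range of $\Delta$ together with the monotonicity of ${\lambda \mapsto 1 - (1 - \lambda)^{2N}}$ for soundness. Your care about the received state lying in the range of $\Delta$ (guaranteed by Step~1) and about mixed witnesses is exactly the right bookkeeping, so nothing is missing.
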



\paragraph{\textsc{Marriott-Watrous Amplification Procedure}}

Consider the procedure described in Figure~\ref{Figure: Marriott-Watrous Amplification Procedure},
which is exactly the amplification method (described in a general form)
proposed by Marriott~and~Watrous~\cite{MarWat05CC}.
\begin{figure}[t!]
  \begin{algorithm*}{\textsc{Marriott-Watrous Amplification Procedure} associated with $\boldsymbol{(U, \Delta, \Pi, N, t)}$}
    \begin{step}
    \item
      Let ${l = \ceil{\log (2N + 1)}}$.
      Prepare a single-qubit register~$\regB_j$ for each $j$ in ${\{0, \dotsc, 2N\}}$,
      and an $l$-qubit register~$\regC$,
      where all the qubits in $\regB_j$ and $\regC$ are initialized to state~$\ket{0}$.
      Receive a quantum register~$\regQ$
      that contains a state in the subspace corresponding to the projection~$\Delta$.
    \item
      For ${j = 1}$ to $N$, perform the following:
      \begin{step}
      \item
        Apply $U$ to $\regQ$.
      \item
        If the state in $\regQ$ belongs to the subspace
        corresponding to the projection~${I_\regQ - \Pi}$,
        apply the Pauli transformation~$X$ (i.e., the $\NOT$ transformation) to $\regB_j$.
      \item
        Apply $\conjugate{U}$ to $\regQ$.
      \item
        If the state in $\regQ$ belongs to the subspace
        corresponding to the projection~${I_\regQ - \Delta}$,
        apply $X$ to $\regB_{j+1}$.
      \end{step}
    \item
      For ${j = 1}$ to ${2N}$, perform the following:\\
      If the content of $\regB_j$ is the same as that of $\regB_{j-1}$,
      apply $\INCR{2^l}$ to $\regC$,
      where $\INCR{2^l}$ is the unitary transformation defined by
      \[
        \INCR{2^l} \colon \ket{j} \mapsto \bigket{(j+1) \bmod 2^l},
        \quad
        \forall j \in \Integers_{2^l}.
      \]
    \item
      Accept if the content of $\regC$ is at least~$t$
      (when viewed as an integer in $\Integers_{2^l}$),
      and reject otherwise.
    \end{step}
  \end{algorithm*}
  \caption{
    The \textsc{Marriott-Watrous Amplification Procedure}.
  }
  \label{Figure: Marriott-Watrous Amplification Procedure}
\end{figure}

The following proposition holds with the \textsc{Marriott-Watrous Amplification Procedure}.

\begin{proposition}[\cite{MarWat05CC}]
  Let $U$ be a unitary transformation and $\Delta$~and~$\Pi$ be projections,
  all acting over the same Hilbert space.
  Let $N$~and~$t$ be positive integers satisfying ${t \leq 2N}$.
  Consider the Hermitian operator~${M = \Delta \conjugate{U} \Pi U \Delta}$.
  The following two properties hold:
  \begin{description}
  \item[\textnormal{(Completeness)}]
    Suppose that $M$ has an eigenstate~$\ket{\phi_\lambda}$
    with its associated eigenvalue~${\lambda \geq \frac{t}{2N} + \varepsilon}$
    for some $\varepsilon$ in ${\bigl( 0, 1 - \frac{t}{2N} \bigr]}$.
    Then,
    the \textsc{Marriott-Watrous Amplification Procedure} associated with ${(U, \Delta, \Pi, N, t)}$
    results in acceptance with probability greater than~${1 - e^{- 4 \varepsilon^2 N}}$
    when the state~$\ket{\phi_\lambda}$ is received in register~$\regQ$ in Step~1.
  \item[\textnormal{(Soundness)}]
    Suppose that all the eigenvalues of $M$ are at most~${\frac{t}{2N} - \varepsilon}$
    for some $\varepsilon$ in ${\bigl( 0, \frac{t}{2N} \bigr]}$.
    Then,
    the \textsc{Marriott-Watrous Amplification Procedure} associated with ${(U, \Delta, \Pi, N, t)}$
    results in acceptance with probability less than~${e^{- 4 \varepsilon^2 N}}$
    regardless of the quantum state received in register~$\regQ$ in Step~1.
  \end{description}
  \label{Proposition: properties of Marriott-Watrous Amplification Procedure}
\end{proposition}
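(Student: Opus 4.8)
The plan is to reduce the fully coherent procedure to an equivalent classical process of alternating projective measurements, and then to bound a binomial tail. First I would invoke the principle of deferred measurement. Each register~$\regB_j$ is written exactly once, at Step~2.2 or Step~2.4, by a transformation that flips it conditioned on the projection~${I_\regQ - \Pi}$ (respectively ${I_\regQ - \Delta}$) of~$\regQ$, and is thereafter used only as a control in Step~3 before the content of~$\regC$ is read out. Hence I may measure each~$\regB_j$ in the computational basis immediately after it is written without changing the final acceptance probability. Under this equivalent picture the procedure, started from a state in the range of~$\Delta$, alternately performs the projective measurements~${\{\conjugate{U} \Pi U, I - \conjugate{U} \Pi U\}}$ and~${\{\Delta, I - \Delta\}}$ on~$\regQ$, recording the failure bit of each of the~$2N$ measurements into~$\regB_1, \dotsc, \regB_{2N}$ with~$\regB_0$ fixed to~$\ket{0}$, and the content of~$\regC$ is exactly the number of indices~${j \in \{1, \dotsc, 2N\}}$ with~${b_j = b_{j-1}}$.

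Second, I would apply Jordan's lemma to the pair of projections~$\Delta$ and~${\tilde\Pi = \conjugate{U} \Pi U}$ to decompose the Hilbert space into mutually orthogonal subspaces of dimension one or two that are invariant under both projections. On a two-dimensional block the range of~$\Delta$ is spanned by a single vector, which is precisely an eigenstate of~${M = \Delta \conjugate{U} \Pi U \Delta}$ with eigenvalue~${\lambda = \cos^2 \theta}$, where~$\theta$ is the angle between the two one-dimensional ranges inside the block; the one-dimensional blocks contribute only the eigenvalues~$0$ and~$1$, and~${\lambda = 1}$ is excluded by the soundness hypothesis. Because the blocks are orthogonal and invariant under both measurements, an arbitrary initial state in the range of~$\Delta$, written in the~$\Delta$-eigenbasis over the blocks, is decohered by the first measurement into a classical mixture over blocks, so it suffices to analyse the process within a single block and then average.

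Third, I would verify the Markov structure inside a block. A short computation of the four transition probabilities shows that, irrespective of the current measurement outcome, the next measurement agrees with the current one with probability exactly~$\lambda$ and disagrees with probability~${1 - \lambda}$; the same holds for the first step, since the initial~$\Delta$-state agrees with the first~$\tilde\Pi$-outcome with probability~${\bra{\phi_\lambda} \tilde\Pi \ket{\phi_\lambda} = \lambda}$. Consequently the agreement indicators of the events~${\{b_j = b_{j-1}\}}$ are independent and identically distributed Bernoulli random variables with parameter~$\lambda$, so the content of~$\regC$ is distributed as~${\mathrm{Binomial}(2N, \lambda)}$, and acceptance is exactly the event that this count is at least~$t$. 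The two bounds then follow from Hoeffding's inequality for a sum of~$2N$ independent~${\{0,1\}}$-valued variables: in the completeness case~${\lambda \geq \frac{t}{2N} + \varepsilon}$ gives mean at least~${t + 2N\varepsilon}$, so the count falls below~$t$ with probability at most~${\exp(-2(2N\varepsilon)^2/(2N)) = \exp(-4\varepsilon^2 N)}$; in the soundness case each occupied block has~${\lambda \leq \frac{t}{2N} - \varepsilon}$, so within each block the count reaches~$t$ with probability at most~${\exp(-4\varepsilon^2 N)}$, and averaging over the mixture preserves this bound. Integrality of~$t$ makes both inequalities strict as stated.

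I expect the most delicate step to be the first, namely rigorously justifying that the coherent recording into the~$\regB_j$ registers may be replaced by genuine sequential projective measurements, together with the bookkeeping in the second step that reduces an arbitrary (not necessarily eigenstate) input in the range of~$\Delta$ to a classical mixture over Jordan blocks. Once the content of~$\regC$ is identified as a binomial random variable whose parameter is controlled by the eigenvalues of~$M$, the tail estimates are routine.
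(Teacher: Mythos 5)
The paper itself contains no proof of this proposition: it is stated as imported from Marriott and Watrous \cite{MarWat05CC}, so the only meaningful comparison is with that cited argument. Your proof is correct, and it is in essence a reconstruction of the Marriott--Watrous proof: alternating two-outcome measurements associated with $\Delta$ and $\conjugate{U} \Pi U$, the two-projection geometry showing that each ``agreement'' event ${b_j = b_{j-1}}$ is an independent Bernoulli trial with parameter~$\lambda$, and a Hoeffding bound for $2N$ trials with deviation $2N\varepsilon$, which yields exactly the exponent $4 \varepsilon^2 N$. The one ingredient specific to the present paper's formulation---and which you supply correctly---is the deferred-measurement reduction: the procedure of Figure~\ref{Figure: Marriott-Watrous Amplification Procedure} is fully coherent (as it must be in the unitary space-bounded setting), and since each $\regB_j$ is written once and thereafter used only as a control, measuring it immediately after it is written changes nothing; this is precisely the bridge needed to invoke the measurement-based analysis. (Your reading that each $\regB_j$ with ${j \geq 1}$ is written exactly once is the intended one; the printed indexing of Steps~2.2 and~2.4 is evidently a typo.) Two small points of precision, neither a substantive gap. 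First, the claim that the first measurement ``decoheres'' the input into a classical mixture over Jordan blocks is not literally true---inter-block coherences survive---but the statement you actually need is that every measurement operator in the process is block-diagonal, so the joint outcome distribution coincides with the corresponding convex mixture of the per-block distributions. Second, integrality of $t$ does make the completeness inequality strict (the effective deviation is ${2N\varepsilon + 1}$ rather than $2N\varepsilon$), but it does not by itself give strictness on the soundness side; there one should either invoke the strict form of the Chernoff--Hoeffding bound for a positive deviation or handle the degenerate case ${\lambda = 0}$ separately.
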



\paragraph{\textsc{Additive Adjustment Procedure}}

For a Hilbert space~$\spaceH_j$ for each $j$ in ${\{1, 2\}}$,
consider a unitary transformation~$U_j$ and two projections~$\Delta_j$~and~$\Pi_j$,
all acting over $\spaceH_j$.
Define the Hermitian operator~$M_j$ over $\spaceH_j$ for each $j$ in ${\{1, 2\}}$ by
${
  M_j = \Delta_j \conjugate{U_j} \Pi_j U_j \Delta_j
}$.

Now define a Hilbert space~$\spaceH'$ defined by
${
  \spaceH' = \spaceB \tensor \spaceH_1 \tensor \spaceH_2
}$,
where ${\spaceB = \op{\Complex}(\Sigma)}$ is a Hilbert space corresponding to a single qubit.
Let
\[
  \Delta'
  =
  \ketbra{0} \tensor \Delta_1 \tensor \Delta_2,
  \quad
  \Pi'
  =
  \ketbra{0} \tensor \Pi_1 \tensor I_{\spaceH_2}
  +
  \ketbra{1} \tensor I_{\spaceH_1} \tensor \Pi_2,
  \quad
  U'
  =
  H \tensor U_1 \tensor U_2,
\]
where $H$ denotes the Hadamard transformation,
and further let
${
  M' = \Delta' \conjugate{(U')} \Pi' U' \Delta'
}$.
A straightforward calculation shows that
\[
  M'
  =
  \frac{1}{2} (\ketbra{0} \tensor M_1 \tensor \Delta_1 + \ketbra{0} \tensor \Delta_2 \tensor M_2).
\]
Suppose that,
for each $j$ in ${\{1, 2\}}$,
the Hermitian operator~$M_j$ has an eigenstate (i.e., the normalized eigenvector)~$\ket{\phi_{j, \lambda_j}}$
with its associated eigenvalue~$\lambda_j$.
It is easy to see that
\[
  M' (\ket{0} \tensor \ket{\phi_{1, \lambda_1}} \tensor \ket{\phi_{2, \lambda_2}})
  =
  \frac{\lambda_1 + \lambda_2}{2}
  (\ket{0} \tensor \ket{\phi_{1, \lambda_1}} \tensor \ket{\phi_{2, \lambda_2}}).
\]
This implies that $M'$ has an eigenstate~%
${\ket{0} \tensor \ket{\phi_{1,\lambda_1}} \tensor \ket{\phi_{2,\lambda_2}}}$
with eigenvalue~$\frac{\lambda_1 + \lambda_2}{2}$,
which is implicit in the additive adjustment technique of acceptance probability proposed in Ref.~\cite{JorKobNagNis12QIC}.
This leads to the following \textsc{Additive Adjustment Procedure}
presented in Figure~\ref{Figure: Additive Adjustment Procedure}.
%
\begin{figure}[t!]
  \begin{algorithm*}
        {\textsc{Additive Adjustment Procedure} associated with $\boldsymbol{(U, \Delta, \Pi, l, k)}$}
    \begin{step}
    \item
      Prepare a single-qubit register~$\regB$
      and an $l$-qubit register~$\regR$,
      where all the qubits in $\regB$ and $\regR$ are initialized to state~$\ket{0}$.
      Receive a quantum register~$\regQ$
      that contains a state in the subspace corresponding to the projection~$\Delta$.
    \item
      Apply the Hadamard transformation~$H$ to each qubit in ${(\regB, \regR)}$,
      and apply $U$ to $\regQ$.
    \item
      Accept either if $\regB$ contains $0$
      \emph{and} the state in $\regQ$ belongs to the subspace corresponding to $\Pi$
      or if $\regB$ contains $1$ \emph{and} the content of $\regR$ is greater than $k$
      (when viewed as an integer in ${\{1, \dotsc, 2^l\}}$),
      and reject otherwise.
    \end{step}
  \end{algorithm*}
  \caption{
    The \textsc{Additive Adjustment Procedure}.
  }
  \label{Figure: Additive Adjustment Procedure}
\end{figure}

The following proposition is immediate from the argument above.

\begin{proposition}
  Let $U$ be a unitary transformation and $\Delta$~and~$\Pi$ be projections,
  all acting over the same Hilbert space,
  and let $l$ be a positive integer and $k$ be an integer in ${\{1, \dotsc, 2^l\}}$.
  For the \textsc{Additive Adjustment Procedure} associated with ${(U, \Delta, \Pi, l, k)}$,
  let $U'$ be the unitary transformation induced by it,
  let $\Delta'$ be the projection onto the subspace spanned by the legal initial states of it,
  and let $\Pi'$ be the projection onto the subspace spanned by the accepting states of it.
  Suppose that
  the Hermitian operator~${M = \Delta \conjugate{U} \Pi U \Delta}$
  has an eigenstate~$\ket{\phi_\lambda}$
  with its associated eigenvalue~$\lambda$.
  Then the state~${\ket{0} \tensor \ket{\phi_\lambda} \tensor \ket{0}^{\tensor l}}$
  is an eigenstate of
  the Hermitian operator~${M' = \Delta' \conjugate{(U')} \Pi' U' \Delta'}$
  with eigenvalue~${\frac{1}{2} + \frac{1}{2} \bigl( \lambda - \frac{k}{2^l} \bigr)}$.
  \label{Proposition: eigenstate of Additive Adjustment Procedure}
\end{proposition}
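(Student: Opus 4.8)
The plan is to read the \textsc{Additive Adjustment Procedure} as a direct instantiation of the general additive-adjustment construction set up in the paragraphs just before Figure~\ref{Figure: Additive Adjustment Procedure}, so that the proposition becomes an immediate application of the eigenstate identity proved there. Concretely, I would let the control qubit $\spaceB$ be the register $\regB$, let the first system be $\regQ$ with $(U_1, \Delta_1, \Pi_1) = (U, \Delta, \Pi)$ and $M_1 = M$, and let the second system be the $l$-qubit register $\regR$ with $U_2 = H^{\tensor l}$, with $\Delta_2 = (\ketbra{0})^{\tensor l}$ projecting onto the legal initial content $\ket{0}^{\tensor l}$, and with $\Pi_2 = \sum_{m = k+1}^{2^l} \ketbra{m}$ projecting onto the $(2^l - k)$-dimensional subspace in which the content of $\regR$ exceeds $k$. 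Under this dictionary the induced objects match the procedure exactly: Step~2 realizes $U' = H \tensor U_1 \tensor U_2$; the legal initial states give $\Delta' = \ketbra{0} \tensor \Delta_1 \tensor \Delta_2$; and the two accepting clauses of Step~3 give $\Pi' = \ketbra{0} \tensor \Pi_1 \tensor I_{\spaceH_2} + \ketbra{1} \tensor I_{\spaceH_1} \tensor \Pi_2$.

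The only genuine computation is to pin down the eigenstate of the second system. Since $U_2 \ket{0}^{\tensor l} = H^{\tensor l} \ket{0}^{\tensor l}$ is the uniform superposition over all $2^l$ basis states, its overlap with every basis state has modulus $2^{-l/2}$; hence the squared length of its image under $\Pi_2$ is $2^{-l}(2^l - k) = 1 - \frac{k}{2^l}$, independently of which $2^l - k$ basis states are declared accepting. Expanding $\Delta_2 \conjugate{U_2} \Pi_2 U_2 \Delta_2 \ket{0}^{\tensor l}$ and using $\bra{0}^{\tensor l} H^{\tensor l} = 2^{-l/2} \sum_m \bra{m}$ then gives $M_2 \ket{0}^{\tensor l} = \bigl( 1 - \frac{k}{2^l} \bigr) \ket{0}^{\tensor l}$, so $\ket{0}^{\tensor l}$ is an eigenstate of $M_2$ with eigenvalue $\lambda_2 = 1 - \frac{k}{2^l}$. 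Feeding $\lambda_1 = \lambda$ and this $\lambda_2$ into the general eigenvalue $\frac{\lambda_1 + \lambda_2}{2}$ yields the claimed value $\frac{1}{2} + \frac{1}{2}\bigl( \lambda - \frac{k}{2^l} \bigr)$ for the state $\ket{0} \tensor \ket{\phi_\lambda} \tensor \ket{0}^{\tensor l}$.

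The point I expect to demand the most care is a degeneracy implicit in the general eigenstate identity: its derivation tacitly uses that each input eigenstate is fixed by the corresponding initial-state projection, i.e.\ $\Delta_j \ket{\phi_{j, \lambda_j}} = \ket{\phi_{j, \lambda_j}}$. For the counter system this is immediate, since the eigenstate $\ket{0}^{\tensor l}$ manifestly satisfies $\Delta_2 \ket{0}^{\tensor l} = \ket{0}^{\tensor l}$. For the first system one needs $\Delta \ket{\phi_\lambda} = \ket{\phi_\lambda}$; exactly as in the proof of Proposition~\ref{Proposition: eigenstate of OR-Type Repetition Procedure}, the identity $\lambda \ket{\phi_\lambda} = M \ket{\phi_\lambda} = \lambda \Delta \ket{\phi_\lambda}$ guarantees this whenever $\lambda \neq 0$, while for $\lambda = 0$ one selects the eigenstate within the image of $\Delta$ (harmless, since the range of $M$ already lies there). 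Once this is observed, the proposition follows at once from the general identity, and a sanity check of the two summands of $M'$ confirms that the $\frac{\lambda}{2}$ contribution of the first system and the $\frac{1}{2}\bigl( 1 - \frac{k}{2^l} \bigr)$ contribution of the counter system add to the asserted eigenvalue.
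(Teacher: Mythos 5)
Your proposal is correct and takes essentially the same route as the paper: the paper likewise treats the proposition as an immediate instantiation of the general two-system additive-adjustment identity derived just before Figure~\ref{Figure: Additive Adjustment Procedure}, with ${(U_1, \Delta_1, \Pi_1) = (U, \Delta, \Pi)}$ and ${(U_2, \Delta_2, \Pi_2) = \bigl( H^{\tensor l}, (\ketbra{0})^{\tensor l}, \sum_{m > k} \ketbra{m} \bigr)}$, so that ${\lambda_2 = 1 - \frac{k}{2^l}}$ and the eigenvalue ${\frac{\lambda_1 + \lambda_2}{2}}$ becomes the claimed value. Your closing observation that the general identity tacitly requires ${\Delta \ket{\phi_\lambda} = \ket{\phi_\lambda}}$ (automatic for ${\lambda \neq 0}$, and arranged by choosing the eigenstate inside the range of $\Delta$ when ${\lambda = 0}$) is a point the paper silently glosses over, and your treatment of it is the correct one.
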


Now the following property of the \textsc{Additive Adjustment Procedure}
is immediate from Proposition~\ref{Proposition: eigenstate of Additive Adjustment Procedure}.

\begin{proposition}
  Let $U$ be a unitary transformation and $\Delta$~and~$\Pi$ be projections,
  all acting over the same Hilbert space.
  Consider the Hermitian operator~${M = \Delta \conjugate{U} \Pi U \Delta}$.
  For any positive integer~$l$ and any integer~$k$ in ${\{1, \dotsc, 2^l\}}$,
  the following two properties hold:
  \begin{description}
  \item[\textnormal{(Completeness)}]
    Suppose that $M$ has an eigenstate~$\ket{\phi_\lambda}$
    with its associated eigenvalue~$\lambda$.
    Then,
    the \textsc{Additive Adjustment Procedure} associated with ${(U, \Delta, \Pi, l, k)}$
    results in acceptance with probability~%
    ${\frac{1}{2} + \frac{1}{2} \bigl( \lambda - \frac{k}{2^l} \bigr)}$
    when the state~$\ket{\phi_\lambda}$ is received in register~$\regQ$ in Step~1.
  \item[\textnormal{(Soundness)}]
    Suppose that all the eigenvalues of $M$ are at most $\varepsilon$
    for some $\varepsilon$ in ${[0, 1)}$.
    Then,
    the \textsc{Additive Adjustment Procedure} associated with ${(U, \Delta, \Pi, l, k)}$
    results in acceptance with probability at most
    ${
      \frac{1}{2} + \frac{1}{2} \bigl( \varepsilon - \frac{k}{2^l} \bigr)
    }$
    regardless of the quantum state received in register~$\regQ$ in Step~1.
  \end{description}
  \label{Proposition: properties of Additive Adjustment Procedure}
\end{proposition}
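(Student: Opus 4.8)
The plan is to derive both properties from the standard relationship between acceptance probability and the operator $M' = \Delta' \conjugate{(U')} \Pi' U' \Delta'$ associated with the procedure, where $U'$, $\Delta'$, and $\Pi'$ are as in Proposition~\ref{Proposition: eigenstate of Additive Adjustment Procedure}. Concretely, for any legal initial state~$\ket{\Psi}$ (one satisfying $\Delta' \ket{\Psi} = \ket{\Psi}$), the probability that the procedure accepts equals $\bignorm{\Pi' U' \ket{\Psi}}^2 = \bra{\Psi} \conjugate{(U')} \Pi' U' \ket{\Psi}$, and inserting $\Delta' \ket{\Psi} = \ket{\Psi}$ on each side rewrites this as $\bra{\Psi} M' \ket{\Psi}$. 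Thus it suffices to evaluate the quadratic form $\bra{\Psi} M' \ket{\Psi}$ on the appropriate initial state.

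For completeness, I would take $\ket{\Psi} = \ket{0} \tensor \ket{\phi_\lambda} \tensor \ket{0}^{\tensor l}$, the legal initial state prepared in Step~1 when $\ket{\phi_\lambda}$ is received in~$\regQ$. Proposition~\ref{Proposition: eigenstate of Additive Adjustment Procedure} asserts precisely that this $\ket{\Psi}$ is an eigenstate of $M'$ with eigenvalue ${\frac{1}{2} + \frac{1}{2} \bigl( \lambda - \frac{k}{2^l} \bigr)}$, so the acceptance probability $\bra{\Psi} M' \ket{\Psi}$ is exactly that eigenvalue, as claimed.

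For soundness, the only extra ingredient is a spectral decomposition of the received state. Since $M = \Delta \conjugate{U} \Pi U \Delta$ obeys ${M \Delta = \Delta M = M}$, the operator $M$ annihilates the orthogonal complement of the range of $\Delta$ and maps the range of $\Delta$ into itself; being Hermitian, it therefore has an orthonormal eigenbasis $\{\ket{\phi_{\lambda_i}}\}$ of the range of~$\Delta$, with all eigenvalues $\lambda_i \leq \varepsilon$ under the soundness hypothesis. Any state~$\ket{\psi}$ received in~$\regQ$ lies in the range of $\Delta$ by Step~1, so it expands as $\ket{\psi} = \sum_i c_i \ket{\phi_{\lambda_i}}$ with $\sum_i \abs{c_i}^2 = 1$, and the prepared initial state becomes a superposition $\ket{\Psi} = \sum_i c_i \bigl( \ket{0} \tensor \ket{\phi_{\lambda_i}} \tensor \ket{0}^{\tensor l} \bigr)$ of mutually orthogonal eigenstates of $M'$. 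Evaluating the quadratic form termwise and bounding each eigenvalue gives
\[
  \bra{\Psi} M' \ket{\Psi}
  =
  \sum_i \abs{c_i}^2 \Bigl[ \tfrac{1}{2} + \tfrac{1}{2} \bigl( \lambda_i - \tfrac{k}{2^l} \bigr) \Bigr]
  \leq
  \tfrac{1}{2} + \tfrac{1}{2} \bigl( \varepsilon - \tfrac{k}{2^l} \bigr),
\]
which is the stated upper bound.

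The single point requiring care---and the only thing I would flag as a possible obstacle---is the justification that an arbitrary received state in the range of $\Delta$ genuinely decomposes into eigenstates of $M$ lying in that same range, so that Proposition~\ref{Proposition: eigenstate of Additive Adjustment Procedure} can be applied to each term; this is exactly where the identities ${M \Delta = \Delta M = M}$ enter. Beyond that, the argument is just linearity of $\ket{\Psi} \mapsto \bra{\Psi} M' \ket{\Psi}$ across an orthonormal eigenbasis together with monotonicity in the eigenvalues, so the proposition really is immediate from the eigenstate proposition.
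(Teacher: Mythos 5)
Your proof is correct and follows exactly the route the paper intends: the paper declares this proposition ``immediate from'' Proposition~\ref{Proposition: eigenstate of Additive Adjustment Procedure}, and your argument---acceptance probability as the quadratic form $\bra{\Psi} M' \ket{\Psi}$, the eigenstate proposition applied directly for completeness, and a spectral decomposition of the received state within the range of $\Delta$ (using ${M \Delta = \Delta M = M}$) for soundness---is precisely the routine filling-in of that claim. No gaps; the details you flag as requiring care are handled correctly.
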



\paragraph{\textsc{Reflection Procedure}}

Finally, consider the procedure described in Figure~\ref{Figure: Reflection Procedure in main},
which is exactly the \textsc{Reflection Procedure} in a general form
originally developed in Ref.~\cite{KobLeGNis15SIComp}.
\begin{figure}[t!]
  \begin{algorithm*}{\textsc{Reflection Procedure} associated with $\boldsymbol{(U, \Delta, \Pi)}$}
    \begin{step}
    \item
      Receive a quantum register~$\regQ$
      that contains a state in the subspace corresponding to the projection~$\Delta$.
    \item
      Apply $U$ to $\regQ$.
    \item
      Perform a phase-flip
      (i.e., multiply the phase by $-1$)
      if the state in $\regQ$ belongs to the subspace corresponding to the projection~$\Pi$.
    \item
      Apply $\conjugate{U}$ to $\regQ$.
    \item
      Reject if the state in $\regQ$ belongs to the subspace corresponding to $\Delta$,
      and accept otherwise.
    \end{step}
  \end{algorithm*}
  \caption{
    The \textsc{Reflection Procedure}.
  }
  \label{Figure: Reflection Procedure in main}
\end{figure}

The following proposition holds with the \textsc{Reflection Procedure}.

\begin{proposition}[\cite{KobLeGNis15SIComp}]
  Let $U$ be a unitary transformation and $\Delta$~and~$\Pi$ be projections,
  all acting over the same Hilbert space.
  Consider the Hermitian operator~${M = \Delta \conjugate{U} \Pi U \Delta}$.
  The following two properties hold:
  \begin{description}
  \item[\textnormal{(Completeness)}]
    Suppose that $M$ has an eigenstate~$\ket{\phi_\lambda}$
    with its associated eigenvalue~$\lambda$.
    Then,
    the \textsc{Reflection Procedure} associated with ${(U, \Delta, \Pi)}$
    results in acceptance with probability~${4 \lambda (1 - \lambda)}$
    when the state~$\ket{\phi_\lambda}$ is received in register~$\regQ$ in Step~1.
  \item[\textnormal{(Soundness)}]
    Suppose that none of the eigenvalues of $M$
    is in the interval~${\bigl( \frac{1}{2} - \varepsilon, \frac{1}{2} + \varepsilon \bigr)}$
    for some $\varepsilon$ in ${\bigl( 0, \frac{1}{2} \bigr]}$.
    Then,
    the \textsc{Reflection Procedure} associated with ${(U, \Delta, \Pi)}$
    results in acceptance with probability at most ${1 - 4 \varepsilon^2}$
    regardless of the quantum state received in register~$\regQ$ in Step~1.
  \end{description}
  \label{Proposition: properties of Reflection Procedure}
\end{proposition}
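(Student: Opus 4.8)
The plan is to identify the single unitary that the \textsc{Reflection Procedure} applies before its final measurement, reduce the acceptance probability to a quadratic in the eigenvalue~$\lambda$, and then read off both claims. First I would observe that Steps~2--4 together implement the unitary ${W = \conjugate{U}\,(I_\regQ - 2\Pi)\,U}$, since the phase-flip in Step~3 multiplies the $\Pi$-component by~$-1$ and hence acts as the reflection~${I_\regQ - 2\Pi}$. The final measurement in Step~5 is the projective measurement~${\{\Delta, I_\regQ - \Delta\}}$, with acceptance occurring exactly on the outcome~${I_\regQ - \Delta}$. Thus, for any received state~$\ket{\chi}$ supported on the $\Delta$-subspace, the acceptance probability equals ${\norm{(I_\regQ - \Delta)\,W\ket{\chi}}^2 = 1 - \norm{\Delta\,W\ket{\chi}}^2}$.

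The key step is the operator identity ${\Delta\,W\,\Delta = \Delta\conjugate{U}(I_\regQ - 2\Pi)U\Delta = \Delta - 2M}$, which follows from ${\Delta\conjugate{U}U\Delta = \Delta}$ together with the definition ${M = \Delta\conjugate{U}\Pi U\Delta}$. For completeness I would take ${\ket{\chi} = \ket{\phi_\lambda}}$; since the received state lies in the $\Delta$-subspace we have ${\Delta\ket{\phi_\lambda} = \ket{\phi_\lambda}}$, so ${\Delta W\ket{\phi_\lambda} = (\Delta - 2M)\ket{\phi_\lambda} = (1 - 2\lambda)\ket{\phi_\lambda}}$, whence the acceptance probability is ${1 - (1-2\lambda)^2 = 4\lambda(1-\lambda)}$, as claimed.

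For soundness I would expand an arbitrary received state in an orthonormal eigenbasis~$\{\ket{\phi_{\lambda_i}}\}$ of the restriction of~$M$ to the $\Delta$-subspace (Hermitian, with spectrum in~${[0,1]}$), writing ${\ket{\chi} = \sum_i c_i\ket{\phi_{\lambda_i}}}$. The identity above diagonalizes the computation and yields acceptance probability ${\sum_i \abs{c_i}^2\,4\lambda_i(1-\lambda_i)}$, a convex combination of the values~${4\lambda_i(1-\lambda_i)}$. The function ${g(\lambda) = 4\lambda(1-\lambda)}$ is a downward parabola symmetric about~${\lambda = \tfrac{1}{2}}$ with ${g\bigl(\tfrac{1}{2} \pm \varepsilon\bigr) = 1 - 4\varepsilon^2}$, so the hypothesis that no eigenvalue lies in~${\bigl(\tfrac{1}{2} - \varepsilon, \tfrac{1}{2} + \varepsilon\bigr)}$ forces ${g(\lambda_i) \le 1 - 4\varepsilon^2}$ for every~$i$, and the convex combination is therefore bounded by~${1 - 4\varepsilon^2}$.

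The main obstacle I anticipate is bookkeeping rather than conceptual: I must ensure the spectral decomposition is taken over the $\Delta$-subspace, so that ${\Delta\ket{\phi_{\lambda_i}} = \ket{\phi_{\lambda_i}}}$ may be used freely when applying ${\Delta W \Delta = \Delta - 2M}$, and confirm that any eigenvectors of~$M$ with eigenvalue~$0$ lying outside the $\Delta$-subspace are irrelevant, since the received state is guaranteed to sit inside that subspace. Once this setup is in place, both parts follow immediately from the quadratic formula for~$g$.
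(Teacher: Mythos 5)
Your proof is correct and complete: the identity ${\Delta \conjugate{U} (I_\regQ - 2\Pi) U \Delta = \Delta - 2M}$, diagonalization of $M$ restricted to the $\Delta$-subspace (where the received state is promised to lie, which disposes of the zero-eigenvalue vectors outside it), and convexity of ${\sum_i \abs{c_i}^2 \, 4 \lambda_i (1 - \lambda_i)}$ together give exactly the stated completeness and soundness bounds. Note that the paper offers no proof of this proposition at all---it is imported from Ref.~\cite{KobLeGNis15SIComp}---but your direct operator-identity computation is the standard argument and is in precisely the same style as the paper's own proofs of Propositions~\ref{Proposition: eigenstate of AND-Type Repetition Procedure} and~\ref{Proposition: eigenstate of OR-Type Repetition Procedure}, so it fills the omitted details correctly.
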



\section{Space-Efficient Amplification Methods}
\label{Section: space-efficient amplification methods}

This section rigorously proves Theorem~\ref{Theorem: main theorem} in the three different ways.

Throughout this section,
consider any QMA-type computation for a problem~${A = (A_\yes, A_\no)}$
induced by a family~${\{V_x\}_{x \in \Sigma^\ast}}$
of a unitary transformation~$V_x$ of the verifier on input~$x$ in $\Sigma^\ast$
that acts over a quantum register~${\regQ = (\regV, \regM)}$,
where $\regV$ is the quantum register consisting of all the private qubits of the verifier,
and $\regM$ is the one for storing a received quantum witness.
Let $\Pi_\init$ be the projection onto the subspace spanned by
the legal initial states of the QMA-type computation induced by $V_x$
(i.e., the subspace spanned by those in which all the qubits in $\regV$ is in state~$\ket{0}$)
and let $\Pi_\acc$ be the projection onto the subspace spanned by
the accepting states of the QMA-type computation associated with $V_x$
(i.e., the subspace spanned by states for which the designated output qubit of $V_x$ is in state~$\ket{0}$).
The maximum eigenvalue of the Hermitian operator~%
${M_x = \Pi_\init \conjugate{V_x} \Pi_\acc V_x \Pi_\init}$
exactly corresponds to
the maximum acceptance probability of the verifier on input~$x$
over all possible quantum witnesses received in $\regM$.
Hence, $M_x$ has an eigenvalue at least ${\op{c}(\abs{x})}$ if $x$ is in $A_\yes$,
while all eigenvalues of $M_x$ are at most ${\op{s}(\abs{x})}$ if $x$ is in $A_\no$,
where $\function{c,s}{\Nonnegative}{[0,1]}$ are functions
that provide completeness and soundness conditions of
the QMA-type computation induced by ${\{V_x\}_{x \in \Sigma^\ast}}$, respectively.


\subsection{Simple Construction Based on Phase Estimation}
\label{Subsection: simple construction based on phase estimation}

The first proof is via the simple construction based on phase estimation.


\paragraph{Mild amplification with a phase estimation}

Fix a function~$\function{p}{\Nonnegative}{\Natural}$
and functions~$\function{c,s}{\Nonnegative}{[0,1]}$ satisfying ${c > s}$,
arbitrarily.
Let $\function{l}{\Nonnegative}{\Natural}$ be a function defined by
\[
  l = \biggceil{\log \frac{2 \pi}{\arccos \sqrt{s} - \arccos \sqrt{c}}},
\]
and let $\function{t}{\Nonnegative}{\bigl[ 0, \frac{1}{2} \bigr]}$ be a function
such that, for every nonnegative integer~$n$, ${\op{t}(n)}$ is an approximation of
${\frac{1}{2 \pi} \bigl( \arccos \sqrt{\op{c}(n)} + \arccos \sqrt{\op{s}(n)} \bigr)}$
with ${\op{l}(n)}$-bit precision.

Fix an input~$x$.
Given the triplet~${(V_x, \Pi_\init, \Pi_\acc)}$,
one constructs the \textsc{One-Shot Phase-Estimation Procedure}
associated with
${\bigl( V_x, \Pi_\init, \Pi_\acc, \op{t}(\abs{x}), \op{l}(\abs{x}), \frac{1}{\op{p}(\abs{x})} \bigr)}$.
The resulting procedure is called the \textsc{Mild Amplification with Phase Estimation},
and is summarized in Figure~\ref{Figure: Mild Amplification with Phase Estimation}.
\begin{figure}[t!]
  \begin{algorithm*}
        {
          \textsc{Mild Amplification with Phase Estimation}
          associated with $\boldsymbol{(V_x, p)}$
        }
    \begin{flushright}
      \begin{minipage}{0.9727\textwidth}
        Define a function~$\function{l}{\Nonnegative}{\Natural}$ by
        ${
          l = \bigceil{\log \frac{2 \pi}{\arccos \sqrt{s} - \arccos \sqrt{c}}}
        }$
        and let $\function{t}{\Nonnegative}{\bigl[ 0, \frac{1}{2} \bigr]}$ be a function
        such that, for every nonnegative integer~$n$, ${\op{t}(n)}$ is an approximation of
        ${\frac{1}{2 \pi} \bigl( \arccos \sqrt{\op{c}(n)} + \arccos \sqrt{\op{s}(n)} \bigr)}$
        with ${\op{l}(n)}$-bit precision.
        Let $\Pi_\init$ and $\Pi_\acc$ be the projections onto the subspaces
        spanned by the legal initial states and the accepting states, respectively,
        in the verification with $V_x$.\\
        Perform\hspace{-0.05mm} the\hspace{-0.05mm} \textsc{One-Shot\hspace{-0.05mm} Phase-Estimation\hspace{-0.05mm} Procedure}\hspace{-0.05mm}
        associated\hspace{-0.05mm} with\hspace{-0.46mm}
        ${\bigl( V_x,\hspace{-0.46mm} \Pi_\init,\hspace{-0.46mm} \Pi_\acc,\hspace{-0.46mm} \op{t}(\abs{x}),\hspace{-0.39mm} \op{l}(\abs{x}),\hspace{-0.46mm} \frac{1}{\op{p}(\abs{x})} \bigr)}$.
      \end{minipage}
    \end{flushright}
    \vspace{0.75\baselineskip}
  \end{algorithm*}
  \caption{
    The \textsc{Mild Amplification with Phase Estimation}.
  }
  \label{Figure: Mild Amplification with Phase Estimation}
\end{figure}

The following lemma is proved by using the \textsc{Mild Amplification with Phase Estimation}
combined with the properties of the \textsc{One-Shot Phase-Estimation Procedure}
stated in Proposition~\ref{Proposition: properties of One-Shot Phase-Estimation Procedure}.


\begin{lemma}
  For any functions~$\function{p, l_\regV, l_\regM}{\Nonnegative}{\Natural}$
  and any functions~$\function{c,s}{\Nonnegative}{[0,1]}$
  satisfying
  ${c > s}$,
  there exists a function~$\function{\delta}{\Nonnegative}{\Natural}$
  that is logarithmic with respect to ${\frac{p}{c - s}}$
  such that
  \[
    \unitaryQMASPACE[l_\regV, l_\regM](c, s)
    \subseteq
    \unitaryQMASPACE[l_\regV + \delta, l_\regM] \biggl( 1 - \frac{1}{p}, \frac{1}{p} \biggr).
  \]
  \label{Lemma: mild amplification}
\end{lemma}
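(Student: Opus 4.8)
The plan is to prove the lemma by running the \textsc{One-Shot Phase-Estimation Procedure} exactly once, with failure probability $\varepsilon = \frac{1}{\op{p}(\abs{x})}$, tuned so that its intrinsic $2^{-l}$ precision window cleanly separates yes instances from no instances. Recall from the preamble to this section that the maximum acceptance probability of $V_x$ equals the largest eigenvalue of $M_x = \Pi_\init \conjugate{V_x} \Pi_\acc V_x \Pi_\init$, so $M_x$ has an eigenvalue at least $\op{c}(\abs{x})$ when $x \in A_\yes$ and all eigenvalues at most $\op{s}(\abs{x})$ when $x \in A_\no$. Under the monotone decreasing map $\lambda \mapsto \frac{1}{\pi}\arccos\sqrt{\lambda}$, the yes eigenvalue sits in the angle interval $\bigl[0, \frac{1}{\pi}\arccos\sqrt{\op{c}(\abs{x})}\bigr]$ while every no eigenvalue sits in $\bigl[\frac{1}{\pi}\arccos\sqrt{\op{s}(\abs{x})}, \frac{1}{2}\bigr]$, and these are separated by an angle gap of $\frac{1}{\pi}\bigl(\arccos\sqrt{\op{s}(\abs{x})} - \arccos\sqrt{\op{c}(\abs{x})}\bigr) > 0$.

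First I would fix $l$ and $t$ as in Figure~\ref{Figure: Mild Amplification with Phase Estimation}: take $\op{l}(n) = \bigceil{\log\frac{2\pi}{\arccos\sqrt{\op{s}(n)} - \arccos\sqrt{\op{c}(n)}}}$ so that $2^{-\op{l}(n)}$ is a small fraction of the angle gap, and take $\op{t}(n)$ to be an $\op{l}(n)$-bit approximation of the midpoint $\frac{1}{2\pi}\bigl(\arccos\sqrt{\op{c}(n)} + \arccos\sqrt{\op{s}(n)}\bigr)$. With $t$ placed essentially at the midpoint, the completeness window $\frac{1}{\pi}\arccos\sqrt{\lambda} \le t - 2^{-l}$ is met by the top eigenvalue $\lambda \ge \op{c}(\abs{x})$, and the soundness window $\frac{1}{\pi}\arccos\sqrt{\lambda} \ge t + 2^{-l}$ is met by every eigenvalue $\lambda \le \op{s}(\abs{x})$. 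Then Proposition~\ref{Proposition: properties of One-Shot Phase-Estimation Procedure} applies verbatim with $U = V_x$, $\Delta = \Pi_\init$, $\Pi = \Pi_\acc$, and $\varepsilon = \frac{1}{\op{p}(\abs{x})}$.

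Next I would translate the eigenvalue statements into acceptance conditions. On a yes instance the top eigenvector of $M_x$ lies in the range of $\Pi_\init$, hence has the form $\ket{0}^{\tensor \op{l_\regV}(\abs{x})} \tensor \ket{\psi}$; sending the pure state $\ket{\psi}$ as the witness in $\regM$ feeds exactly this eigenstate into Step~1, so the completeness clause gives acceptance probability at least $1 - \frac{1}{\op{p}(\abs{x})}$. On a no instance every witness $\rho$ yields an initial state $\ket{0}^{\tensor \op{l_\regV}(\abs{x})} \tensor \rho$ in the $\Pi_\init$ subspace, and the soundness clause, which holds regardless of the received state, gives acceptance probability at most $\frac{1}{\op{p}(\abs{x})}$. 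The witness register $\regM$ is untouched, so $l_\regM$ is unchanged; the only new registers are the phase-estimation ancillas, which are private and initialized to $\ket{0}$, and the procedure is unitary, since its controlled powers of the Grover-type iterate $Q = (2\conjugate{V_x}\Pi_\acc V_x - I_\regQ)(2\Pi_\init - I_\regQ)$ are built from $V_x$, $\conjugate{V_x}$, the reflections $2\Pi_\acc - I_\regQ$ and $2\Pi_\init - I_\regQ$, and an inverse quantum Fourier transform, with the final accept/reject predicate written reversibly into the output qubit. Hence the result is a legal $(l_\regV + \delta, l_\regM)$-space-bounded unitary QMA-type computation.

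Finally I would bound $\delta$. The \textsc{One-Shot Phase-Estimation Procedure} uses $\op{l}(\abs{x}) + \bigceil{\log\bigl(2 + \frac{1}{2\varepsilon}\bigr)}$ ancilla qubits, which with $\varepsilon = \frac{1}{\op{p}(\abs{x})}$ equals $\op{l}(\abs{x}) + \bigceil{\log\bigl(2 + \frac{\op{p}(\abs{x})}{2}\bigr)}$; the second term is $\op{O}(\log \op{p}(\abs{x}))$. For the first, the key estimate is $\arccos\sqrt{\op{s}(\abs{x})} - \arccos\sqrt{\op{c}(\abs{x})} \ge \op{c}(\abs{x}) - \op{s}(\abs{x})$, which follows because $\arccos\sqrt{\op{s}} - \arccos\sqrt{\op{c}} = \int_{\op{s}}^{\op{c}} \frac{dx}{2\sqrt{x(1-x)}}$ and the integrand is at least $1$ on $[0,1]$; thus $\op{l}(\abs{x}) \le \bigceil{\log\frac{2\pi}{\op{c}(\abs{x}) - \op{s}(\abs{x})}} = \op{O}\bigl(\log\frac{1}{c-s}\bigr)$. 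Summing gives $\delta = \op{O}\bigl(\log\frac{p}{c-s}\bigr)$, logarithmic with respect to $\frac{p}{c-s}$ as required. The main obstacle I expect is not any deep step but the bookkeeping at the window boundary: since $t$ is only an $l$-bit approximation of the midpoint, one must check that both margins $t - 2^{-l}$ and $t + 2^{-l}$ still land on the correct sides of $\frac{1}{\pi}\arccos\sqrt{\op{c}}$ and $\frac{1}{\pi}\arccos\sqrt{\op{s}}$, which may force $l$ to carry a small additive constant of extra precision beyond the bare $\log\frac{2\pi}{\text{gap}}$ without affecting the asymptotics.
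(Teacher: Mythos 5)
Your proposal is correct and follows essentially the same route as the paper's own proof: a single run of the \textsc{One-Shot Phase-Estimation Procedure} associated with ${\bigl( V_x, \Pi_\init, \Pi_\acc, \op{t}(\abs{x}), \op{l}(\abs{x}), \frac{1}{\op{p}(\abs{x})} \bigr)}$, invoking Proposition~\ref{Proposition: properties of One-Shot Phase-Estimation Procedure} for completeness/soundness and counting the ${\op{l}(\abs{x}) + \bigceil{\log \bigl( \frac{\op{p}(\abs{x})}{2} + 2 \bigr)}}$ ancilla qubits for the space bound. You in fact supply details the paper delegates to Ref.~\cite{NagWocZha09QIC} (the bound ${\arccos \sqrt{s} - \arccos \sqrt{c} \geq c - s}$ and the placement of the threshold~$t$ at the midpoint), and your closing remark about the rounding of $t$ possibly costing a constant number of extra bits of precision is a legitimate subtlety that the paper's proof also leaves implicit.
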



\begin{proof}
  Let ${A = (A_\yes, A_\no)}$ be a problem in ${\unitaryQMASPACE[l_\regV, l_\regM](c, s)}$,
  and let ${V = \{ V_x \}_{x \in \Sigma^\ast}}$
  be the ${(l_\regV, l_\regM)}$-space-bounded quantum verifier witnessing this membership.
  Fix a function~$\function{p}{\Nonnegative}{\Natural}$ and an input~$x$ in $\Sigma^\ast$.
  Consider the \textsc{One-Shot Phase-Estimation Procedure}
  associated with
  ${\bigl( V_x, \Pi_\init, \Pi_\acc, \op{t}(\abs{x}), \op{l}(\abs{x}), \frac{1}{\op{p}(\abs{x})} \bigr)}$,
  which is exactly
  what the \textsc{Mild Amplification with Phase Estimation} associated with ${(V_x, p)}$ performs.

  From Proposition~\ref{Proposition: properties of One-Shot Phase-Estimation Procedure},
  it holds that,
  if $x$ is in $A_\yes$,
  the \textsc{One-Shot Phase-Estimation Procedure}
  associated with
  ${\bigl( V_x, \Pi_\init, \Pi_\acc, \op{t}(\abs{x}), \op{l}(\abs{x}), \frac{1}{\op{p}(\abs{x})} \bigr)}$
  results in acceptance with probability at least~${1 - \frac{1}{\op{p}(\abs{x})}}$,
  while if $x$ is in $A_\no$,
  it results in acceptance with probability at most~$\frac{1}{\op{p}(\abs{x})}$,
  which shows the completeness and soundness.

  The \textsc{One-Shot Phase-Estimation Procedure}
  associated with
  ${\bigl( V_x, \Pi_\init, \Pi_\acc, \op{t}(\abs{x}), \op{l}(\abs{x}), \frac{1}{\op{p}(\abs{x})} \bigr)}$
  uses extra workspace of
  ${\op{\delta}(\abs{x}) = \op{l}(\abs{x}) + \bigceil{\log \bigl( \frac{\op{p}(\abs{x})}{2} + 2 \bigr)}}$~qubits.
  As is proved in Ref~\cite{NagWocZha09QIC},
  the function~${l = \bigceil{\log \frac{2 \pi}{\arccos \sqrt{s} - \arccos \sqrt{c}}}}$
  is logarithmic with respect to $\frac{1}{c-s}$,
  and thus,
  the used extra workspace is logarithmic with respect to $\frac{p}{c-s}$, as claimed.
\end{proof}


\paragraph{Soundness error-reduction}

Again fix arbitrarily a function~$\function{p}{\Nonnegative}{\Natural}$
and functions~$\function{c,s}{\Nonnegative}{[0,1]}$ satisfying ${c > s}$,
and let $\function{N}{\Nonnegative}{\Natural}$ be a function defined by
\[
  N = \Bigceil{\frac{p}{2 \log (2p + 4)}}.
\]

Fix an input~$x$.
Given the pair~${(V_x, p)}$,
consider the \textsc{Mild Amplification with Phase Estimation}
associated with ${(V_x, 2p + 4)}$.
Let $V'_x$ be the unitary transformation induced by it,
let $\Pi'_\init$ be the projection onto the subspace spanned by the legal initial states of it,
and let $\Pi'_\acc$ be the projection onto the subspace spanned by the accepting states of it.
From the triplet~${\bigl( V'_x, \Pi'_\init, \Pi'_\acc \bigr)}$
and a positive integer~${\op{N}(\abs{x})}$,
one constructs the \textsc{AND-Type Repetition Procedure}
associated with ${\bigl( V'_x, \Pi'_\init, \Pi'_\acc, \op{N}(\abs{x}) \bigr)}$,
and performs it.
The resulting procedure is called the \textsc{Soundness Error Reduction},
and is summarized in Figure~\ref{Figure: Soundness Error Reduction}.
\begin{figure}[t!]
  \begin{algorithm*}
        {
          \textsc{Soundness Error Reduction}
          associated with $\boldsymbol{(V_x, p)}$
        }
    \begin{flushright}
      \begin{minipage}{0.9727\textwidth}
        Define a function~$\function{N}{\Nonnegative}{\Natural}$ by
        ${N = \bigceil{\frac{p}{2 \log (2p + 4)}}}$.
        Consider the \textsc{Mild Amplification with Phase Estimation}
        associated with ${(V_x, 2p + 4)}$.
        Let $V'_x$ be the unitary transformation induced by it,
        let $\Pi'_\init$ be the projection onto the subspace spanned by the legal initial states of it,
        and let $\Pi'_\acc$ be the projection onto the subspace spanned by the accepting states of it.\\
        Perform the \textsc{AND-Type Repetition Procedure}
        associated with ${\bigl( V'_x, \Pi'_\init, \Pi'_\acc, \op{N}(\abs{x}) \bigr)}$.
      \end{minipage}
    \end{flushright}
    \vspace{0.75\baselineskip}
  \end{algorithm*}
  \caption{
    The \textsc{Soundness Error Reduction}.
  }
  \label{Figure: Soundness Error Reduction}
\end{figure}

The following lemma is proved by using the \textsc{Soundness Error Reduction}
combined with the properties of the \textsc{Mild Amplification with Phase Estimation}
used for proving Lemma~\ref{Lemma: mild amplification}.

\begin{lemma}
  For any functions~$\function{p, l_\regV, l_\regM}{\Nonnegative}{\Natural}$
  and any functions~$\function{c,s}{\Nonnegative}{[0,1]}$
  satisfying
  ${c > s}$,
  there exists a function~$\function{\delta}{\Nonnegative}{\Natural}$
  that is logarithmic with respect to ${\frac{p}{c - s}}$
  such that
  \[
    \unitaryQMASPACE[l_\regV, l_\regM](c, s)
    \subseteq
    \unitaryQMASPACE[l_\regV + \delta, l_\regM] \biggl( \frac{1}{2}, 2^{-p} \biggr).
  \]
  \label{Lemma: soundness error reduction}
\end{lemma}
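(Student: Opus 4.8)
The plan is to prove the lemma by chaining two amplification primitives that are already available: the mild amplification of Lemma~\ref{Lemma: mild amplification}, but invoked with the target polynomial $2p+4$ in place of $p$, followed by a single application of the \textsc{AND-Type Repetition Procedure} to collapse the soundness error. This is exactly the \textsc{Soundness Error Reduction} of Figure~\ref{Figure: Soundness Error Reduction}, and the analysis amounts to reading off the right parameters from Propositions already proved. Concretely, starting from $A = (A_\yes, A_\no) \in \unitaryQMASPACE[l_\regV, l_\regM](c,s)$ with verifier $V = \{V_x\}$, applying Lemma~\ref{Lemma: mild amplification} with $p$ replaced by $2p+4$ produces a verifier $V' = \{V'_x\}$ using extra workspace $\delta_1$ that is logarithmic with respect to $\frac{2p+4}{c-s}$, hence with respect to $\frac{p}{c-s}$, with completeness $1 - \frac{1}{2p+4}$ and soundness $\frac{1}{2p+4}$.

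The conceptual glue is the identity recorded at the start of this section: the maximum eigenvalue of $M'_x = \Pi'_\init \conjugate{(V'_x)} \Pi'_\acc V'_x \Pi'_\init$ equals the maximum acceptance probability of $V'_x$ over all witnesses. Translating the completeness and soundness of $V'$ through this identity, $M'_x$ has an eigenvalue at least $1 - \frac{1}{2p+4}$ when $x \in A_\yes$, while all eigenvalues of $M'_x$ are at most $\frac{1}{2p+4}$ when $x \in A_\no$. I would then apply the \textsc{AND-Type Repetition Procedure} associated with $\bigl(V'_x, \Pi'_\init, \Pi'_\acc, \op{N}(\abs{x})\bigr)$ for $N = \Bigceil{\frac{p}{2\log(2p+4)}}$ and invoke Proposition~\ref{Proposition: properties of AND-Type Repetition Procedure}. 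For $x \in A_\yes$, feeding the top eigenstate of $M'_x$ (the enlarged private register initialized to $\ket{0}$ and the prover supplying the optimal witness in $\regM$) gives acceptance probability at least $\bigl(1 - \frac{1}{2p+4}\bigr)^{2N}$; for $x \in A_\no$ the soundness clause gives acceptance probability at most $\bigl(\frac{1}{2p+4}\bigr)^{2N}$ regardless of the witness.

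The soundness estimate is immediate from the choice of $N$: since $2N\log(2p+4) \geq p$, we get $\bigl(\frac{1}{2p+4}\bigr)^{2N} = 2^{-2N\log(2p+4)} \leq 2^{-p}$. The step I expect to require the most care is the completeness bound $\bigl(1 - \frac{1}{2p+4}\bigr)^{2N} \geq \frac{1}{2}$, where one must verify that $N$ is small enough that the $2N$ factors of $1 - \frac{1}{2p+4}$ do not drop the probability below $\frac{1}{2}$. Using $\ln(1-\eta) \geq -\frac{\eta}{1-\eta}$ with $\eta = \frac{1}{2p+4}$ together with $N \leq \frac{p}{2\log(2p+4)} + 1$ yields the bound; the deliberate use of $2p+4$ rather than $p$ is precisely what buys simultaneous room for the $\frac{1}{2}$ completeness and the $2^{-p}$ soundness.

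Finally, for the space bound, the \textsc{AND-Type Repetition Procedure} adds only the counter register $\regC$ of $\ceil{\log(2\op{N}(\abs{x})+1)}$ qubits, which is $\op{O}(\log p)$ and hence logarithmic with respect to $\frac{p}{c-s}$, while the witness register $\regM$ is reused across all $N$ iterations so its size is unchanged. Thus $\delta = \delta_1 + \ceil{\log(2N+1)}$ is logarithmic with respect to $\frac{p}{c-s}$, which establishes the claimed containment ${\unitaryQMASPACE[l_\regV, l_\regM](c, s) \subseteq \unitaryQMASPACE[l_\regV + \delta, l_\regM]\bigl(\frac{1}{2}, 2^{-p}\bigr)}$.
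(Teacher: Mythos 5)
Your proposal is correct and takes essentially the same route as the paper's own proof: invoke the mild amplification lemma with parameter ${2p+4}$, translate its completeness/soundness into eigenvalue bounds on ${M'_x}$, apply the \textsc{AND-Type Repetition Procedure} with ${N = \bigceil{\frac{p}{2\log(2p+4)}}}$, and account for the extra space as ${\delta = \delta_1 + \ceil{\log(2N+1)}}$. The only cosmetic difference is in the elementary completeness estimate (your bound via ${\ln(1-\eta) \geq -\frac{\eta}{1-\eta}}$ versus the paper's ${\bigl(1 - \frac{1}{2p+4}\bigr)^{p+2} > \frac{1}{2}}$), which is immaterial.
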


\begin{proof}
  Let ${A = (A_\yes, A_\no)}$ be a problem in ${\unitaryQMASPACE[l_\regV, l_\regM](c, s)}$,
  and let ${V = \{ V_x \}_{x \in \Sigma^\ast}}$
  be the ${(l_\regV, l_\regM)}$-space-bounded quantum verifier witnessing this membership.
  Fix a function~$\function{p}{\Nonnegative}{\Natural}$ and an input~$x$ in $\Sigma^\ast$.
  The lemma is proved by considering the \textsc{Soundness Error Reduction} associated with ${(V_x, p)}$.

  First consider the \textsc{Mild Amplification with Phase Estimation}
  associated with ${(V_x, 2p + 4)}$.
  Let $V'_x$ be the unitary transformation induced by it,
  let $\Pi'_\init$ be the projection onto the subspace spanned by the legal initial states of it,
  and let $\Pi'_\acc$ be the projection onto the subspace spanned by the accepting states of it.

  Lemma~\ref{Lemma: mild amplification} and its proof
  ensure that
  $A$ is in
  ${\unitaryQMASPACE[l_\regV + \delta_1, l_\regM] \bigl( 1 - \frac{1}{2p + 4}, \frac{1}{2p + 4} \bigr)}$
  for some function~$\function{\delta_1}{\Nonnegative}{\Natural}$
  that is logarithmic with respect to ${\frac{p}{c - s}}$,
  and this inclusion is certified by
  the \textsc{Mild Amplification with Phase Estimation}
  associated with ${(V_x, 2p + 4)}$.
  This in particular implies that
  the Hermitian operator~%
  ${M'_x = \Pi'_\init \conjugate{(V'_x)} \Pi'_\acc V'_x \Pi'_\init}$
  has an eigenvalue at least~%
  ${1 - \frac{1}{2 \op{p}(\abs{x}) + 4}}$
  if $x$ is in $A_\yes$,
  while all the eigenvalues of $M'_x$ are at most~%
  $\frac{1}{2 \op{p}(\abs{x}) + 4}$
  if $x$ is in $A_\no$.

  Now consider the \textsc{AND-Type Repetition Procedure}
  associated with ${\bigl( V'_x, \Pi'_\init, \Pi'_\acc, \op{N}(\abs{x}) \bigr)}$,
  which is exactly what the \textsc{Soundness Error Reduction} associated with ${(V_x, p)}$ performs.
  By Proposition~\ref{Proposition: properties of AND-Type Repetition Procedure},
  the \textsc{AND-Type Repetition Procedure}
  associated with ${\bigl( V'_x, \Pi'_\init, \Pi'_\acc, \op{N}(\abs{x}) \bigr)}$
  results in acceptance with probability at least
  \[
    \biggl( 1 - \frac{1}{2 \op{p}(\abs{x}) + 4} \biggr)^{2 \op{N}(\abs{x})}
    >
    \biggl( 1 - \frac{1}{2 \op{p}(\abs{x}) + 4} \biggr)^{\op{p}(\abs{x}) + 2}
    >
    \frac{1}{2}
  \]
  if $x$ is in $A_\yes$,
  and at most
  \[
    \biggl( \frac{1}{2 \op{p}(\abs{x}) + 4} \biggr)^{2 \op{N}(\abs{x})}
    \leq
    \Bigl( 2^{- \log (2 \op{p}(\abs{x}) + 4)} \Bigr)^{\frac{\op{p}(\abs{x})}{\log (2 \op{p}(\abs{x}) + 4)}}
    =
    2^{- \op{p}(\abs{x})}
  \]
  if $x$ is in $A_\no$,
  and the completeness and soundness follows.

  The \textsc{AND-Type Repetition Procedure}
  associated with
  ${\bigl( V'_x, \Pi'_\init, \Pi'_\acc, \op{N}(\abs{x}) \bigr)}$
  uses extra workspace
  (relative to $V'_x$)
  of ${\op{\delta}_2 (\abs{x})}$~qubits
  for the function~$\function{\delta_2}{\Nonnegative}{\Natural}$ defined by
  ${\delta_2 = \ceil{\log (2N + 1)}}$.
  As ${N = \bigceil{\frac{p}{2 \log (2p + 4)}}}$,
  $\delta_2$ is clearly logarithmic with respect to $p$, and thus, with respect to $\frac{p}{c-s}$ also.
  Hence, the \textsc{Soundness Error Reduction} associated with ${(V_x, p)}$
  uses extra workspace
  (relative to $V_x$)
  of logarithmically many qubits with respect to $\frac{p}{c-s}$ also
  (which is determined by a function~${\delta = \delta_1 + \delta_2}$),
  as desired.
\end{proof}


\paragraph{Space-efficient error reduction based on phase estimation}

Again fix arbitrarily a function~$\function{p}{\Nonnegative}{\Natural}$
and functions~$\function{c,s}{\Nonnegative}{[0,1]}$ satisfying ${c > s}$,
and let $\function{N}{\Nonnegative}{\Natural}$ be a function defined by
\[
  N = \Bigceil{\frac{p}{2}}.
\]

Fix an input~$x$.
Given the pair~${(V_x, p)}$,
consider the \textsc{Soundness Error Reduction}
associated with ${\bigl( V_x, p + \ceil{\log (p + 2)} \bigr)}$.
Let $V'_x$ be the unitary transformation induced by it,
let $\Pi'_\init$ be the projection onto the subspace spanned by the legal initial states of it,
and let $\Pi'_\acc$ be the projection onto the subspace spanned by the accepting states of it.
From the triplet~${\bigl( V'_x, \Pi'_\init, \Pi'_\acc \bigr)}$
and a positive integer~${\op{N}(\abs{x})}$,
one constructs the \textsc{OR-Type Repetition Procedure}
associated with ${\bigl( V'_x, \Pi'_\init, \Pi'_\acc, \op{N}(\abs{x}) \bigr)}$,
and performs it.
The resulting procedure is called
the \textsc{Space-Efficient Error Reduction Based on Phase Estimation},
and is summarized in Figure~\ref{Figure: Space-Efficient Error Reduction Based on Phase Estimation}.
\begin{figure}[t!]
  \begin{algorithm*}
        {
          \textsc{Space-Efficient Error Reduction Based on Phase Estimation}
          associated with $\boldsymbol{(V_x, p)}$
        }
    \begin{flushright}
      \begin{minipage}{0.9727\textwidth}
        Define a function~$\function{N}{\Nonnegative}{\Natural}$ by
        ${N = \bigceil{\frac{p}{2}}}$.
        Consider the \textsc{Soundness Error Reduction}
        associated with ${\bigl( V_x, p + \ceil{\log (p + 2)} \bigr)}$.
        Let $V'_x$ be the unitary transformation induced by it,
        let $\Pi'_\init$ be the projection onto the subspace spanned by the legal initial states of it,
        and let $\Pi'_\acc$ be the projection onto the subspace spanned by the accepting states of it.\\
        Perform the \textsc{OR-Type Repetition Procedure}
        associated with ${\bigl( V'_x, \Pi'_\init, \Pi'_\acc, \op{N}(\abs{x}) \bigr)}$.
      \end{minipage}
    \end{flushright}
    \vspace{0.75\baselineskip}
  \end{algorithm*}
  \caption{
    The \textsc{Space-Efficient Error Reduction Based on Phase Estimation}.
  }
  \label{Figure: Space-Efficient Error Reduction Based on Phase Estimation}
\end{figure}

Now Theorem~\ref{Theorem: main theorem},
the main theorem of this paper,
is ready to be proved
by using the \textsc{Space-Efficient Error Reduction Based on Phase Estimation}
combined with the properties of the \textsc{Soundness Error Reduction}
used for proving Lemma~\ref{Lemma: soundness error reduction}.

\begin{proof}
      [%
        Proof of Theorem~\ref{Theorem: main theorem}
        (via the simple construction based on phase estimation)
      ]
  Let ${A = (A_\yes, A_\no)}$ be a problem in ${\unitaryQMASPACE[l_\regV, l_\regM](c, s)}$,
  and let ${V = \{ V_x \}_{x \in \Sigma^\ast}}$
  be the ${(l_\regV, l_\regM)}$-space-bounded quantum verifier witnessing this membership.
  Fix a function~$\function{p}{\Nonnegative}{\Natural}$ and an input~$x$ in $\Sigma^\ast$.
  The theorem is proved by considering
  the \textsc{Space-Efficient Error Reduction Based on Phase Estimation} associated with ${(V_x, p)}$.

  First consider the \textsc{Soundness Error Reduction}
  associated with ${\bigl( V_x, p + \ceil{\log (p + 2)} \bigr)}$.
  Let $V'_x$ be the unitary transformation induced by it,
  let $\Pi'_\init$ be the projection onto the subspace spanned by the legal initial states of it,
  and let $\Pi'_\acc$ be the projection onto the subspace spanned by the accepting states of it.

  Lemma~\ref{Lemma: soundness error reduction} and its proof
  ensure that
  $A$ is in
  ${\unitaryQMASPACE[l_\regV + \delta_1, l_\regM] \bigl( \frac{1}{2}, \frac{1}{p + 2} \cdot 2^{-p} \bigr)}$
  for some function~$\function{\delta_1}{\Nonnegative}{\Natural}$
  that is logarithmic with respect to ${\frac{p}{c - s}}$,
  and this inclusion is certified by
  the \textsc{Soundness Error Reduction}
  associated with ${\bigl( V_x, p + \ceil{\log (p + 2)} \bigr)}$.
  This in particular implies that
  the Hermitian operator~%
  ${M'_x = \Pi'_\init \conjugate{(V'_x)} \Pi'_\acc V'_x \Pi'_\init}$
  has an eigenvalue at least~%
  $\frac{1}{2}$
  if $x$ is in $A_\yes$,
  while all the eigenvalues of $M'_x$ are at most~%
  ${\frac{1}{\op{p}(\abs{x}) + 2} \cdot 2^{- \op{p}(\abs{x})}}$
  if $x$ is in $A_\no$.

  Now consider the \textsc{OR-Type Repetition Procedure}
  associated with ${\bigl( V'_x, \Pi'_\init, \Pi'_\acc, \op{N}(\abs{x}) \bigr)}$,
  which is exactly what
  the \textsc{Space-Efficient Error Reduction Based on Phase Estimation} associated with ${(V_x, p)}$
  performs.
  By Proposition~\ref{Proposition: properties of OR-Type Repetition Procedure},
  the \textsc{OR-Type Repetition Procedure}
  associated with ${\bigl( V'_x, \Pi'_\init, \Pi'_\acc, \op{N}(\abs{x}) \bigr)}$
  results in acceptance with probability at least
  \[
    1 - \Bigl( 1 - \frac{1}{2} \Bigr)^{2 \op{N}(\abs{x})}
    \geq
    1 - 2^{- \op{p}(\abs{x})}
  \]
  if $x$ is in $A_\yes$,
  and at most
  \[
    1 - \biggl( 1 - \frac{1}{\op{p}(\abs{x}) + 2} \cdot 2^{- \op{p}(\abs{x})} \biggr)^{2 \op{N}(\abs{x})}
    <
    1 - \biggl( 1 - \frac{1}{\op{p}(\abs{x}) + 2} \cdot 2^{- \op{p}(\abs{x})} \biggr)^{\op{p}(\abs{x}) + 2}
    <
    2^{- \op{p}(\abs{x})}
  \]
  if $x$ is in $A_\no$,
  and the completeness and soundness follows.

  The \textsc{OR-Type Repetition Procedure}
  associated with
  ${\bigl( V'_x, \Pi'_\init, \Pi'_\acc, \op{N}(\abs{x}) \bigr)}$
  uses extra workspace
  (relative to $V'_x$)
  of ${\op{\delta}_2 (\abs{x})}$~qubits
  for the function~$\function{\delta_2}{\Nonnegative}{\Natural}$ defined by
  ${\delta_2 = \ceil{\log (2N + 1)}}$.
  As ${N = \bigceil{\frac{p}{2}}}$,
  $\delta_2$ is clearly logarithmic with respect to $p$, and thus, with respect to $\frac{p}{c-s}$ also.
  Hence,
  the \textsc{Space-Efficient Error Reduction Based on Phase Estimation} associated with ${(V_x, p)}$
  uses extra workspace
  (relative to $V_x$)
  of logarithmically many qubits with respect to $\frac{p}{c-s}$ also
  (which is determined by a function~${\delta = \delta_1 + \delta_2}$),
  as desired.
\end{proof}

Recall that
the necessary number of calls of the (controlled) unitary transformation~$U$ is
${2^l \cdot \bigceil{\frac{1}{2 \varepsilon} + 2} - 1}$
for a phase estimation associated with $U$
precise to $l$~bits with failure probability~$\varepsilon$~\cite{NieChu00Book}.
Hence, a straightforward calculation shows that
the \textsc{Space-Efficient Error Reduction Based on Phase Estimation} associated with ${(V_x, p)}$ uses
${\op{O} \bigl( \frac{1}{c - s} \cdot \frac{p^3}{\log p} \bigr)}$~calls
of $V_x$ and its inverse.


\subsection{Hybrid construction of phase estimation and Marriott-Watrous}
\label{Subsection: hybrid construction of phase estimation and Marriott-Watrous}

The second proof is based on the hybrid construction of phase estimation and Marriott-Watrous.


\paragraph{Very mild amplification with a phase estimation}

Fix functions~$\function{c,s}{\Nonnegative}{[0,1]}$ satisfying ${c > s}$,
arbitrarily.
Again let $\function{l}{\Nonnegative}{\Natural}$ be a function defined by
\[
  l = \biggceil{\log \frac{2 \pi}{\arccos \sqrt{s} - \arccos \sqrt{c}}},
\]
and let $\function{t}{\Nonnegative}{\bigl[ 0, \frac{1}{2} \bigr]}$ be a function
such that, for every nonnegative integer~$n$, ${\op{t}(n)}$ is an approximation of
${\frac{1}{2 \pi} \bigl( \arccos \sqrt{\op{c}(n)} + \arccos \sqrt{\op{s}(n)} \bigr)}$
with ${\op{l}(n)}$-bit precision.

Fix an input~$x$.
Given the triplet~${(V_x, \Pi_\init, \Pi_\acc)}$,
one constructs the \textsc{One-Shot Phase-Estimation Procedure}
associated with
${\bigl( V_x, \Pi_\init, \Pi_\acc, \op{t}(\abs{x}), \op{l}(\abs{x}), \frac{1}{4} \bigr)}$.
The resulting procedure is called the \textsc{Very Mild Amplification with Phase Estimation},
and is summarized in Figure~\ref{Figure: Very Mild Amplification with Phase Estimation}.
\begin{figure}[t!]
  \begin{algorithm*}
        {
          \textsc{Very Mild Amplification with Phase Estimation}
          associated with $\boldsymbol{V_x}$
        }
    \begin{flushright}
      \begin{minipage}{0.9727\textwidth}
        Define a function~$\function{l}{\Nonnegative}{\Natural}$ by
        ${
          l = \bigceil{\log \frac{2 \pi}{\arccos \sqrt{s} - \arccos \sqrt{c}}}
        }$
        and let $\function{t}{\Nonnegative}{\bigl[ 0, \frac{1}{2} \bigr]}$ be a function
        such that, for every nonnegative integer~$n$, ${\op{t}(n)}$ is an approximation of
        ${\frac{1}{2 \pi} \bigl( \arccos \sqrt{\op{c}(n)} + \arccos \sqrt{\op{s}(n)} \bigr)}$
        with ${\op{l}(n)}$-bit precision.
        Let $\Pi_\init$ and $\Pi_\acc$ be the projections onto the subspaces
        spanned by the legal initial states and the accepting states, respectively,
        in the verification with $V_x$.\\
        Perform the \textsc{One-Shot Phase-Estimation Procedure}
        associated with
        ${\bigl( V_x, \Pi_\init, \Pi_\acc, \op{t}(\abs{x}), \op{l}(\abs{x}), \frac{1}{4} \bigr)}$.
      \end{minipage}
    \end{flushright}
    \vspace{0.75\baselineskip}
  \end{algorithm*}
  \caption{
    The \textsc{Very Mild Amplification with Phase Estimation}.
  }
  \label{Figure: Very Mild Amplification with Phase Estimation}
\end{figure}

In fact, the \textsc{Very Mild Amplification with Phase Estimation} associated with $V_x$
is nothing but the \textsc{Mild Amplification with Phase Estimation} associated with ${(V_x, 4)}$.
Hence,
the following lemma is immediate by using the \textsc{Very Mild Amplification with Phase Estimation}
combined with Lemma~\ref{Lemma: mild amplification} and its proof.

\begin{lemma}
  For any functions~$\function{l_\regV, l_\regM}{\Nonnegative}{\Natural}$
  and any functions~$\function{c,s}{\Nonnegative}{[0,1]}$
  satisfying
  ${c > s}$,
  there exists a function~$\function{\delta}{\Nonnegative}{\Natural}$
  that is logarithmic with respect to ${\frac{1}{c - s}}$
  such that
  \[
    \unitaryQMASPACE[l_\regV, l_\regM](c, s)
    \subseteq
    \unitaryQMASPACE[l_\regV + \delta, l_\regM] \biggl( \frac{3}{4}, \frac{1}{4} \biggr).
  \]
  \label{Lemma: very mild amplification with a phase estimation}
\end{lemma}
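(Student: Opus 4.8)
The plan is to read off this lemma as the special case $p \equiv 4$ of Lemma~\ref{Lemma: mild amplification}, exploiting the fact that the \textsc{Very Mild Amplification with Phase Estimation} associated with $V_x$ coincides with the \textsc{Mild Amplification with Phase Estimation} associated with ${(V_x, 4)}$. First I would verify this coincidence directly from the two figures: both procedures invoke the \textsc{One-Shot Phase-Estimation Procedure} with the same precision~$\op{l}(\abs{x})$ and the same threshold~$\op{t}(\abs{x})$, the only differing parameter being the failure probability, which is $\frac{1}{\op{p}(\abs{x})}$ in the mild version and $\frac{1}{4}$ in the very mild version; these agree precisely when $p$ is the constant function~$4$.

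Having noted this, I would simply invoke Lemma~\ref{Lemma: mild amplification} with $p$ taken to be the constant function~$4$. Since ${1 - \frac{1}{4} = \frac{3}{4}}$, the lemma produces a function~$\delta$ with
\[
  \unitaryQMASPACE[l_\regV, l_\regM](c, s)
  \subseteq
  \unitaryQMASPACE[l_\regV + \delta, l_\regM] \biggl( \frac{3}{4}, \frac{1}{4} \biggr),
\]
which is exactly the asserted containment, certified by the \textsc{Very Mild Amplification with Phase Estimation}.

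The only point demanding a moment's attention --- and the closest thing here to an obstacle --- is the space bound, which I claim to be logarithmic with respect to $\frac{1}{c-s}$ rather than merely with respect to $\frac{p}{c-s}$ as Lemma~\ref{Lemma: mild amplification} delivers. This is immediate once $p$ is a constant: with $p \equiv 4$ the extra workspace from the proof of Lemma~\ref{Lemma: mild amplification} is ${\op{l}(\abs{x}) + \bigceil{\log \bigl( \frac{4}{2} + 2 \bigr)}} = \op{l}(\abs{x}) + 2$~qubits, and since $l = \bigceil{\log \frac{2 \pi}{\arccos \sqrt{s} - \arccos \sqrt{c}}}$ is logarithmic with respect to $\frac{1}{c-s}$ (as shown in Ref.~\cite{NagWocZha09QIC}), the additive constant~$2$ does not affect this, and $\delta$ is logarithmic with respect to $\frac{1}{c-s}$ as required. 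In short, I expect no genuine difficulty: the lemma is purely a constant-parameter specialization of the already-established mild amplification, and the proof reduces to confirming the two procedures are identical and tracking that a constant choice of $p$ drops out of the space estimate.
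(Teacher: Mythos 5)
Your proposal is correct and is essentially identical to the paper's own argument: the paper likewise observes that the \textsc{Very Mild Amplification with Phase Estimation} associated with $V_x$ is exactly the \textsc{Mild Amplification with Phase Estimation} associated with ${(V_x, 4)}$, and then invokes Lemma~\ref{Lemma: mild amplification} (and its proof) with the constant function~${p \equiv 4}$, so that the workspace bound ${\op{l}(\abs{x}) + 2}$ is logarithmic with respect to $\frac{1}{c-s}$. Your explicit tracking of the constant in the space estimate is exactly the point the paper leaves implicit, and nothing further is needed.
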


\paragraph{Mild amplification with Marriott-Watrous}

Fix a function~$\function{p}{\Nonnegative}{\Natural}$
and functions~$\function{c,s}{\Nonnegative}{[0,1]}$ satisfying ${c > s}$,
arbitrarily.
Let $\function{N}{\Nonnegative}{\Natural}$ be a function defined by
\[
  N = \biggceil{\frac{4 \log p}{\log e}}.
\]

Fix an input~$x$.
Given the pair~${(V_x, p)}$,
consider the \textsc{Very Mild Amplification with Phase Estimation} associated with $V_x$.
Let $V'_x$ be the unitary transformation induced by it,
let $\Pi'_\init$ be the projection onto the subspace spanned by the legal initial states of it,
and let $\Pi'_\acc$ be the projection onto the subspace spanned by the accepting states of it.
From the triplet~${\bigl( V'_x, \Pi'_\init, \Pi'_\acc \bigr)}$
and a positive integer~${\op{N}(\abs{x})}$,
one constructs the \textsc{Marriott-Watrous Amplification Procedure}
associated with ${\bigl( V'_x, \Pi'_\init, \Pi'_\acc, \op{N}(\abs{x}) \bigr)}$,
and performs it.
The resulting procedure is called the \textsc{Mild Amplification with Marriott-Watrous},
and is summarized in Figure~\ref{Figure: Mild Amplification with Marriott-Watrous}.
\begin{figure}[t!]
  \begin{algorithm*}
        {
          \textsc{Mild Amplification with Marriott-Watrous}
          associated with $\boldsymbol{(V_x, p)}$
        }
    \begin{flushright}
      \begin{minipage}{0.9727\textwidth}
        Define a function~$\function{N}{\Nonnegative}{\Natural}$ by
        ${N = \bigceil{\frac{4 \log p}{\log e}}}$.
        Consider the \textsc{Very Mild Amplification with Phase Estimation} associated with $V_x$.
        Let $V'_x$ be the unitary transformation induced by it,
        let $\Pi'_\init$ be the projection onto the subspace spanned by the legal initial states of it,
        and let $\Pi'_\acc$ be the projection onto the subspace spanned by the accepting states of it.\\
        Perform the \textsc{Marriott-Watrous Amplification Procedure}
        associated with ${\bigl( V'_x, \Pi'_\init, \Pi'_\acc, \op{N}(\abs{x}), \op{N}(\abs{x}) \bigr)}$.
      \end{minipage}
    \end{flushright}
    \vspace{0.75\baselineskip}
  \end{algorithm*}
  \caption{
    The \textsc{Mild Amplification with Marriott-Watrous}.
  }
  \label{Figure: Mild Amplification with Marriott-Watrous}
\end{figure}

Now Lemma~\ref{Lemma: mild amplification}
is alternatively proved by using the \textsc{Mild Amplification with Marriott-Watrous}
combined with the properties of the \textsc{Marriott-Watrous Amplification Procedure}
stated in Proposition~\ref{Proposition: properties of Marriott-Watrous Amplification Procedure}.


\begin{proof}
      [%
        Proof of Lemma~\ref{Lemma: mild amplification}
        (via the hybrid construction of phase estimation and Marriott-Watrous)
      ]
  Let ${A\hspace{-0.072mm} =\hspace{-0.072mm} (A_\yes,\hspace{-0.072mm} A_\no)}$
  be a problem in ${\unitaryQMASPACE[l_\regV, l_\regM](c, s)}$,
  and let ${V = \{ V_x \}_{x \in \Sigma^\ast}}$
  be the ${(l_\regV, l_\regM)}$-space-bounded quantum verifier witnessing this membership.
  Fix a function~$\function{p}{\Nonnegative}{\Natural}$ and an input~$x$ in $\Sigma^\ast$.
  The lemma is proved by considering
  the \textsc{Mild Amplification with Marriott-Watrous} associated with ${(V_x, p)}$.

  First consider the \textsc{Very Mild Amplification with Phase Estimation} associated with $V_x$.
  Let $V'_x$ be the unitary transformation induced by it,
  let $\Pi'_\init$ be the projection onto the subspace spanned by the legal initial states of it,
  and let $\Pi'_\acc$ be the projection onto the subspace spanned by the accepting states of it.

  Lemma~\ref{Lemma: very mild amplification with a phase estimation} and its proof
  ensure that
  $A$ is in
  ${\unitaryQMASPACE[l_\regV + \delta_1, l_\regM] \bigl( \frac{3}{4}, \frac{1}{4} \bigr)}$
  for some function~$\function{\delta_1}{\Nonnegative}{\Natural}$
  that is logarithmic with respect to ${\frac{1}{c - s}}$,
  and this inclusion is certified by
  the \textsc{Very Mild Amplification with Phase Estimation} associated with $V_x$.
  This in particular implies that
  the Hermitian operator~%
  ${M'_x = \Pi'_\init \conjugate{(V'_x)} \Pi'_\acc V'_x \Pi'_\init}$
  has an eigenvalue at least~$\frac{3}{4}$
  if $x$ is in $A_\yes$,
  while all the eigenvalues of $M'_x$ are at most~$\frac{1}{4}$
  if $x$ is in $A_\no$.

  Now consider the \textsc{Marriott-Watrous Amplification Procedure}
  associated with ${\bigl( V'_x, \Pi'_\init, \Pi'_\acc, \op{N}(\abs{x}), \op{N}(\abs{x}) \bigr)}$,
  which is exactly
  what the \textsc{Mild Amplification with Marriott-Watrous} associated with ${(V_x, p)}$ performs.
  By Proposition~\ref{Proposition: properties of Marriott-Watrous Amplification Procedure},
  the \textsc{Marriott-Watrous Amplification Procedure}
  associated with ${\bigl( V'_x, \Pi'_\init, \Pi'_\acc, \op{N}(\abs{x}), \op{N}(\abs{x}) \bigr)}$
  results in acceptance with probability at least
  \[
    1 - e^{- \frac{\op{N}(\abs{x})}{4}}
    \geq
    1 - e^{- \frac{\log \op{p}(\abs{x})}{\log e}}
    =
    1 - \frac{1}{\op{p}(\abs{x})}
  \]
  if $x$ is in $A_\yes$,
  and at most
  \[
    e^{- \frac{\op{N}(\abs{x})}{4}}
    \leq
    e^{- \frac{\log \op{p}(\abs{x})}{\log e}}
    =
    \frac{1}{\op{p}(\abs{x})}
  \]
  if $x$ is in $A_\no$,
  and the completeness and soundness follows.

  The \textsc{Marriott-Watrous Amplification Procedure}
  associated with ${\bigl( V'_x, \Pi'_\init, \Pi'_\acc, \op{N}(\abs{x}), \op{N}(\abs{x}) \bigr)}$
  uses extra workspace
  (relative to $V'_x$)
  of ${\op{\delta}_2 (\abs{x})}$~qubits
  for the function~$\function{\delta_2}{\Nonnegative}{\Natural}$ defined by
  ${\delta_2 = 2N + \ceil{\log (2N + 1)} + 1}$.
  As ${N = \bigceil{\frac{4 \log p}{\log e}}}$,
  $\delta_2$ is clearly logarithmic with respect to $p$, and thus, with respect to $\frac{p}{c-s}$ also.
  Hence, the \textsc{Mild Amplification with Marriott-Watrous} associated with ${(V_x, p)}$
  uses extra workspace
  (relative to $V_x$)
  of logarithmically many qubits with respect to $\frac{p}{c-s}$ also
  (which is determined by a function~${\delta = \delta_1 + \delta_2}$),
  as desired.
\end{proof}


\paragraph{Soundness error-reduction}

The rest of the construction is very similar to that in Subsection~\ref{Subsection: simple construction based on phase estimation}.

Again fix arbitrarily a function~$\function{p}{\Nonnegative}{\Natural}$
and functions~$\function{c,s}{\Nonnegative}{[0,1]}$ satisfying ${c > s}$,
and let $\function{N}{\Nonnegative}{\Natural}$ be a function defined by
\[
  N = \Bigceil{\frac{p}{2 \log (2p)}}.
\]

Fix an input~$x$.
Given the pair~${(V_x, p)}$,
consider the \textsc{Mild Amplification with Marriott-Watrous}
associated with ${(V_x, 4p^2)}$.
Let $V'_x$ be the unitary transformation induced by it,
let $\Pi'_\init$ be the projection onto the subspace spanned by the legal initial states of it,
and let $\Pi'_\acc$ be the projection onto the subspace spanned by the accepting states of it.
From the triplet~${\bigl( V'_x, \Pi'_\init, \Pi'_\acc \bigr)}$
and a positive integer~${\op{N}(\abs{x})}$,
one constructs the \textsc{AND-Type Repetition Procedure}
associated with ${\bigl( V'_x, \Pi'_\init, \Pi'_\acc, \op{N}(\abs{x}) \bigr)}$,
and performs it.
The resulting procedure is called
the \textsc{Soundness Error Reduction with Hybrid Construction},
and is summarized in Figure~\ref{Figure: Soundness Error Reduction with Hybrid Construction}.
\begin{figure}[t!]
  \begin{algorithm*}
        {
          \textsc{Soundness Error Reduction with Hybrid Construction}
          associated with $\boldsymbol{(V_x, p)}$
        }
    \begin{flushright}
      \begin{minipage}{0.9727\textwidth}
        Define a function~$\function{N}{\Nonnegative}{\Natural}$ by
        ${N = \bigceil{\frac{p}{2 \log (2p)}}}$.
        Consider the \textsc{Mild Amplification with Marriott-Watrous}
        associated with ${(V_x, 4p^2)}$.
        Let $V'_x$ be the unitary transformation induced by it,
        let $\Pi'_\init$ be the projection onto the subspace spanned by the legal initial states of it,
        and let $\Pi'_\acc$ be the projection onto the subspace spanned by the accepting states of it.\\
        Perform the \textsc{AND-Type Repetition Procedure}
        associated with ${\bigl( V'_x, \Pi'_\init, \Pi'_\acc, \op{N}(\abs{x}) \bigr)}$.
      \end{minipage}
    \end{flushright}
    \vspace{0.75\baselineskip}
  \end{algorithm*}
  \caption{
    The \textsc{Soundness Error Reduction with Hybrid Construction}.
  }
  \label{Figure: Soundness Error Reduction with Hybrid Construction}
\end{figure}

The following lemma is proved by using the \textsc{Soundness Error Reduction with Hybrid Construction}
combined with the properties of the \textsc{Mild Amplification with Marriott-Watrous}
used for proving Lemma~\ref{Lemma: mild amplification}.

\begin{lemma}
  For any functions~$\function{p, l_\regV, l_\regM}{\Nonnegative}{\Natural}$
  and any functions~$\function{c,s}{\Nonnegative}{[0,1]}$
  satisfying
  ${c > s}$,
  there exists a function~$\function{\delta}{\Nonnegative}{\Natural}$
  that is logarithmic with respect to ${\frac{p}{c - s}}$
  such that
  \[
    \unitaryQMASPACE[l_\regV, l_\regM](c, s)
    \subseteq
    \unitaryQMASPACE[l_\regV + \delta, l_\regM] \biggl( 1 - \frac{1}{p}, 2^{-2p} \biggr).
  \]
  \label{Lemma: soundness error reduction with hybrid construction}
\end{lemma}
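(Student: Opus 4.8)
The plan is to follow verbatim the template of the proof of Lemma~\ref{Lemma: soundness error reduction}, simply swapping the phase-estimation-based mild amplification for the hybrid \textsc{Mild Amplification with Marriott-Watrous} and then suppressing the soundness error with the \textsc{AND-Type Repetition Procedure}. Concretely, let $A = (A_\yes, A_\no)$ be in $\unitaryQMASPACE[l_\regV, l_\regM](c, s)$ with verifier $\{V_x\}_{x \in \Sigma^\ast}$, fix $p$ and an input $x$, and analyze the \textsc{Soundness Error Reduction with Hybrid Construction} associated with $(V_x, p)$, which first runs the \textsc{Mild Amplification with Marriott-Watrous} associated with $(V_x, 4p^2)$ and then applies the \textsc{AND-Type Repetition Procedure} with $N = \bigceil{\frac{p}{2 \log (2p)}}$.

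First I would invoke Lemma~\ref{Lemma: mild amplification} with the parameter $4p^2$ in place of $p$. Its (hybrid) proof guarantees, through the \textsc{Mild Amplification with Marriott-Watrous} associated with $(V_x, 4p^2)$, that the induced Hermitian operator ${M'_x = \Pi'_\init \conjugate{(V'_x)} \Pi'_\acc V'_x \Pi'_\init}$ has an eigenvalue at least ${1 - \frac{1}{4 (\op{p}(\abs{x}))^2}}$ when $x \in A_\yes$, and has all eigenvalues at most ${\frac{1}{4 (\op{p}(\abs{x}))^2}}$ when $x \in A_\no$, at the cost of $\delta_1$ extra qubits with $\delta_1$ logarithmic in $\frac{p}{c-s}$.

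Next I would feed the triplet ${(V'_x, \Pi'_\init, \Pi'_\acc)}$ into the \textsc{AND-Type Repetition Procedure} and apply Proposition~\ref{Proposition: properties of AND-Type Repetition Procedure}. In the yes-instance case the acceptance probability becomes ${\bigl( 1 - \frac{1}{4 (\op{p}(\abs{x}))^2} \bigr)^{2 \op{N}(\abs{x})}}$, which I would bound below by ${1 - \frac{1}{\op{p}(\abs{x})}}$ via $(1-y)^n \geq 1 - ny$ together with ${2N < \frac{p}{\log (2p)} + 2}$. In the no-instance case it is at most ${\bigl( \frac{1}{4 (\op{p}(\abs{x}))^2} \bigr)^{2 \op{N}(\abs{x})}}$; writing ${4p^2 = 2^{2(1 + \log p)}}$, this equals ${2^{-4 \op{N}(\abs{x}) (1 + \log \op{p}(\abs{x}))}}$, which I would bound above by ${2^{-2 \op{p}(\abs{x})}}$ using ${2N \geq \frac{p}{\log (2p)} = \frac{p}{1 + \log p}}$. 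The choice of $4p^2$ in the first step is engineered precisely so that the single value ${N = \bigceil{\frac{p}{2 \log (2p)}}}$ closes both inequalities at once.

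Finally, the repetition contributes only ${\delta_2 = \ceil{\log (2N + 1)}}$ extra qubits, which is logarithmic in $p$ and hence in $\frac{p}{c-s}$, so the total overhead ${\delta = \delta_1 + \delta_2}$ is logarithmic in $\frac{p}{c-s}$ as required. I expect no genuine obstacle here: the argument is structurally identical to Lemma~\ref{Lemma: soundness error reduction}, and the only point demanding care is bookkeeping the constants so that the $\frac{1}{4p^2}$-level gap produced by the Marriott-Watrous step is amplified to completeness $1 - \frac{1}{p}$ and soundness $2^{-2p}$ simultaneously under the stated $N$.
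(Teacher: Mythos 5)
Your proposal is correct and follows essentially the same route as the paper's own proof: invoke Lemma~\ref{Lemma: mild amplification} (via the \textsc{Mild Amplification with Marriott-Watrous}) with parameter $4p^2$, then apply the \textsc{AND-Type Repetition Procedure} with ${N = \bigceil{\frac{p}{2 \log (2p)}}}$ and Proposition~\ref{Proposition: properties of AND-Type Repetition Procedure}, and finally account the extra workspace as ${\delta = \delta_1 + \delta_2}$. Your explicit bounds (Bernoulli's inequality for completeness, the rewriting ${4p^2 = 2^{2(1+\log p)}}$ for soundness) match the paper's estimates in substance.
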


\begin{proof}
  Let ${A = (A_\yes, A_\no)}$ be a problem in ${\unitaryQMASPACE[l_\regV, l_\regM](c, s)}$,
  and let ${V = \{ V_x \}_{x \in \Sigma^\ast}}$
  be the ${(l_\regV, l_\regM)}$-space-bounded quantum verifier witnessing this membership.
  Fix a function~$\function{p}{\Nonnegative}{\Natural}$ and an input~$x$ in $\Sigma^\ast$.
  The lemma is proved by considering
  the \textsc{Soundness Error Reduction with Hybrid Construction} associated with ${(V_x, p)}$.

  First consider the \textsc{Mild Amplification with Marriott-Watrous}
  associated with ${(V_x, 4p^2)}$.
  Let $V'_x$ be the unitary transformation induced by it,
  let $\Pi'_\init$ be the projection onto the subspace spanned by the legal initial states of it,
  and let $\Pi'_\acc$ be the projection onto the subspace spanned by the accepting states of it.

  Lemma~\ref{Lemma: mild amplification}
  and its proof based on the \textsc{Mild Amplification with Marriott-Watrous}
  ensure that
  $A$ is in
  ${\unitaryQMASPACE[l_\regV + \delta_1, l_\regM] \bigl( 1 - \frac{1}{4p^2}, \frac{1}{4p^2} \bigr)}$
  for some function~$\function{\delta_1}{\Nonnegative}{\Natural}$
  that is logarithmic with respect to ${\frac{p}{c - s}}$,
  and this inclusion is certified by
  the \textsc{Mild Amplification with Marriott-Watrous}
  associated with ${(V_x, 4p^2)}$.
  This in particular implies that
  the Hermitian operator~%
  ${M'_x = \Pi'_\init \conjugate{(V'_x)} \Pi'_\acc V'_x \Pi'_\init}$
  has an eigenvalue at least~%
  ${1 - \frac{1}{4 (\op{p}(\abs{x}))^2}}$
  if $x$ is in $A_\yes$,
  while all the eigenvalues of $M'_x$ are at most~%
  $\frac{1}{4 (\op{p}(\abs{x}))^2}$
  if $x$ is in $A_\no$.

  Now consider the \textsc{AND-Type Repetition Procedure}
  associated with ${\bigl( V'_x, \Pi'_\init, \Pi'_\acc, \op{N}(\abs{x}) \bigr)}$,
  which is exactly
  what the \textsc{Soundness Error Reduction with Hybrid Construction} associated with ${(V_x, p)}$ performs.
  By Proposition~\ref{Proposition: properties of AND-Type Repetition Procedure},
  the \textsc{AND-Type Repetition Procedure}
  associated with ${\bigl( V'_x, \Pi'_\init, \Pi'_\acc, \op{N}(\abs{x}) \bigr)}$
  results in acceptance with probability at least
  \[
    \biggl( 1 - \frac{1}{4 (\op{p}(\abs{x}))^2} \biggr)^{2 \op{N}(\abs{x})}
    >
    \biggl( 1 - \frac{1}{4 (\op{p}(\abs{x}))^2} \biggr)^{\frac{\op{p}(\abs{x})}{\log (2 \op{p}(\abs{x}))} + 2}
    >
    1 - \frac{1}{\op{p}(\abs{x})}
  \]
  if $x$ is in $A_\yes$,
  and at most
  \[
    \biggl( \frac{1}{4 (\op{p}(\abs{x}))^2} \biggr)^{2 \op{N}(\abs{x})}
    \leq
    \Biggl[
      \biggl( \frac{1}{2 \op{p}(\abs{x})} \biggr)^{\frac{\op{p}(\abs{x})}{\log (2 \op{p}(\abs{x}))}}
    \Biggr]^2
    =
    2^{- 2 \op{p}(\abs{x})}
  \]
  if $x$ is in $A_\no$,
  and the completeness and soundness follows.

  The \textsc{AND-Type Repetition Procedure}
  associated with
  ${\bigl( V'_x, \Pi'_\init, \Pi'_\acc, \op{N}(\abs{x}) \bigr)}$
  uses extra workspace
  (relative to $V'_x$)
  of ${\op{\delta}_2 (\abs{x})}$~qubits
  for the function~$\function{\delta_2}{\Nonnegative}{\Natural}$ defined by
  ${\delta_2 = \ceil{\log (2N + 1)}}$.
  As ${N = \bigceil{\frac{p}{2 \log (2p)}}}$,
  $\delta_2$ is clearly logarithmic with respect to $p$, and thus, with respect to $\frac{p}{c-s}$ also.
  Hence, the \textsc{Soundness Error Reduction with Hybrid Construction} associated with ${(V_x, p)}$
  uses extra workspace
  (relative to $V_x$)
  of logarithmically many qubits with respect to $\frac{p}{c-s}$ also
  (which is determined by a function~${\delta = \delta_1 + \delta_2}$),
  as desired.
\end{proof}


\paragraph{Space-efficient error reduction based on hybrid construction}

Again fix arbitrarily a function~$\function{p}{\Nonnegative}{\Natural}$
and functions~$\function{c,s}{\Nonnegative}{[0,1]}$ satisfying ${c > s}$,
and let $\function{N}{\Nonnegative}{\Natural}$ be a function defined by
\[
  N = \Bigceil{\frac{p}{2 \log p}}.
\]

Fix an input~$x$.
Given the pair~${(V_x, p)}$,
consider the \textsc{Soundness Error Reduction with Hybrid Construction}
associated with ${(V_x, p)}$.
Let $V'_x$ be the unitary transformation induced by it,
let $\Pi'_\init$ be the projection onto the subspace spanned by the legal initial states of it,
and let $\Pi'_\acc$ be the projection onto the subspace spanned by the accepting states of it.
From the triplet~${\bigl( V'_x, \Pi'_\init, \Pi'_\acc \bigr)}$
and a positive integer~${\op{N}(\abs{x})}$,
one constructs the \textsc{OR-Type Repetition Procedure}
associated with ${\bigl( V'_x, \Pi'_\init, \Pi'_\acc, \op{N}(\abs{x}) \bigr)}$,
and performs it.
The resulting procedure is called
the \textsc{Space-Efficient Error Reduction Based on Hybrid Construction},
and is summarized in Figure~\ref{Figure: Space-Efficient Error Reduction Based on Hybrid Construction}.
\begin{figure}[t!]
  \begin{algorithm*}
        {
          \textsc{Space-Efficient Error Reduction Based on Hybrid Construction}
          associated with $\boldsymbol{(V_x, p)}$
        }
    \begin{flushright}
      \begin{minipage}{0.9727\textwidth}
        Define a function~$\function{N}{\Nonnegative}{\Natural}$ by
        ${N = \bigceil{\frac{p}{2 \log p}}}$.
        Consider the \textsc{Soundness Error Reduction with Hybrid Construction}
        associated with ${(V_x, p)}$.
        Let $V'_x$ be the unitary transformation induced by it,
        let $\Pi'_\init$ be the projection onto the subspace spanned by the legal initial states of it,
        and let $\Pi'_\acc$ be the projection onto the subspace spanned by the accepting states of it.\\
        Perform the \textsc{OR-Type Repetition Procedure}
        associated with ${\bigl( V'_x, \Pi'_\init, \Pi'_\acc, \op{N}(\abs{x}) \bigr)}$.
      \end{minipage}
    \end{flushright}
    \vspace{0.75\baselineskip}
  \end{algorithm*}
  \caption{
    The \textsc{Space-Efficient Error Reduction Based on Hybrid Construction}.
  }
  \label{Figure: Space-Efficient Error Reduction Based on Hybrid Construction}
\end{figure}

Now Theorem~\ref{Theorem: main theorem},
the main theorem of this paper,
is ready to be proved
by using the \textsc{Space-Efficient Error Reduction Based on Hybrid Construction}
combined with the properties of the \textsc{Soundness Error Reduction with Hybrid Construction}
used for proving Lemma~\ref{Lemma: soundness error reduction with hybrid construction}.

\begin{proof}
      [%
        Proof of Theorem~\ref{Theorem: main theorem}
        (via the hybrid construction of phase estimation and Marriott-Watrous)
      ]
  Let ${A\hspace{-0.167mm} =\hspace{-0.167mm} (A_\yes,\hspace{-0.166mm} A_\no)}$
  be a problem in ${\unitaryQMASPACE[l_\regV, l_\regM](c, s)}$,
  and let ${V = \{ V_x \}_{x \in \Sigma^\ast}}$
  be the ${(l_\regV, l_\regM)}$-space-bounded quantum verifier witnessing this membership.
  Fix a function~$\function{p}{\Nonnegative}{\Natural}$ and an input~$x$ in $\Sigma^\ast$.
  The theorem is proved by considering
  the \textsc{Space-Efficient Error Reduction Based on Hybrid Construction} associated with ${(V_x, p)}$.

  First consider the \textsc{Soundness Error Reduction with Hybrid Construction}
  associated with ${(V_x, p)}$.
  Let $V'_x$ be the unitary transformation induced by it,
  let $\Pi'_\init$ be the projection onto the subspace spanned by the legal initial states of it,
  and let $\Pi'_\acc$ be the projection onto the subspace spanned by the accepting states of it.

  Lemma~\ref{Lemma: soundness error reduction with hybrid construction} and its proof
  ensure that
  $A$ is in
  ${\unitaryQMASPACE[l_\regV + \delta_1, l_\regM] \bigl( 1 - \frac{1}{p}, 2^{-2p} \bigr)}$
  for some function~$\function{\delta_1}{\Nonnegative}{\Natural}$
  that is logarithmic with respect to ${\frac{p}{c - s}}$,
  and this inclusion is certified by
  the \textsc{Soundness Error Reduction with Hybrid Construction}
  associated with ${(V_x, p)}$.
  This in particular implies that
  the Hermitian operator~%
  ${M'_x = \Pi'_\init \conjugate{(V'_x)} \Pi'_\acc V'_x \Pi'_\init}$
  has an eigenvalue at least~${1 - \frac{1}{\op{p}(\abs{x})}}$
  if $x$ is in $A_\yes$,
  while all the eigenvalues of $M'_x$ are at most~$2^{- 2 \op{p}(\abs{x})}$
  if $x$ is in $A_\no$.

  Now consider the \textsc{OR-Type Repetition Procedure}
  associated with ${\bigl( V'_x, \Pi'_\init, \Pi'_\acc, \op{N}(\abs{x}) \bigr)}$,
  which is exactly what
  the \textsc{Space-Efficient Error Reduction Based on Hybrid Construction} associated with ${(V_x, p)}$
  performs.
  By Proposition~\ref{Proposition: properties of OR-Type Repetition Procedure},
  the \textsc{OR-Type Repetition Procedure}
  associated with ${\bigl( V'_x, \Pi'_\init, \Pi'_\acc, \op{N}(\abs{x}) \bigr)}$
  results in acceptance with probability at least
  \[
    1 - \biggl( \frac{1}{\op{p}(\abs{x})} \biggr)^{2 \op{N}(\abs{x})}
    \geq
    1 - \biggl( 2^{- \log \op{p}(\abs{x})} \biggr)^{\frac{\op{p}(\abs{x})}{\log \op{p}(\abs{x})}}
    =
    1 - 2^{- \op{p}(\abs{x})}
  \]
  if $x$ is in $A_\yes$,
  and at most
  \[
    1 - \Bigl( 1 - 2^{- 2 \op{p}(\abs{x})} \Bigr)^{2 \op{N}(\abs{x})}
    <
    1 - \Bigl( 1 - 2^{- 2 \op{p}(\abs{x})} \Bigr)^{\frac{\op{p}(\abs{x})}{\log \op{p}(\abs{x})} + 2}
    <
    2^{- \op{p}(\abs{x})}
  \]
  if $x$ is in $A_\no$,
  and the completeness and soundness follows.

  The \textsc{OR-Type Repetition Procedure}
  associated with
  ${\bigl( V'_x, \Pi'_\init, \Pi'_\acc, \op{N}(\abs{x}) \bigr)}$
  uses extra workspace
  (relative to $V'_x$)
  of ${\op{\delta}_2 (\abs{x})}$~qubits
  for the function~$\function{\delta_2}{\Nonnegative}{\Natural}$ defined by
  ${\delta_2 = \ceil{\log (2N + 1)}}$.
  As ${N = \bigceil{\frac{p}{2 \log p}}}$,
  $\delta_2$ is clearly logarithmic with respect to $p$, and thus, with respect to $\frac{p}{c-s}$ also.
  Hence,
  the \textsc{Space-Efficient Error Reduction Based on Hybrid Construction} associated with ${(V_x, p)}$
  uses extra workspace
  (relative to $V_x$)
  of logarithmically many qubits with respect to $\frac{p}{c-s}$ also
  (which is determined by a function~${\delta = \delta_1 + \delta_2}$),
  as desired.
\end{proof}

A straightforward calculation shows that
the \textsc{Space-Efficient Error Reduction Based on Hybrid Construction} associated with ${(V_x, p)}$
uses
${\op{O} \bigl( \frac{1}{c - s} \cdot \frac{p^2}{\log p} \bigr)}$~calls
of $V_x$ and its inverse.


\subsection{Exactly implementable construction based on random guess}
\label{Subsection: exactly implementable construction based on random guess}

The third proof is via the exactly implementable construction based on random guess.


\paragraph{Mild completeness amplification with a guess}

Fix a function~$\function{p}{\Nonnegative}{\Natural}$
and functions~$\function{c,s}{\Nonnegative}{[0,1]}$ satisfying ${c > s}$ arbitrarily,
and let $\function{l, C}{\Nonnegative}{\Natural}$ be functions defined by
\[
  l = \Bigceil{\frac{1}{2} \log \frac{p}{(c - s)^2}},
  \quad
  C = \ceil{2^l c}.
\]

Fix an input~$x$ and a positive integer~$k$ in ${\{1, \dotsc, 2^{\op{l}(\abs{x})}\}}$.
Given the triplet~${(V_x, \Pi_\init, \Pi_\acc)}$
and the integer~$k$,
one first constructs the \textsc{Additive Adjustment Procedure}
associated with ${\bigl( V_x, \Pi_\init, \Pi_\acc, \op{l}(\abs{x}), k \bigr)}$,
if $k$ is at least ${\op{C}(\abs{x})}$
(and automatically rejects otherwise
so that no $k$ can result in a good guess at the acceptance probability
when the actual value of it is unallowably small).
Let $V'_{x,k}$ be the unitary transformation induced by it,
let $\Pi'_\init$ be the projection onto the subspace spanned by the legal initial states of it,
and let $\Pi'_{\acc, k}$ be the projection onto the subspace spanned by the accepting states of it.
Next,
from the triplet~${\bigl( V'_{x,k}, \Pi'_\init, \Pi'_{\acc, k} \bigr)}$,
one constructs the \textsc{Reflection Procedure}
associated with ${\bigl( V'_{x,k}, \Pi'_\init, \Pi'_{\acc, k} \bigr)}$,
and performs it.
The resulting procedure is called the \textsc{Mild Completeness Amplification with Guess~$k$},
and is summarized as in Figure~\ref{Figure: Mild Completeness Amplification with Guess}.

\begin{figure}[t!]
  \begin{algorithm*}
        {
          \textsc{Mild Completeness Amplification with Guess~$\boldsymbol{k}$}
          associated with $\boldsymbol{(V_x, p)}$
        }
    \begin{flushright}
      \begin{minipage}{0.9727\textwidth}
        Define functions~$l$~and~$C$ by
        ${l = \bigceil{\frac{1}{2} \log \frac{p}{(c - s)^2}}}$
        and
        ${C = \ceil{2^l c}}$.
        Let $\Pi_\init$ and $\Pi_\acc$ be the projections onto the subspaces
        spanned by the legal initial states and the accepting states, respectively,
        in the verification with $V_x$.
        Given an integer~$k$ in ${\{1, \dotsc, 2^{\op{l}(\abs{x})}\}}$ as a guess,
        consider the \textsc{Additive Adjustment Procedure}
        associated with ${(V_x, \Pi_\init, \Pi_\acc, \op{l}(\abs{x}), k)}$.
        Let $V'_{x,k}$ be the unitary transformation induced by it,
        let $\Pi'_\init$ be the projection onto the subspace spanned by the legal initial states of it,
        and let $\Pi'_{\acc, k}$ be the projection onto the subspace spanned by the accepting states of it.\\
        Reject if ${k < \op{C}(\abs{x})}$,
        and continue otherwise by performing the \textsc{Reflection Procedure}
        associated with ${\bigl( V'_{x,k}, \Pi'_\init, \Pi'_{\acc, k} \bigr)}$.
      \end{minipage}
    \end{flushright}
    \vspace{0.75\baselineskip}
  \end{algorithm*}
  \caption{
    The \textsc{Mild Completeness Amplification with Guess~$k$}.
  }
  \label{Figure: Mild Completeness Amplification with Guess}
\end{figure}

From the properties of the \textsc{Additive Adjustment Procedure} and the \textsc{Reflection Procedure}
(Propositions~\ref{Proposition: properties of Additive Adjustment Procedure}~and~\ref{Proposition: properties of Reflection Procedure}),
one can show the following lemma.

\begin{lemma}
  Given functions~$\function{l_\regV, l_\regM}{\Nonnegative}{\Natural}$
  and $\function{c,s}{\Nonnegative}{[0,1]}$ satisfying ${c > s}$,
  let ${A = (A_\yes, A_\no)}$ be a problem in ${\unitaryQMASPACE[l_\regV, l_\regM](c, s)}$,
  and let ${V = \{ V_x \}_{x \in \Sigma^\ast}}$
  be the ${(l_\regV, l_\regM)}$-space-bounded quantum verifier witnessing this membership.
  Then, for any function~$\function{p}{\Nonnegative}{\Natural}$
  and for every $x$ in $\Sigma^\ast$,
  letting ${l = \bigceil{\frac{1}{2} \log \frac{p}{(c - s)^2}}}$,
  the following properties hold:
  \begin{description}
  \item[\textnormal{(Completeness)}]
    If $x$ is in $A_\yes$,
    there exists an integer~$k$ in ${\{1, \dotsc, 2^{\op{l}(\abs{x})}\}}$ as a guess
    such that
    the \textsc{Mild Completeness Amplification with Guess~$k$} associated with ${(V_x, p)}$
    results in acceptance with probability at least
    ${1 - \frac{\left( \op{c}(\abs{x}) - \op{s}(\abs{x}) \right)^2}{\op{p}(\abs{x})}}$.
  \item[\textnormal{(Soundness)}]
    If $x$ is in $A_\no$,
    for any integer~$k$ in ${\{1, \dotsc, 2^{\op{l}(\abs{x})}\}}$ as a guess,
    the \textsc{Mild Completeness Amplification with Guess~$k$} associated with ${(V_x, p)}$
    results in acceptance with probability at most
    ${1 - \bigl( \op{c}(\abs{x}) - \op{s}(\abs{x}) \bigr)^2}$.
  \end{description}
  \label{Lemma: mild completeness amplification with a guess}
\end{lemma}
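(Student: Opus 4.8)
The plan is to run the two sub-procedures of the \textsc{Mild Completeness Amplification with Guess~$k$} in sequence and to track how the spectrum of the verifier's operator $M_x = \Pi_\init \conjugate{V_x} \Pi_\acc V_x \Pi_\init$ is transformed. Throughout, abbreviate $c = \op{c}(\abs{x})$, $s = \op{s}(\abs{x})$, $p = \op{p}(\abs{x})$, $l = \op{l}(\abs{x})$, and $C = \ceil{2^l c}$, and recall that the largest eigenvalue $\lambda_{\max}$ of $M_x$ equals the maximum acceptance probability $p_\acc$, so $\lambda_{\max} \geq c$ when $x \in A_\yes$ and every eigenvalue of $M_x$ is at most $s$ when $x \in A_\no$. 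First I would invoke Proposition~\ref{Proposition: eigenstate of Additive Adjustment Procedure}: for a guess $k$, an eigenstate of $M_x$ with eigenvalue $\lambda$ is mapped by the \textsc{Additive Adjustment Procedure} to an eigenstate of the induced operator $M'_{x,k} = \Pi'_\init \conjugate{(V'_{x,k})} \Pi'_{\acc,k} V'_{x,k} \Pi'_\init$ with eigenvalue $\frac{1}{2} + \frac{1}{2}\bigl( \lambda - r_k \bigr)$, where $r_k = k \cdot 2^{-l}$. In other words the adjustment recenters the spectrum of $M_x$ about $\frac{1}{2}$ by subtracting the guess $r_k$, and the subsequent \textsc{Reflection Procedure} is what converts proximity to $\frac12$ into a high acceptance probability.

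For the completeness bound, suppose $x \in A_\yes$ and let the witness be the top eigenstate of $M_x$, so $\lambda_{\max} = p_\acc \geq c$. I would choose the guess $k = \ceil{2^l \lambda_{\max}}$, for which $r_k \geq \lambda_{\max}$ and $r_k - \lambda_{\max} < 2^{-l}$; since $\lambda_{\max} \geq c$ this forces $k \geq C$, so the ``reject if $k < C$'' step does not fire and $k$ is a legal guess in $\{1, \dots, 2^l\}$. The associated eigenvalue of $M'_{x,k}$ is $\mu = \frac12 + \frac12(\lambda_{\max} - r_k)$, so by the completeness clause of Proposition~\ref{Proposition: properties of Reflection Procedure} the \textsc{Reflection Procedure} accepts with probability $4\mu(1-\mu) = 1 - (\lambda_{\max} - r_k)^2 > 1 - 2^{-2l}$. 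It then remains only to observe that $l \geq \frac12 \log \frac{p}{(c-s)^2}$ yields $2^{-2l} \leq \frac{(c-s)^2}{p}$, which gives the asserted completeness bound.

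For the soundness bound, suppose $x \in A_\no$ and fix an arbitrary guess $k$. If $k < C$ the procedure rejects outright and acceptance is $0$, so assume $k \geq C$, whence $r_k \geq c$. Since every eigenvalue of $M_x$ is at most $s$, the soundness clause of Proposition~\ref{Proposition: properties of Additive Adjustment Procedure} bounds the maximum acceptance probability of the \textsc{Additive Adjustment Procedure}---equivalently, the largest eigenvalue of $M'_{x,k}$---by $\frac12 + \frac12(s - r_k) \leq \frac12 - \frac12(c - s)$. Because $M'_{x,k} = \conjugate{(\Pi'_{\acc,k} V'_{x,k} \Pi'_\init)}(\Pi'_{\acc,k} V'_{x,k} \Pi'_\init)$ is positive semidefinite, all of its eigenvalues lie in $\bigl[ 0, \frac12 - \varepsilon \bigr]$ with $\varepsilon = \frac12(c-s)$, so none lies in the open interval $\bigl( \frac12 - \varepsilon, \frac12 + \varepsilon \bigr)$. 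The soundness clause of Proposition~\ref{Proposition: properties of Reflection Procedure} then caps the acceptance probability at $1 - 4\varepsilon^2 = 1 - (c-s)^2$, uniformly in $k$, as required.

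The only genuinely delicate point, and the step I would be most careful about, is the soundness hand-off from the \textsc{Additive Adjustment Procedure} to the \textsc{Reflection Procedure}. The latter requires a two-sided spectral gap around $\frac12$, whereas Proposition~\ref{Proposition: properties of Additive Adjustment Procedure} supplies only a one-sided upper bound on the acceptance probability, i.e.\ on the top eigenvalue of $M'_{x,k}$. The bridge is the positive semidefiniteness of $M'_{x,k}$, which confines the remaining eigenvalues to $\bigl[ 0, \frac12 - \varepsilon \bigr]$ and hence below the gap; here I would double-check the extreme case $c - s = 1$, where $\frac12 - \varepsilon = 0$, noting that $0$ still lies outside the open interval $\bigl( \frac12 - \varepsilon, \frac12 + \varepsilon \bigr)$ so that the Reflection bound continues to apply.
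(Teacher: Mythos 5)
Your proof is correct and takes essentially the same route as the paper's: completeness via the guess ${k = \lceil 2^{l} \lambda_{\max} \rceil}$ (which automatically satisfies ${k \geq C}$), the eigenvalue recentering ${\lambda \mapsto \frac{1}{2} + \frac{1}{2}(\lambda - k/2^{l})}$ from Proposition~\ref{Proposition: eigenstate of Additive Adjustment Procedure}, and the \textsc{Reflection Procedure} bound ${1 - (\lambda_{\max} - k/2^{l})^2 > 1 - 2^{-2l}}$; and soundness via the uniform eigenvalue bound ${\frac{1}{2} - \frac{1}{2}(c - s)}$ fed into the soundness clause of Proposition~\ref{Proposition: properties of Reflection Procedure} with ${\varepsilon = \frac{1}{2}(c - s)}$, plus automatic rejection when ${k < C}$. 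The one step you flag as delicate is in fact immediate---an eigenvalue that is at most ${\frac{1}{2} - \varepsilon}$ already lies outside the open interval ${\bigl( \frac{1}{2} - \varepsilon, \frac{1}{2} + \varepsilon \bigr)}$, so the positive-semidefiniteness bridge, while true, is unnecessary.
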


\begin{proof}
  Let $\function{C}{\Nonnegative}{\Natural}$ be a function defined by ${C = \ceil{2^l c}}$,
  and let $\Pi_\init$ and $\Pi_\acc$ be the projections onto the subspaces
  spanned by the legal initial states and the accepting states, respectively,
  in the verification with $V_x$.
  For the \textsc{Additive Adjustment Procedure}
  associated with ${(V_x, \Pi_\init, \Pi_\acc, \op{l}(\abs{x}), k)}$,
  let $V'_{x,k}$ be the unitary transformation induced by it,
  let $\Pi'_\init$ be the projection onto the subspace spanned by the legal initial states of it,
  and let $\Pi'_{\acc, k}$ be the projection onto the subspace spanned by the accepting states of it.

  First suppose that $x$ is in $A_\yes$.
  The Hermitian operator~${M_x = \Pi_\init \conjugate{V_x} \Pi_\acc V_x \Pi_\init}$
  in this case
  has an eigenvalue~$\lambda_x$ that is at least ${\op{c}(\abs{x})}$.

  Fix ${k = \bigceil{2^{\op{l}(\abs{x})} \lambda_x}}$
  in ${\{\op{C}(\abs{x}), \dotsc, 2^{\op{l}(\abs{x})}\}}$.

  By Proposition~\ref{Proposition: properties of Additive Adjustment Procedure},
  the Hermitian operator~%
  ${
    M'_{x,k} = \Pi'_\init \conjugate{\bigl( V'_{x,k} \bigr)} \Pi'_{\acc, k} V'_{x,k} \Pi'_\init
  }$
  must have an eigenvalue
  \[
    \lambda'_{x,k}
    =
    \frac{1}{2} - \frac{1}{2} \biggl( \frac{k}{2^{\op{l}(\abs{x})}} - \lambda_x \biggr),
  \]
  which must satisfy that
  \[
    \frac{1}{2} - \frac{\op{c}(\abs{x}) - \op{s}(\abs{x})}{2 \sqrt{\op{p}(\abs{x})}}
    \leq
    \frac{1}{2} - 2^{- (\op{l}(\abs{x}) + 1)}
    <
    \lambda'_{x,k}
    \leq
    \frac{1}{2}
  \]
  for ${k = \bigceil{2^{\op{l}(\abs{x})} \lambda_x}}$
  in ${\{\op{C}(\abs{x}), \dotsc, 2^{\op{l}(\abs{x})}\}}$.

  Hence,
  by Proposition~\ref{Proposition: properties of Reflection Procedure},
  the \textsc{Reflection Procedure}
  associated with ${\bigl( V'_{x,k}, \Pi'_\init, \Pi'_{\acc, k} \bigr)}$
  results in acceptance with probability at least
  \[
    1 - \biggl( \frac{k}{2^{\op{l}(\abs{x})}} - \lambda_x \biggr)^2
    >
    1 - 2^{- 2 \op{l}(\abs{x})}
    \geq
    1 - \frac{\bigl( \op{c}(\abs{x}) - \op{s}(\abs{x}) \bigr)^2}{\op{p}(\abs{x})},
  \]
  which proves the completeness.

  Now suppose that $x$ is in $A_\no$,
  which implies that
  all the eigenvalues of $M_x$ are at most ${\op{s}(\abs{x})}$.
  It follows from Proposition~\ref{Proposition: properties of Additive Adjustment Procedure} that,
  for any $k$ in ${\{\op{C}(\abs{x}), \dotsc, 2^{\op{l}(\abs{x})}\}}$,
  all the eigenvalues of $M'_{x,k}$ are at most
  \[
    \frac{1}{2} - \frac{1}{2} \biggl( \frac{k}{2^{\op{l}(\abs{x})}} - \op{s}(\abs{x}) \biggr)
    \leq
    \frac{1}{2} - \frac{1}{2} \biggl( \frac{\op{C}(\abs{x})}{2^{\op{l}(\abs{x})}} - \op{s}(\abs{x}) \biggr)
    \leq
    \frac{1}{2} - \frac{1}{2} \bigl( \op{c}(\abs{x}) - \op{s}(\abs{x}) \bigr).
  \]
  Therefore,
  Proposition~\ref{Proposition: properties of Reflection Procedure} ensures that,
  for any $k$ in ${\{\op{C}(\abs{x}), \dotsc, 2^{\op{l}(\abs{x})}\}}$,
  the \textsc{Reflection Procedure}
  associated with ${\bigl( V'_{x,k}, \Pi'_\init, \Pi'_{\acc, k} \bigr)}$
  results in acceptance with probability at most
  \[
    1 - \bigl( \op{c}(\abs{x}) - \op{s}(\abs{x}) \bigr)^2.
  \]
  As it always rejects when $k$ is less than ${\op{C}(\abs{x})}$,
  the \textsc{Mild Completeness Amplification with Guess~$k$} associated with ${(V_x, p)}$
  results in acceptance with probability at most
  ${
    1 - \bigl( \op{c}(\abs{x}) - \op{s}(\abs{x}) \bigr)^2
  }$
  for any $k$ in ${\{1, \dotsc, 2^{\op{l}(\abs{x})}\}}$,
  and the soundness follows.
\end{proof}


\paragraph{Soundness error reduction with a guess}

Again fix a function~$\function{p}{\Nonnegative}{\Natural}$
and functions~$\function{c,s}{\Nonnegative}{[0,1]}$ satisfying ${c > s}$,
arbitrarily.
Let $\function{l, N}{\Nonnegative}{\Natural}$ be functions defined by
\[
  l = \biggceil{\frac{1}{2} \log \frac{6p}{(c - s)^2}},
  \quad
  N = \biggceil{\frac{p}{2 (c - s)^2}}.
\]

Fix an input~$x$ and an integer~$k$ in ${\{1, \dotsc, 2^{\op{l}(\abs{x})}\}}$.
Given the pair~${(V_x, p)}$
and the integer~$k$,
consider the \textsc{Mild Completeness Amplification with Guess~$k$}
associated with ${(V_x, 6p)}$.
As before, let $V'_{x,k}$ be the unitary transformation induced by it,
let $\Pi'_\init$ be the projection onto the subspace spanned by the legal initial states of it,
and let $\Pi'_{\acc, k}$ be the projection onto the subspace spanned by the accepting states of it.
From the triplet~${\bigl( V'_{x,k}, \Pi'_\init, \Pi'_{\acc, k} \bigr)}$
and a positive integer~${\op{N}(\abs{x})}$,
one constructs the \textsc{AND-Type Repetition Procedure}
associated with ${\bigl( V'_{x,k}, \Pi'_\init, \Pi'_{\acc, k}, \op{N}(\abs{x}) \bigr)}$,
and performs it.
The resulting procedure is called the \textsc{Soundness Error Reduction with Guess~$k$},
and is summarized in Figure~\ref{Figure: Soundness Error Reduction with Guess}.

\begin{figure}[t!]
  \begin{algorithm*}
        {
          \textsc{Soundness Error Reduction with Guess~$\boldsymbol{k}$}
          associated with $\boldsymbol{(V_x, p)}$
        }
    \begin{flushright}
      \begin{minipage}{0.9727\textwidth}
        Define functions~$l$~and~$N$ by
        ${l = \bigceil{\frac{1}{2} \log \frac{6p}{(c - s)^2}}}$
        and
        ${N = \bigceil{\frac{p}{2 (c - s)^2}}}$.
        Given an integer~$k$ in ${\{1, \dotsc, 2^{\op{l}(\abs{x})}\}}$,
        consider the \textsc{Mild Completeness Amplification with Guess~$k$}
        associated with ${(V_x, 6p)}$.
        Let $V'_{x,k}$ be the unitary transformation induced by it,
        let $\Pi'_\init$ be the projection onto the subspace spanned by the legal initial states of it,
        and let $\Pi'_{\acc, k}$ be the projection onto the subspace spanned by the accepting states of it.\\
        Perform the \textsc{AND-Type Repetition Procedure}
        associated with ${\bigl( V'_{x,k}, \Pi'_\init, \Pi'_{\acc, k}, \op{N}(\abs{x}) \bigr)}$.
      \end{minipage}
    \end{flushright}
    \vspace{0.75\baselineskip}
  \end{algorithm*}
  \caption{
    The \textsc{Soundness Error Reduction with Guess~$k$}.
  }
  \label{Figure: Soundness Error Reduction with Guess}
\end{figure}

From the properties of the \textsc{AND-Type Repetition Procedure}
and the \textsc{Mild Completeness Amplification with Guess~$k$}
(Proposition~\ref{Proposition: properties of AND-Type Repetition Procedure}
and Lemma~\ref{Lemma: mild completeness amplification with a guess}),
one can show the following lemma.

\begin{lemma}
  Given functions~$\function{l_\regV, l_\regM}{\Nonnegative}{\Natural}$
  and $\function{c,s}{\Nonnegative}{[0,1]}$ satisfying ${c > s}$,
  let ${A = (A_\yes, A_\no)}$ be a problem in ${\unitaryQMASPACE[l_\regV, l_\regM](c, s)}$,
  and let ${V = \{ V_x \}_{x \in \Sigma^\ast}}$
  be the ${(l_\regV, l_\regM)}$-space-bounded quantum verifier witnessing this membership.
  Then, for any function~$\function{p}{\Nonnegative}{\Natural}$
  and for every $x$ in $\Sigma^\ast$,
  letting ${l = \bigceil{\frac{1}{2} \log \frac{6p}{(c - s)^2}}}$,
  the following properties hold:
  \begin{description}
  \item[\textnormal{(Completeness)}]
    If $x$ is in $A_\yes$,
    there exists an integer~$k$ in ${\{1, \dotsc, 2^{\op{l}(\abs{x})}\}}$ as a guess
    such that
    the \textsc{Soundness Error-Reduction with Guess~$k$} associated with ${(V_x, p)}$
    results in acceptance with probability at least~$\frac{1}{2}$.
  \item[\textnormal{(Soundness)}]
    If $x$ is in $A_\no$,
    for any integer~$k$ in ${\{1, \dotsc, 2^{\op{l}(\abs{x})}\}}$ as a guess,
    the \textsc{Soundness Error-Reduction with Guess~$k$} associated with ${(V_x, p)}$
    results in acceptance with probability at most~${2^{- \op{p}(\abs{x})}}$.
  \end{description}
  \label{Lemma: soundness error reduction with a guess}
\end{lemma}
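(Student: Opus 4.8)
The plan is to compose Lemma~\ref{Lemma: mild completeness amplification with a guess} (applied with the function $6p$ in place of $p$) with the \textsc{AND-Type Repetition Procedure}, exactly mirroring the structure of the proof of Lemma~\ref{Lemma: soundness error reduction}. First I would fix a problem $A$ in $\unitaryQMASPACE[l_\regV, l_\regM](c,s)$ witnessed by $V = \{V_x\}_{x \in \Sigma^\ast}$, fix $p$ and an input $x$, and recall that the \textsc{Soundness Error Reduction with Guess~$k$} associated with $(V_x, p)$ first runs the \textsc{Mild Completeness Amplification with Guess~$k$} associated with $(V_x, 6p)$---whose internal precision parameter is exactly the $l = \bigceil{\frac{1}{2}\log\frac{6p}{(c-s)^2}}$ of the present statement---and then applies the \textsc{AND-Type Repetition Procedure} to the induced triple $(V'_{x,k}, \Pi'_\init, \Pi'_{\acc,k})$ with $N = \bigceil{\frac{p}{2(c-s)^2}}$.

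Invoking Lemma~\ref{Lemma: mild completeness amplification with a guess} with $6p$ supplies the two eigenvalue facts I need about $M'_{x,k} = \Pi'_\init \conjugate{\bigl( V'_{x,k} \bigr)} \Pi'_{\acc,k} V'_{x,k} \Pi'_\init$: in the yes case there is a guess $k$ for which $M'_{x,k}$ has an eigenvalue at least $1 - \frac{(c-s)^2}{6p}$, while in the no case, for \emph{every} $k$, all eigenvalues of $M'_{x,k}$ are at most $1 - (c-s)^2$. I would then feed these into Proposition~\ref{Proposition: properties of AND-Type Repetition Procedure}: for the good $k$ in the yes case the acceptance probability is at least $\bigl( 1 - \frac{(c-s)^2}{6p} \bigr)^{2N}$, and for every $k$ in the no case it is at most $\bigl( 1 - (c-s)^2 \bigr)^{2N}$.

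What remains is the arithmetic. For completeness, using $2N < \frac{p}{(c-s)^2} + 2$ together with Bernoulli's inequality $(1-y)^m \geq 1 - my$ gives $\bigl( 1 - \frac{(c-s)^2}{6p} \bigr)^{2N} \geq 1 - \frac{2N(c-s)^2}{6p} > 1 - \bigl( \frac{1}{6} + \frac{(c-s)^2}{3p} \bigr) \geq \frac{1}{2}$, where the last step uses $(c-s)^2 \leq 1 \leq p$. For soundness, using $1 - y \leq e^{-y}$ and $2N \geq \frac{p}{(c-s)^2}$ gives $\bigl( 1 - (c-s)^2 \bigr)^{2N} \leq e^{-2N(c-s)^2} \leq e^{-p} < 2^{-p}$, the final strict inequality being just $2 < e$.

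I do not expect a genuine obstacle: the lemma is a direct two-step composition, and the only care needed is to keep the guess $k$ consistent between the completeness of the mild-amplification step and the \textsc{AND-Type Repetition Procedure} built on top of it, and to note that the constant $6$ in $6p$ is chosen precisely so that the mildly amplified completeness $1 - \frac{(c-s)^2}{6p}$ survives the roughly $\frac{p}{(c-s)^2}$-fold AND while still leaving acceptance probability at least $\frac{1}{2}$. Since the statement asserts only acceptance-probability bounds (and not any workspace count), no tracking of the sizes of the added ancilla registers is required here.
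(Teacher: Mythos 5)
Your proposal is correct and follows essentially the same route as the paper's proof: compose the \textsc{Mild Completeness Amplification with Guess~$k$} (invoked at parameter $6p$) with the \textsc{AND-Type Repetition Procedure} at ${N = \bigceil{\frac{p}{2(c-s)^2}}}$, translate the mild-amplification acceptance bounds into eigenvalue bounds on $M'_{x,k}$, and apply Proposition~\ref{Proposition: properties of AND-Type Repetition Procedure}, handling the always-reject case ${k < \op{C}(\abs{x})}$ trivially. The only difference is cosmetic: you bound the completeness term via Bernoulli's inequality where the paper compares exponents directly, and both yield the same $\frac{1}{2}$ and ${e^{-p} < 2^{-p}}$ conclusions.
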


\begin{proof}
  Let $\function{C}{\Nonnegative}{\Natural}$ be a function defined by ${C = \ceil{2^l c}}$,
  and let $\Pi_\init$ and $\Pi_\acc$ be the projections onto the subspaces
  spanned by the legal initial states and the accepting states, respectively,
  in the verification with $V_x$.
  For the \textsc{Mild Completeness Amplification with Guess~$k$}
  associated with ${(V_x, 6p)}$,
  let $V'_{x,k}$ be the unitary transformation induced by it,
  let $\Pi'_\init$ be the projection onto the subspace spanned by the legal initial states of it,
  and let $\Pi'_{\acc, k}$ be the projection onto the subspace spanned by the accepting states of it.
  Then, for a function~$\function{N}{\Nonnegative}{\Natural}$ defined by
  ${N = \bigceil{\frac{p}{2 (c - s)^2}}}$
  and for the \textsc{AND-Type Repetition Procedure}
  associated with ${\bigl( V'_{x,k}, \Pi'_\init, \Pi'_{\acc, k}, \op{N}(\abs{x}) \bigr)}$,
  let $V''_{x,k}$ be the unitary transformation induced by it,
  let $\Pi''_\init$ be the projection onto the subspace spanned by the legal initial states of it,
  and let $\Pi''_{\acc, k}$ be the projection onto the subspace spanned by the accepting states of it.

  First suppose that $x$ is in $A_\yes$.
  The Hermitian operator~${M_x = \Pi_\init \conjugate{V_x} \Pi_\acc V_x \Pi_\init}$
  in this case
  has an eigenvalue~$\lambda_x$ that is at least ${\op{c}(\abs{x})}$.

  Fix ${k = \bigceil{2^{\op{l}(\abs{x})} \lambda_x}}$
  in ${\{\op{C}(\abs{x}), \dotsc, 2^{\op{l}(\abs{x})}\}}$.

  By Lemma~\ref{Lemma: mild completeness amplification with a guess},
  the Hermitian operator~%
  ${
    M'_{x,k} = \Pi'_\init \conjugate{\bigl( V'_{x,k} \bigr)} \Pi'_{\acc, k} V'_{x,k} \Pi'_\init
  }$
  must have an eigenvalue~%
  \[
    \lambda'_{x,k}
    >
    1 - \frac{\bigl( \op{c}(\abs{x}) - \op{s}(\abs{x}) \bigr)^2}{6 \op{p}(\abs{x})}
  \]
  for ${k = \bigceil{2^{\op{l}(\abs{x})} \lambda_x}}$
  in ${\{\op{C}(\abs{x}), \dotsc, 2^{\op{l}(\abs{x})}\}}$.
  Hence,
  by Proposition~\ref{Proposition: properties of AND-Type Repetition Procedure},
  the \textsc{AND-Type Repetition Procedure}
  associated with ${\bigl( V'_{x,k}, \Pi'_\init, \Pi'_{\acc, k}, \op{N}(\abs{x}) \bigr)}$
  results in acceptance with probability at least
  \[
    \Biggl[
      1 - \frac{\bigl( \op{c}(\abs{x}) - \op{s}(\abs{x}) \bigr)^2}{6 \op{p}(\abs{x})}
    \Biggr]^{2 \op{N}(\abs{x})}
    \geq
    \Biggl[
      1 - \frac{\bigl( \op{c}(\abs{x}) - \op{s}(\abs{x}) \bigr)^2}{6 \op{p}(\abs{x})}
    \Biggr]^{\frac{\op{p}(\abs{x})}{\left( \op{c}(\abs{x}) - \op{s}(\abs{x}) \right)^2} + 2}
    >
    \frac{1}{2},
  \]
  which proves the completeness.

  Now suppose that $x$ is in $A_\no$,
  which implies that
  all the eigenvalues of $M_x$ are at most ${\op{s}(\abs{x})}$.
  It follows from Lemma~\ref{Lemma: mild completeness amplification with a guess} that,
  for any $k$ in ${\{\op{C}(\abs{x}), \dotsc, 2^{\op{l}(\abs{x})}\}}$,
  all the eigenvalues of $M'_{x,k}$ are at most
  \[
    1 - \bigl( \op{c}(\abs{x}) - \op{s}(\abs{x}) \bigr)^2.
  \]
  From Proposition~\ref{Proposition: properties of AND-Type Repetition Procedure},
  this implies that,
  for any $k$ in ${\{\op{C}(\abs{x}), \dotsc, 2^{\op{l}(\abs{x})}\}}$,
  the \textsc{AND-Type Repetition Procedure}
  associated with ${\bigl( V'_{x,k}, \Pi'_\init, \Pi'_{\acc, k}, \op{N}(\abs{x}) \bigr)}$
  results in acceptance with probability at most
  \[
    \Bigl[ 1 - \bigl( \op{c}(\abs{x}) - \op{s}(\abs{x}) \bigr)^2 \Bigr]^{2 \op{N}(\abs{x})}
    \leq
    \Bigl[
      1 - \bigl( \op{c}(\abs{x}) - \op{s}(\abs{x}) \bigr)^2
    \Bigr]^{\frac{\op{p}(\abs{x})}{\left( \op{c}(\abs{x}) - \op{s}(\abs{x}) \right)^2}}
    <
    e^{- \op{p}(\abs{x})}
    <
    2^{- \op{p}(\abs{x})}.
  \]
  As it always rejects when $k$ is less than ${\op{C}(\abs{x})}$,
  the \textsc{Soundness Error Reduction with Guess~$k$} associated with ${(V_x, p)}$
  results in acceptance with probability at most ${2^{- \op{p}(\abs{x})}}$
  for any $k$ in ${\{1, \dotsc, 2^{\op{l}(\abs{x})}\}}$,
  and the soundness follows.
\end{proof}


\paragraph{Soundness error reduction with a random guess}

Again fix arbitrarily a function~$\function{p}{\Nonnegative}{\Natural}$
and functions~$\function{c,s}{\Nonnegative}{[0,1]}$ satisfying ${c > s}$,
and let $\function{l}{\Nonnegative}{\Natural}$ be a function defined by
\[
  l = \biggceil{\frac{1}{2} \log \frac{6p}{(c - s)^2}}.
\]

Fix an input~$x$.
Given the pair~${(V_x, p)}$,
consider choosing an integer~$k$ from ${\{1, \dotsc, 2^{\op{l}(\abs{x})}\}}$
uniformly at random,
and then performing the \textsc{Soundness Error Reduction with Guess~$k$} associated with ${(V_x, p)}$.
The resulting procedure is called the \textsc{Soundness Error Reduction with Random Guess}
and is summarized in Figure~\ref{Figure: Soundness Error Reduction with Random Guess}.
\begin{figure}[t!]
  \begin{algorithm*}
        {
          \textsc{Soundness Error Reduction with Random Guess}
          associated with $\boldsymbol{(V_x, p)}$
        }
    \begin{flushright}
      \begin{minipage}{0.9727\textwidth}
        Define a function~$l$ by
        ${l = \bigceil{\frac{1}{2} \log \frac{6p}{(c - s)^2}}}$.\\
        Pick an integer~$k$ from ${\{1, \dotsc, 2^{\op{l}(\abs{x})}\}}$ uniformly at random
        and perform the \textsc{Soundness Error Reduction with Guess~$k$} associated with ${(V_x, p)}$.
      \end{minipage}
    \end{flushright}
  \vspace{0.75\baselineskip}
  \end{algorithm*}
  \caption{
    The \textsc{Soundness Error Reduction with Random Guess}.
  }
  \label{Figure: Soundness Error Reduction with Random Guess}
\end{figure}

The following lemma is proved by using the \textsc{Soundness Error Reduction with Random Guess}
combined with the properties of the \textsc{Soundness Error Reduction with Guess~$k$}
stated in Lemma~\ref{Lemma: soundness error reduction with a guess}.

\begin{lemma}
  For any functions~$\function{p, l_\regV, l_\regM}{\Nonnegative}{\Natural}$
  and any functions~$\function{c,s}{\Nonnegative}{[0,1]}$
  satisfying
  ${c > s}$
  and
  ${\frac{c - s}{4 \sqrt{6p}} > 2^{-p}}$
  (which in particular holds when ${p > 2 \log \frac{4 \sqrt{3}}{c - s}}$),
  there exists a function~$\function{\delta}{\Nonnegative}{\Natural}$
  that is logarithmic with respect to ${\frac{p}{c - s}}$
  such that
  \[
    \unitaryQMASPACE[l_\regV, l_\regM](c, s)
    \subseteq
    \unitaryQMASPACE[l_\regV + \delta, l_\regM] \biggl( \frac{c - s}{4 \sqrt{6p}}, 2^{-p} \biggr).
  \]
  \vspace{-0.75\baselineskip}
  \label{Lemma: soundness error reduction with a random guess}
\end{lemma}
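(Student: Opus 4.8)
The plan is to realize the uniform random choice of the guess $k$ \emph{coherently}, so that the whole construction stays a unitary QMA-type computation with the witness register $\regM$ untouched. Concretely, I would introduce a fresh private register $\regK$ of $l = \bigceil{\frac{1}{2}\log\frac{6p}{(c-s)^2}}$ qubits, place it into the uniform superposition $2^{-l/2}\sum_k \ket{k}$ by applying Hadamard gates (this is what preserves exact implementability), and then apply the \textsc{Soundness Error Reduction with Guess~$k$} associated with $(V_x, p)$ controlled on the contents of $\regK$, \emph{without} ever uncomputing or measuring $\regK$ before the final measurement. Keeping $\regK$ as a ``which-guess'' record is the crucial design choice: because the basis states $\ket{k}_\regK$ are mutually orthogonal and remain entangled with the rest of the workspace, there is no interference between branches, and the acceptance probability of the combined procedure on any witness is exactly the average $2^{-l}\sum_k p_k$ of the per-guess acceptance probabilities $p_k$.

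With this convex-combination identity in hand, both conditions follow from Lemma~\ref{Lemma: soundness error reduction with a guess}. For completeness, when $x$ is in $A_\yes$ I would take as the witness the $\regM$-part $\ket{\psi}_\regM$ of the optimal eigenstate of $M_x$ (a single witness shared by every branch) and invoke the completeness clause of Lemma~\ref{Lemma: soundness error reduction with a guess} to obtain a good guess $k^\ast$ whose branch accepts with probability at least $\frac{1}{2}$. Averaging over the $2^l$ equally-weighted branches gives acceptance probability at least $2^{-(l+1)}$, and the elementary estimate $2^l < \frac{2\sqrt{6p}}{c-s}$ coming from the ceiling in the definition of $l$ upgrades this to the claimed bound $\frac{c-s}{4\sqrt{6p}}$. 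For soundness, when $x$ is in $A_\no$ the soundness clause of Lemma~\ref{Lemma: soundness error reduction with a guess} bounds $p_k \leq 2^{-p}$ for \emph{every} guess $k$ and \emph{every} witness; averaging a quantity that is uniformly at most $2^{-p}$ keeps it at most $2^{-p}$, so the new soundness holds against all witnesses. The hypothesis $\frac{c-s}{4\sqrt{6p}} > 2^{-p}$ is precisely what certifies that the new completeness exceeds the new soundness, so the resulting system is a legitimate QMA-type computation.

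For the space bookkeeping, the added workspace is $\delta = l + \delta'$, where $\delta'$ is the extra space of the \textsc{Soundness Error Reduction with Guess~$k$} procedure. The latter composes the \textsc{Additive Adjustment}, \textsc{Reflection}, and \textsc{AND-Type Repetition} procedures, each costing $\op{O}\bigl(\log\frac{p}{c-s}\bigr)$ qubits, so $\delta'$ is logarithmic in $\frac{p}{c-s}$; and $l$ is manifestly logarithmic in $\frac{p}{c-s}$ as well. Hence $\delta$ is logarithmic in $\frac{p}{c-s}$, as required. Since $\regK$ is charged to the verifier's private space, the witness length $l_\regM$ is unchanged.

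I expect the main obstacle to be the careful justification that the coherent random guess behaves like a genuine convex combination of the per-guess acceptance probabilities. Because the model forbids the intermediate measurement that would ``collapse'' the choice of $k$, one must verify that leaving $\regK$ entangled produces no cross-branch interference in the final accept/reject statistics; the orthogonality of the records $\ket{k}_\regK$ is what makes this work, but it has to be stated precisely, and one must confirm that the single common witness $\ket{\psi}_\regM$ drives every branch simultaneously so that the good branch's guarantee survives the averaging. The remaining calculations — the estimate on $2^l$ and the inequality certifying that the new completeness exceeds the new soundness — are routine.
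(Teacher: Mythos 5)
Your proposal is correct and takes essentially the same route as the paper's proof: invoke Lemma~\ref{Lemma: soundness error reduction with a guess} for the per-guess guarantees, realize the uniform choice of $k$ unitarily so that the acceptance probability on any witness is exactly the average $2^{-l}\sum_k p_k$, bound the completeness by ${2^{-l}\cdot\frac{1}{2} > \frac{c-s}{4\sqrt{6p}}}$ via ${2^l < \frac{2\sqrt{6p}}{c-s}}$, note that the soundness average stays below $2^{-p}$, and account for $O\bigl(\log\frac{p}{c-s}\bigr)$ extra qubits. The only (immaterial) difference is how the randomness is implemented: you keep a Hadamard-prepared control register~$\regK$ unmeasured and never uncomputed, whereas the paper prepares $l$ EPR pairs and uses one half of each as the guess; both are unitary, cost $O(l)$ qubits, and yield the same interference-free average.
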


\begin{proof}
  Let ${A = (A_\yes, A_\no)}$ be a problem in ${\unitaryQMASPACE[l_\regV, l_\regM](c, s)}$,
  and let ${V = \{ V_x \}_{x \in \Sigma^\ast}}$
  be the ${(l_\regV, l_\regM)}$-space-bounded quantum verifier witnessing this membership.
  Fix a function~$\function{p}{\Nonnegative}{\Natural}$ satisfying ${\frac{c - s}{4 \sqrt{6p}} > 2^{-p}}$
  and an input~$x$ in $\Sigma^\ast$.
  The lemma is proved by considering
  the \textsc{Soundness Error Reduction with Random Guess} associated with ${(V_x, p)}$.

  Lemma~\ref{Lemma: soundness error reduction with a guess} ensures that,
  if $x$ is in $A_\yes$,
  the \textsc{Soundness Error Reduction with Guess~$k$} associated with ${(V_x, p)}$
  results in acceptance with probability at least~$\frac{1}{2}$
  for some $k$ in ${\{1, \dotsc, 2^{\op{l}(\abs{x})}\}}$,
  while if $x$ is in $A_\no$,
  it results in acceptance with probability at most ${2^{- \op{p}(\abs{x})}}$
  for any $k$ in ${\{1, \dotsc, 2^{\op{l}(\abs{x})}\}}$.
  Hence, obviously from its construction,
  the \textsc{Soundness Error Reduction with Random Guess} associated with ${(V_x, p)}$
  results in acceptance with probability at least
  \[
    2^{- \op{l}(\abs{x})} \cdot \frac{1}{2}
    >
    \frac{\op{c}(\abs{x}) - \op{s}(\abs{x})}{2 \sqrt{6 \op{p}(\abs{x})}}
    \cdot
    \frac{1}{2}
    =
    \frac{\op{c}(\abs{x}) - \op{s}(\abs{x})}{4 \sqrt{6 \op{p}(\abs{x})}}
  \]
  if $x$ is in $A_\yes$,
  and at most ${2^{- \op{p}(\abs{x})}}$
  if $x$ is in $A_\no$.
  This shows the completeness and soundness.

  From the structures of
  the \textsc{Additive Adjustment Procedure},
  \textsc{Reflection Procedure},
  and the \textsc{AND-Type Repetition Procedure},
  the \textsc{Soundness Error Reduction with Guess~$k$} associated with ${(V_x, p)}$
  uses extra workspace
  (relative to $V_x$)
  of ${\op{\delta}_1 (\abs{x})}$~qubits
  for the function~$\function{\delta_1}{\Nonnegative}{\Natural}$ defined by
  ${\delta_1 = l + \ceil{\log (2N + 1)} + 1}$,
  where
  ${l = \bigceil{\frac{1}{2} \log \frac{6p}{(c - s)^2}}}$
  and
  ${N = \bigceil{\frac{p}{2 (c - s)^2}}}$.
  Hence, $\delta_1$ is clearly logarithmic with respect to $\frac{p}{c-s}$.
  Therefore,
  the \textsc{Soundness Error-Reduction with Random Guess} associated with ${(V_x, p)}$
  uses extra workspace
  (relative to $V_x$)
  of logarithmically many qubits with respect to $\frac{p}{c-s}$ also
  (which is determined by a function~${\delta = \delta_1 + 2l}$,
  as the random guess may be implemented by preparing a sufficiently many number of EPR pairs
  and using each half of them),
  as desired.
\end{proof}


\paragraph{Space-efficient amplification based on a random guess}

Again fix arbitrarily a function~$\function{p}{\Nonnegative}{\Natural}$
and functions~$\function{c,s}{\Nonnegative}{[0,1]}$ satisfying ${c > s}$.
Let $\function{q, N}{\Nonnegative}{\Natural}$ be functions defined by
\[
  q
  =
  \biggceil{2 \biggl( p + \log \frac{6p}{c - s} + 1 \biggr)},
  \quad
  N
  =
  \biggceil{\frac{2 \sqrt{6q}}{c - s} \cdot p}.
\]

Fix an input~$x$.
Given the pair~${(V_x, p)}$,
consider the \textsc{Soundness Error-Reduction with Random Guess} associated with ${(V_x, q)}$.
Let $V'_x$ be the unitary transformation induced by it,
let $\Pi'_\init$ be the projection onto the subspace spanned by the legal initial states of it,
and let $\Pi'_\acc$ be the projection onto the subspace spanned by the accepting states of it.
From the triplet~${\bigl( V'_x, \Pi'_\init, \Pi'_\acc \bigr)}$
and a positive integer~${\op{N}(\abs{x})}$,
one constructs the \textsc{OR-Type Repetition Procedure}
associated with ${\bigl( V'_x, \Pi'_\init, \Pi'_\acc, \op{N}(\abs{x}) \bigr)}$,
and performs it.
The resulting procedure is called the \textsc{Space-Efficient Amplification Based on Random Guess}
and is summarized in Figure~\ref{Figure: Space-Efficient Amplification Based on Random Guess}.

\begin{figure}[t!]
  \begin{algorithm*}
        {
          \textsc{Space-Efficient Amplification Based on Random Guess}
          associated with $\boldsymbol{(V_x, p)}$
        }
    \begin{flushright}
      \begin{minipage}{0.9727\textwidth}
        Define functions~$q$~and~$N$ by
        ${q = \bigceil{2 \bigl( p + \log \frac{6p}{c - s} + 1 \bigr)}}$
        and
        ${N = \Bigceil{\frac{2 \sqrt{6q}}{c - s} \cdot p}}$.
        Consider the \textsc{Soundness Error Reduction with Random Guess} associated with ${(V_x, q)}$.
        Let $V'_x$ be the unitary transformation induced by it,
        let $\Pi'_\init$ be the projection onto the subspace spanned by the legal initial states of it,
        and let $\Pi'_\acc$ be the projection onto the subspace spanned by the accepting states of it.\\
        Perform the \textsc{OR-Type Repetition Procedure}
        associated with ${\bigl( V'_x, \Pi'_\init, \Pi'_\acc, \op{N}(\abs{x}) \bigr)}$.
      \end{minipage}
    \end{flushright}
    \vspace{0.75\baselineskip}
  \end{algorithm*}
  \caption{
    The \textsc{Space-Efficient Amplification Based on Random Guess}.
  }
  \label{Figure: Space-Efficient Amplification Based on Random Guess}
\end{figure}

Now Theorem~\ref{Theorem: main theorem},
the main theorem of this paper,
is ready to be proved
by using the \textsc{Space-Efficient Amplification Based on Random Guess}
combined with the properties of the \textsc{Soundness Error Reduction with Random Guess}
used for proving Lemma~\ref{Lemma: soundness error reduction with a random guess}.

\begin{proof}
      [%
        Proof of Theorem~\ref{Theorem: main theorem}
        (via the exactly implementable construction based on a random guess)
      ]
  Let ${A = (A_\yes, A_\no)}$ be a problem in ${\unitaryQMASPACE[l_\regV, l_\regM](c, s)}$,
  and let ${V = \{ V_x \}_{x \in \Sigma^\ast}}$
  be the ${(l_\regV, l_\regM)}$-space-bounded quantum verifier witnessing this membership.
  Fix a function~$\function{p}{\Nonnegative}{\Natural}$ and an input~$x$ in $\Sigma^\ast$.
  Let 
  The theorem is proved by considering
  the \textsc{Space-Efficient Amplification Based on Random Guess} associated with ${(V_x, p)}$.

  Let $\function{q}{\Nonnegative}{\Natural}$ be the function defined by
  ${q = \bigceil{2 \bigl( p + \log \frac{6p}{c - s} + 1 \bigr)}}$.
  First consider the \textsc{Soundness Error Reduction with Random Guess}
  associated with ${(V_x, q)}$.
  Let $V'_x$ be the unitary transformation induced by it,
  let $\Pi'_\init$ be the projection onto the subspace spanned by the legal initial states of it,
  and let $\Pi'_\acc$ be the projection onto the subspace spanned by the accepting states of it.

  As the function~$q$ satisfies that ${q > 2 \log \frac{4 \sqrt{3}}{c - s}}$,
  and thus, that ${\frac{c - s}{4 \sqrt{6q}} > 2^{-q}}$,
  Lemma~\ref{Lemma: soundness error reduction with a random guess} and its proof ensure that
  $A$ is in
  ${\unitaryQMASPACE[l_\regV + \delta_1, l_\regM] \bigl( \frac{c - s}{4 \sqrt{6q}}, 2^{-q} \bigr)}$
  for some function~$\function{\delta_1}{\Nonnegative}{\Natural}$
  that is logarithmic with respect to ${\frac{q}{c - s}}$
  (and thus, with respect to ${\frac{p}{c - s}}$),
  and this inclusion is certified by
  the \textsc{Soundness Error Reduction with Random Guess} associated with ${(V_x, q)}$.
  This in particular implies that
  the Hermitian operator~%
  ${M'_x = \Pi'_\init \conjugate{(V'_x)} \Pi'_\acc V'_x \Pi'_\init}$
  has an eigenvalue at least~%
  ${\frac{\op{c}(\abs{x}) - \op{s}(\abs{x})}{4 \sqrt{6 \op{q}(\abs{x})}}}$
  if $x$ is in $A_\yes$,
  while all the eigenvalues of $M'_x$ are at most~$2^{- \op{q}(\abs{x})}$
  if $x$ is in $A_\no$.

  Now consider the \textsc{OR-Type Repetition Procedure}
  associated with ${\bigl( V'_x, \Pi'_\init, \Pi'_\acc, \op{N}(\abs{x}) \bigr)}$,
  which is exactly what
  the \textsc{Space-Efficient Error Reduction Based on Random Guess} associated with ${(V_x, p)}$
  performs.
  By Proposition~\ref{Proposition: properties of OR-Type Repetition Procedure},
  the \textsc{OR-Type Repetition Procedure}
  associated with ${\bigl( V'_x, \Pi'_\init, \Pi'_\acc, \op{N}(\abs{x}) \bigr)}$
  results in acceptance with probability at least
  \[
    \begin{split}
      \hspace{5mm}
      &
      \hspace{-5mm}
      1
      -
      \biggl(
        1 - \frac{\op{c}(\abs{x}) - \op{s}(\abs{x})}{4 \sqrt{6 \op{q}(\abs{x})}}
      \biggr)^{2 \op{N}(\abs{x})}
      \\
      &
      \geq
      1
      -
      \biggl(
        1 - \frac{\op{c}(\abs{x}) - \op{s}(\abs{x})}{4 \sqrt{6 \op{q}(\abs{x})}}
      \biggr)^{\frac{4 \sqrt{6 \op{q}(\abs{x})}}{\op{c}(\abs{x}) - \op{s}(\abs{x})} \cdot \op{p}(\abs{x})}
      >
      1 - e^{- \op{p}(\abs{x})}
      >
      1 - 2^{- \op{p}(\abs{x})}
    \end{split}
  \]
  if $x$ is in $A_\yes$,
  and at most
  \[
    \begin{split}
      1 - \Bigl( 1 - 2^{- \op{q}(\abs{x})} \Bigr)^{2 \op{N}(\abs{x})}
      &
      <
      2^{- \op{q}(\abs{x}) + 1} \cdot \op{N}(\abs{x})
      \\
      &
      <
      2^{- \op{p}(\abs{x}) - \log \frac{6 \op{p}(\abs{x})}{\op{c}(\abs{x}) - \op{s}(\abs{x})}}
      \cdot
      2^{- \frac{1}{2} \op{q}(\abs{x})}
      \cdot
      \Biggl(
        \frac{2 \sqrt{6 \op{q}(\abs{x})}}{\op{c}(\abs{x}) - \op{s}(\abs{x})} \cdot \op{p}(\abs{x}) + 1
      \Biggr)
      \\
      &
      <
      2^{- \op{p}(\abs{x})}
      \cdot
      \frac{\op{c}(\abs{x}) - \op{s}(\abs{x})}{6 \op{p}(\abs{x})}
      \cdot
      \frac{1}{\sqrt{\op{q}(\abs{x})}}
      \cdot
      \biggl(
        \frac{6 \sqrt{\op{q}(\abs{x})}}{\op{c}(\abs{x}) - \op{s}(\abs{x})} \cdot \op{p}(\abs{x})
      \biggr)
      \\
      &
      <
    2^{- \op{p}(\abs{x})}
    \end{split}
  \]
  if $x$ is in $A_\no$,
  where the third inequality uses the fact that ${2 \sqrt{6} + 1  < 6}$,
  and the completeness and soundness follows.

  The \textsc{OR-Type Repetition Procedure}
  associated with
  ${\bigl( V'_x, \Pi'_\init, \Pi'_\acc, \op{N}(\abs{x}) \bigr)}$
  uses extra workspace
  (relative to $V'_x$)
  of ${\op{\delta}_2 (\abs{x})}$~qubits
  for the function~$\function{\delta_2}{\Nonnegative}{\Natural}$ defined by
  ${\delta_2 = \ceil{\log (2N + 1)}}$.
  As ${N = \bigceil{\frac{2 \sqrt{6q}}{c - s}}}$
  and ${q = \bigceil{2 \bigl( p + \log \frac{6p}{c - s} + 1 \bigr)}}$,
  $\delta_2$ is clearly logarithmic with respect to $\frac{p}{c-s}$.
  Hence,
  the \textsc{Space-Efficient Error Reduction Based on Random Guess} associated with ${(V_x, p)}$
  uses extra workspace
  (relative to $V_x$)
  of logarithmically many qubits with respect to $\frac{p}{c-s}$ also
  (which is determined by a function~${\delta = \delta_1 + \delta_2}$),
  as desired.
\end{proof}


\subsection*{Acknowledgements}

BF and CYL are supported by the Department of Defense.
HK and HN are supported by
the Grant-in-Aid for Scientific Research~(A)~No.~24240001
of the Japan Society for the Promotion of Science.
TM is supported by
the Program to Disseminate Tenure Tracking System
of the Ministry of Education, Culture, Sports, Science and Technology in Japan,
the Grant-in-Aid for Scientific Research on Innovative Areas~No.~15H00850
of the Ministry of Education, Culture, Sports, Science and Technology in Japan,
and
the Grant-in-Aid for Young Scientists~(B)~No.~26730003
of the Japan Society for the Promotion of Science.
HN is also supported by
the Grant-in-Aid for Scientific Research on Innovative Areas~No.~24106009
of the Ministry of Education, Culture, Sports, Science and Technology in Japan,
which HK is also grateful to.
HN further acknowledges support from
the Grant-in-Aid for Scientific Research~(C)~No.~25330012
of the Japan Society for the Promotion of Science.



\providecommand{\noopsort}[1]{}

\end{document}